%% LyX 2.3.0 created this file.  For more info, see http://www.lyx.org/.
%% Do not edit unless you really know what you are doing.
\documentclass[american]{article}
\usepackage[T1]{fontenc}
\usepackage[latin9]{inputenc}
\usepackage{geometry}
\geometry{verbose,tmargin=2cm,bmargin=2cm,lmargin=2cm,rmargin=2cm}
\usepackage{array}
\usepackage{mathtools}
\usepackage{amsmath}
\usepackage{amsthm}
\usepackage{amssymb}
\usepackage{comment}
\usepackage{graphicx}
\usepackage{setspace}
\usepackage[authoryear]{natbib}
\makeatletter

%%%%%%%%%%%%%%%%%%%%%%%%%%%%%% LyX specific LaTeX commands.

%% Because html converters don't know tabularnewline
\providecommand{\tabularnewline}{\\}

%%%%%%%%%%%%%%%%%%%%%%%%%%%%%% Textclass specific LaTeX commands.
\theoremstyle{plain}
\newtheorem{thm}{\protect\theoremname}
\theoremstyle{plain}
\newtheorem{prop}[thm]{\protect\propositionname}
\theoremstyle{plain}
\newtheorem{lem}[thm]{\protect\lemmaname}

%%%%%%%%%%%%%%%%%%%%%%%%%%%%%% User specified LaTeX commands.
\usepackage{tikz}
\usepackage{libertine}
\usepackage{xcolor}
\usepackage[final]{changes}

\setcitestyle{round,comma}
\usepackage{rotating}
\usepackage{subfig}

\definecolor{green}{RGB}{0,150,0}

\@ifundefined{showcaptionsetup}{}
{\PassOptionsToPackage{caption=false}{subfig}}
\usepackage{subfig}
\makeatother
\usepackage{babel}
\providecommand{\lemmaname}{Lemma}
\providecommand{\propositionname}{Proposition}
\providecommand{\theoremname}{Theorem}

\begin{document}
\global\long\def\N{\mathbb{N}}

\global\long\def\R{\mathbb{R}}

%\global\long\def\P#1{\mathbb{P}\left[#1\right]}
\global\long\def\P#1{P\left(#1\right)}

\global\long\def\E#1{\mathbb{E}\left[#1\right]}

\global\long\def\hyp#1#2{\,_{#1}F_{#2}}

\global\long\def\var#1{\text{var}\left[#1\right]}

\global\long\def\argmax{\operatornamewithlimits{argmax}}

\global\long\def\simoperator{\operatornamewithlimits{\sim}}

\global\long\def\ind#1{\boldsymbol{I}\left(#1\right)}

\global\long\def\simlim#1{\simoperator\limits _{#1}}

\global\long\def\pochhammer#1#2{#1^{\overline{#2}}}

\title{Exact steady-state distributions of multispecies
  birth-death-immigration processes: effects of mutations and
  carrying capacity on diversity}

\author{Renaud Dessalles$^{1,2}$, Maria D'Orsogna$^{1,3}$, Tom Chou$^{1,4}$ \\
$^1$Department of Biomathematics, University of California, Los Angeles, CA 90095-1766 \\
$^2$MaIAGE, INRA, Universit\'{e} Paris-Saclay, 78350 Jouy-en-Josas, France \\
$^3$Department of Mathematics, California State University, Northridge, CA 91330 \\
$^4$Department of Mathematics,  University of California, Los Angeles, CA 90095-1555}

\maketitle
\begin{abstract}
Stochastic models that incorporate birth, death and immigration (also
called birth-death and innovation models) are ubiquitous and
applicable to many problems such as quantifying species sizes in
ecological populations, describing gene family sizes, modeling
lymphocyte evolution in the body. Many of these applications involve
the immigration of new species into the system. We consider the full
high-dimensional stochastic process associated with multispecies
birth-death-immigration and present a number of exact and asymptotic
results at steady-state.  We further include random mutations or
interactions through a carrying capacity and find the statistics of
the total number of individuals, the total number of species, the
species size distribution, and various diversity indices. Our results
include a rigorous analysis of the behavior of these systems in the
fast immigration limit which shows that of the different diversity
indices, the species richness is best able to distinguish different
types of birth-death-immigration models. We also find that detailed
balance is preserved in the simple noninteracting
birth-death-immigration model and the birth-death-immigration model
with carrying capacity implemented through death. Surprisingly, when
carrying capacity is implemented through the birth rate, detailed
balance is violated.
\end{abstract}

\section{Introduction}

In recent years, stochastic Birth-Death-Immigration (BDI) models have
emerged as effective descriptions of the evolution of multi-species
populations. BDI models assume that each individual belongs to a given
``species'' and undergoes a classical birth-death process; offspring
populate the same species as their parent, while new species are
introduced via immigration\deleted{, or, in some
  variants,} \added{and/or} mutation. The body of work on BDI models in
the mathematical, ecological and biological literature is rich, and
\added{many} results have been \added{independently}
discovered \deleted{independently} in the context of different
disciplines.  Arguably, the first BDI model can be traced to
\citet{Karlin1967} who described the evolution of different alleles in
a genetic population.  Later, similar tools were applied to ecology in
the context of the ``Neutral Theory of Biodiversity''
\citep{Hubbell2001,MacArthur2016,Lambert2011,Volkov2003}, where BDI
models were used to study the abundance distribution of island
populations that undergo continuous immigration from the mainland.
BDI models have also been used under the name birth, death and
innovation models by \citet{Karev2002} to describe gene domain family
size in genomes. Here, each domain is part of a family, and can be
duplicated or deleted; new domains of new families can be added to the
genome via horizontal gene transfer. \citet{Desponds2016} and
\citet{Lythe2016} have instead employed BDI formalisms to study
lymphocyte populations in an organism. T-cells expressing the same
surface receptor are assumed to belong to the same ``clone'' (the
species).  Each T-cell can divide, generating
\added{receptor-}identical daughter cells, and die through
apoptosis. In this context, immigration is represented by the export
of new na\"{i}ve T-cells from the thymus. Due to the
\added{large number of} theoretically possible T-cell
clonotypes that can be generated, with estimates ranging from
$10^{15}-10^{20}$ \citep{PRICE}, one can assume that each new export
almost surely generates a new clone rather than contribute to an
existing clonotype. Another application of BDI models
\added{arises in} the study of microbiota populations
in the gut of metazoa \citep{Sala2016}. Finally, counting ``clones''
\added{is also used} in stochastic models of nucleation, where a high-
or infinite-dimensional distribution function can be used to describe
states comprised of certain numbers of clusters (the ``clones'') of
specific size \citep{HETERO2011,JCP2012,HETERO2013}. In the rest of
this paper, we will use both ``individuals'' (or ``particles'') and
``species'' to describe the two types of quantities (individuals of a
given species \added{and} the number of species of a
given size) in all of the above-mentioned examples.

\deleted{The goal of this paper is to introduce a unifying
  mathematical framework to describe general BDI models; we will study
  steady states and related measures of diversity and
  entropy. Particularizations and constraints can be added at will,
  but we identify three major modeling representations that can serve
  as building blocks for more complex systems.}

Note that we will focus exclusively on ``neutral'' BDI representations
in the sense of the Neutral Theory of Biodiversity
\citep{Bell2001,Hubbell2001}, that is all individuals within a
population are subject to the same birth and death rates so that there
is no fitness difference in the population. Our first model is the
simple BDI \added{(sBDI)} model where each individual evolves
independently of all others and where the only possible processes are
birth, death and immigration. The second model \added{(BDIM)} further
includes mutations, whereby the dynamics of each individual is still
uncoupled from that of others, but where new species can arise via
mutations. The last model \added{(BDICC)} includes a carrying capacity
  on the death rate to represent the sharing of limited resources.  In
  this case, the dynamics of each individual is coupled to that of
  others, and the overall mathematical analysis is more complex.
  Thus, for simplicity, when including a carrying capacity term, we
  exclude mutations. The three major BDI processes we will analyze are
  depicted in Figure~\ref{Fig1}.

\added{Since measures of diversity in a population are also of
  significant interest in ecology
  \citep{PALMER1990,COLWELL1994,CHAO2014}, we will also analyze
  species diversity through three commonly used indexes
  (\citet{Morris2014}): the species richness (the total number of
  species in the system), Shannon's entropy, and Simpson's diversity
  index, and we will contrast and compare these quantities among the
  different models.}

\deleted{Some aspects or versions of these models have already been
  studied, such as in Karlin and McGregor (1967), Travar\'{e} (1989),
  Lambert (2011), Karev et al. (2002) for simple
  birth-death-immigration, or in Lambert (2011) for the
  birth-death- immigration process with mutation. However, a
  systematic analysis of steady states, a thorough comparison of these
  models in terms of species diversity, and an analysis of their
  behavior in limiting cases are still lacking. In this paper, we
  provide an accessible, yet rigorous, theoretical analysis for each
  of the three types of BDI models outlined above. In particular, we
  will determine the conditions for the existence for an equilibrium
  distribution, and we will provide analytical expressions for the
  distribution of the total number of individuals, the total number of
  species, as well as the species size distributions.}

\added{The goal of this paper is to provide an accessible, yet
  rigorous, theoretical analysis of each of the three types of BDI
  models outlined above.  In particular, we determine the conditions
  for the existence of an equilibrium distribution and derive
  analytical expressions for the steady state distributions of the
  total number of individuals, the numbers if clones of each size, the
  total number of species in the system, as well as the expected
  species diversities predicted by model. Some results presented here
  can be recovered from previous work.  More precisely, time-dependent
  versions of our sBDI model can be found in
  \citet{Karlin1967,Travare1989,Lambert2011} and one particular
  version the BDIM model (with somatic mutations) is described in
  \citep{Lambert2011}. In each case, it is possible to recover the
  steady state distributions of the total population and the total
  number of species by evaluating the infinite-time limits of their
  results.}

\begin{figure}
\begin{center}
\includegraphics[width=6.4in]{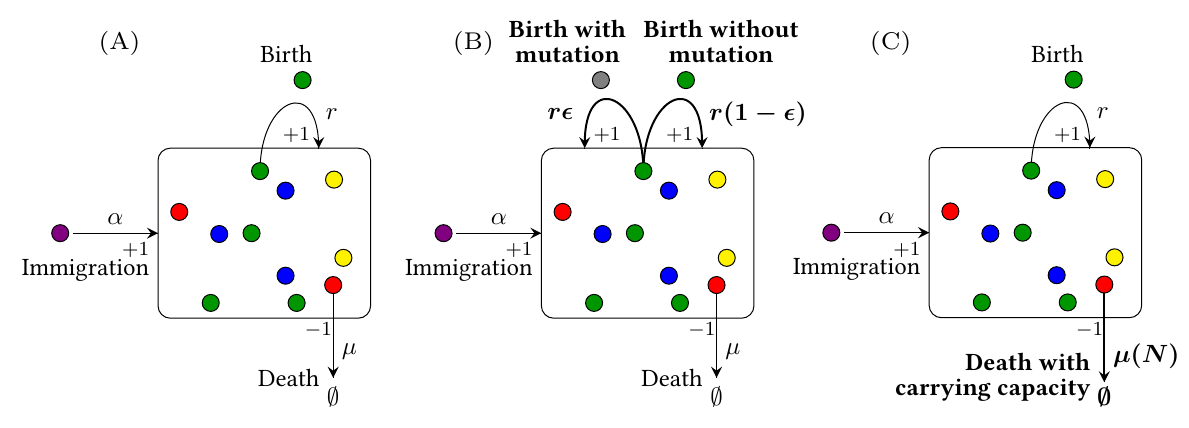}
\caption{\label{Fig1}Schematic of various Birth-Death-Immigration
  processes. Three distinct variants are considered, including (a)
  Simple birth-death-immigration (sBDI) without mutation, where
  $r,\mu$ are constants, (b) Birth-death-immigration with mutation
  (BDIM) where $r,\mu$ are constants and mutation rate $\epsilon > 0$,
  and (c) Birth-death-immigration without mutation but with carrying
  capacity (BDICC) where $r$ is constant and $\mu = \mu(N)$ is an
  increasing function of total population $N$. We will also analyze a
  variant of the BDICC model, the BDICC-bis model, where $\mu$ is
  constant but the growth $r=r(N)$ is assumed to be a decreasing
  function of the total population $N$.}
\end{center}
\end{figure}

\added{Our work provides a number of additional results in the steady
  state limit: i) the theoretical analysis of an interacting BDI model
  with carrying capacity is new, to the best of our knowledge; ii) we
  provide ``full probability distributions'' that completely describe
  the properties of each model and that can be used to derive general
  quantities of interest, (in particular, the moments of the species
  counts); iii) we analytically quantify diversity indices predicted
  by each model; iv) we provide systematic quantitative comparisons
  between the models and v) we derive simpler limiting forms of our
  results in the large immigration rate limit. A summary of all our
  results can be found in Table~\ref{tab:recap} in the general case
  and in Table~\ref{tab:recap-1} for the fast immigration limit.  The
  interested reader will find more details of the methods and the
  proofs of the derivation of these expressions in the Mathematical Appendix.}

\section{\label{sec:basic-assumptions}Basic definitions}

In this section we outline some general assumptions and introduce the
mathematical notation to describe our three BDI models. First, we
assume new individuals immigrate to the system following a Poisson
point process of rate $\alpha$, i.e. the time interval between
successive immigration events is given by a random variable
exponentially distributed with rate $\alpha$. Each arriving individual
will define a new species not yet present in the system. The random
variable representing the total number of individuals in the system
will be denoted by $N$ and the total number of species by $C$. We
consider both ``particle-count'' and ``species-count'' representations
($n_{i}$ and $c_{k}$ respectively) of the system: in the particle
count representation, $n_{i}$ (with $1\leq i\leq C$) represents the
number of individuals in the $i$-th species; in the species-count
representation, $c_{k}$ (with $k\geq1$) represents the number of
species having exactly $k$ individuals. In principle, there can be
species with infinite population and hence both $n_{i}$ and the index
$k$ are unbounded. The sequence of all numbers $(n_{i})_{i\leq C}$,
the infinite vector $\vec{c}=(c_{k})_{k\geq1}$, as well as $N$ and
$C$, are related by the following expressions:

\begin{equation}
%N=\sum_{i=1}^{C}n_{i}=\sum_{k\geq1}k\,c_{k},\qquad C=\sum_{k\geq1}c_{k},\qquad
c_{k}=\sum_{i=1}^{C}\ind{n_{i},k}\quad\text{for }k\geq1, 
\qquad C=\sum_{k\geq1}c_{k},\qquad N=\sum_{i=1}^{C}n_{i}=\sum_{k\geq1}k\,c_{k},
\label{eq:relations NCc}
\end{equation}
where $\boldsymbol{I}$ is the indicator function such that
$\ind{x,y}=1$ if $x=y$ and $0$ otherwise. Effectively, the first
relation in Equation~\ref{eq:relations NCc} will count the number of
species that carry $k$ individuals. The second relation describes the
total number of clones $C$ that are present in the system, while $N$
is the total number of particles in the system. For many applications
$C$ and $N$ are large. For example, in humans, the richness of naive
T-cells $C\sim 10^{6}-10^{10}$ \citep{ASQUITH,Lythe2016}.

The particle-count and species-count representations are related since
a given sequence $(n_{i})_{i\leq C}$ corresponds to a unique vector
$\vec{c}$ (determined via the first relation in
Equation~\ref{eq:relations NCc}). However, given a vector $\vec{c}$
one can determine the sequence $(n_{i})_{i\leq C}$ only up to
permutations of the species identities. More information is
intrinsically contained in $(n_{i})_{i\leq C}$ than in $\vec{c}$.

As mentioned, we will also be interested in the statistics of the
population diversity, as described by \textit{e.g.}, Shannon's entropy
$H$ and Simpson's diversity index $S$.  These quantities can be
defined using either the particle-count or the species-count
representations:

\begin{equation}
H=-\sum_{i=1}^{C}\frac{n_{i}}{N}\log\left[\frac{n_{i}}{N}\right]=
-\sum_{k\geq1}c_{k}\frac{k}{N}\log\left[\frac{k}{N}\right]\quad\text{and}\quad
S=1-\sum_{i=1}^{C}\left(\frac{n_{i}}{N}\right)^{2}=1-
\sum_{k\geq1}c_{k}\left(\frac{k}{N}\right)^{2}.
\label{eq:def diversity indices}
\end{equation}
While many variants of Simpson's diversity exist, we have chosen the
``Gini-Simpson'' index \citep{OIKOS} with replacement, also known as
the probability of interspecific encounter \cite{PIE1971},
Gini-Martin, or Blau indices \citep{BLAU}, so that more diverse
populations have a higher value of $S$. Our choice also allows for
analytic derivations not available for other
diversity indices.

We shall analytically derive, whenever possible, probability
distributions over all the quantities defined above. Our results will
be limited to distributions in steady state. Henceforth, we will
define probabilities of a quantity $X$ having a value $x$ as $P(X=x)$,
but we will interchangeably also use the imprecise notation $P(X)$
when no ambiguity exists.

After determining results in steady state for each of the three
\added{neutral} BDI models (\added{sBDI, BDIM, and BDICC}) we will
also analyze the asymptotically large immigration limit. \added{This
  regime may relevant to applications where the per-individual
  immigration rate is higher than their birth and death rates, such as
  in the case of lymphocyte production and maintenance.}  In
particular, we will assume the immigration rate $\alpha$ that defines
the Poisson point process described above to be proportional to a
scaling factor $\Omega$ (\textit{i.e.}, the immigration rate $\alpha
\equiv \widetilde{\alpha}\Omega$ with $\widetilde{\alpha}$ being a
proportionality constant) and then study the $\Omega\to\infty$
limit. We will show that as $\Omega$ increases, the above quantities
also diverge, however, their scaled values
\[
N/\Omega,\quad C/\Omega\quad\text{and}\quad\left(c_{k}/\Omega\right)_{k\geq1}
\]
will be shown to converge in distribution. For example, the convergence in
distribution of $N/\Omega$ to a given constant limit $\ell$ will be
denoted $N/\Omega\xrightarrow[\Omega\rightarrow\infty]{{\cal D}}\ell$
and, when $\ell$ is a constant, the convergence can be characterized by
\begin{equation}
\text{for any }\delta >0,\quad\lim_{\Omega\rightarrow\infty}
P(\left|N/\Omega-\ell\right|<\delta)=1
\label{eq:def convergence distrib}
\end{equation}
\added{(for a more general definition of the convergence in distribution 
where $\ell$ is an arbitrary random variable, see \citet[Chapter 5]{Billingsley2012}).}
This type of convergence implies that
\[
\E{N/\Omega}\xrightarrow[\Omega\rightarrow\infty]{}\ell
\quad\text{and}\quad\var{N/\Omega}\xrightarrow[\Omega\rightarrow\infty]{}0.
\]

\section{\label{sec:mod1}Simple Birth-Death-Immigration Model (sBDI)}

We start with the neutral and independent simple
Birth-Death-Immigration model (sBDI) where individuals are assumed to
be identical, subject to the same birth, death and immigration rates
(neutral), and where the dynamics of each individual is independent of
that of others (independent).  Mutations are not included. One of the
most immediate applications of this sBDI model is within the study of
island biodiversity \citep{Volkov2003,Hubbell2001} where individuals
follow classical birth and death processes, and new species are
introduced to the island via immigration. The ensuing species
abundances are then determined. The main ingredients of the sBDI model
are depicted in Figure~\ref{Fig1}(a) and include (i) immigration, in
which an individual of a new species is added to the system at rate
$\alpha$, (ii) birth, in which each individual gives birth to an
offspring of the same species at rate $r$, and (iii) death, where each
individual dies and is removed from the system at rate $\mu$.

\subsection{\label{sec:BDIanalyze}Derivation of steady state statistics}

We now determine the steady-state probability distribution $P(N)$
of the total number of individuals $N$ in the simple BDI model and
the full probability distribution $P(\vec{c})\equiv P(c_{1},\dots,c_{k},\dots)$
at steady-state. This quantity will lead us to the derivation of the
marginal steady-state probability distributions $P(c_{k})$ and $P(n_{i})$
in the individuals and species count representations, respectively.
From $P(\vec{c})$ we will also be able to obtain the probability
distribution $P(C)$ of the total number of species $C$ at steady-state.
Finally, Shannon's entropy and Simpson's diversity index will be calculated.

The total number of particles $N$ is a random variable that follows a
birth and death process with non-constant rates $\alpha+rN$ and $\mu N$,
respectively. The properties of this birth and death process are known
(see for instance \citet{Bansaye2015a}); in particular for a finite
steady state to exist the condition $r<\mu$ must hold. This constraint
implies that death dominates reproduction so that the number
of individuals $N$ does not diverge at long times. At steady state,
and for $r<\mu$, detail balance leads to the following condition

\begin{equation}
\mu N P(N) = (\alpha+(N-1)r)P(N-1).\label{eq:mod1 balance_N}
\end{equation}

\noindent This equation can be solved iteratively to yield

\begin{equation}
P(N) =
\begin{cases} \displaystyle 
\left(1-{r\over \mu}\right)^{\alpha/r},\quad N=0\\
 \displaystyle  P(0)\left(\frac{r}{\mu}\right)^{N} 
\frac{1}{N!}\prod_{k=0}^{N-1}\left(\frac{\alpha}{r}+k\right),\quad N\geq 1
\end{cases}
\label{eq:mod1 PN}
\end{equation}
%
%P(0) = \left(1-{r\over \mu}\right)^{\alpha/r}, \quad 
%P(N\geq 1)= P(0)\left(\frac{r}{\mu}\right)^{N} 
%\frac{1}{N!}\prod_{k=0}^{N-1}\left(\frac{\alpha}{r}+k\right),
%\label{eq:mod1 PN}
%\left(1-\frac{r}{\mu}\right)^{\alpha/r}
%\end{equation}
%
which we recognize as a negative binomial distribution with 
parameters $\alpha/r$ and $r/\mu$, and mean and variance

\begin{equation}
\E{N} = {\alpha/\mu \over 1 - r/\mu}, \quad \var{N} = \frac{\alpha/\mu}{(1-r/\mu)^{2}}.
%{(r/\mu)\over 1-{r\over \mu}}.
\label{eq:meanN}
\end{equation}

Equation~(\ref{eq:mod1 PN}) does not resolve how the subpopulations
are distributed within the different species. To determine this
distribution we must derive the distribution $P(\vec{c})$ over the
species-count vector $\vec{c}=(c_{1},\ldots,c_{k},\ldots)$ by
explicitly writing down all possible BDI events and their relative
rates:

\begin{align*}
 &  & (c_{1},c_{2},\ldots) & \xrightarrow{\alpha}(c_{1}+1,c_{2},\ldots) &  & 
\text{Immigration}\\
\text{for }k\geq1\quad &  & (c_{1},\ldots,c_{k},c_{k+1},\ldots) & 
\xrightarrow{rkc_{k}}(c_{1},\ldots,c_{k}-1,c_{k+1}+1,\ldots) &  & \text{Birth}\\
\begin{aligned}\text{for }k\geq2\quad\\
\\
\end{aligned}
 &  & \begin{aligned}(c_{1},\ldots,c_{k-1},c_{k},\ldots)\\
(c_{1},c_{2},\ldots)
\end{aligned}
 & \begin{aligned} & \xrightarrow{\mu kc_{k}}(c_{1},\ldots,c_{k-1}+1,c_{k}-1,\ldots)\\
 & \xrightarrow{\mu c_{1}}(c_{1}-1,c_{2},\ldots)
\end{aligned}
 & \left.\vphantom{\begin{array}{c}
(c_{1})\\
(c_{1})
\end{array}}\right\} \, & \text{Death}
\end{align*}
Since each clone is populated by $k$ individuals the total clone
population is $kc_{k}$, within which each cell can duplicate or die
with rate $r$ or $\mu$. The overall birth and death rates for all
clones of size \added{\protect{$k$}} are thus given by $rkc_{k}$ and $\mu kc_{k}$,
respectively. We can thus write for every $k\geq2$,
\begin{equation}
\left(k-1\right)c_{k-1}\mu
P(c_{1},\ldots,c_{k-1},c_{k},\ldots)=k\left(c_{k}-1\right)r P(c_{1},
\ldots,c_{k-1}+1,c_{k}-1,\ldots),
\label{eq:mod1 balance_ck}
\end{equation}
and for $k=1$,
\begin{equation}
 \mu c_{1} P(c_{1},c_{2},\ldots) = \alpha P(c_{1}-1,c_{2},\ldots).
\label{eq:mod1 balance_ck-1}
\end{equation}
As shown in Appendix~\ref{subsec:S-mod3-Distrib N} \added{for the more general case of 
the BDICC model}, by recursion of
Equation~(\ref{eq:mod1 balance_ck-1}) and using Equation~(\ref{eq:mod1
  balance_ck}), we find

\begin{equation}
P(c_{1},\dots,c_{k},\dots)=P(0,0,\ldots)\left(\frac{\alpha}{r}\right)^{C}
\left(\frac{r}{\mu}\right)^{N} \frac{1}{\prod_{i=1}^{\infty}i^{c_{i}}c_{i}!},
\label{eq:mod1 P vec c}
\end{equation}
with $C=\sum_{k\geq1}c_{k}$ and $N=\sum_{k\geq1}kc_{k}$ as defined
in Equation~(\ref{eq:relations NCc}). The prefactor $P(0,0,\ldots)$
is simply the normalization constant and can be computed as
\begin{align}
P(0,0,\ldots)^{-1} & =\sum_{\vec{c}}\left(\frac{\alpha}{r}\right)^{C}
\left(\frac{r}{\mu}\right)^{N} \frac{1}{\prod_{i=1}^{\infty}i^{c_{i}}c_{i}!}
=\exp\left(\frac{\alpha}{r}\sum_{i=1}^{\infty}\frac{1}{i}
\left(\frac{r}{\mu}\right)^{i}\right)=\left(1-\frac{r}{\mu}\right)^{-\alpha/r},
\label{eq:mod1 P0}
\end{align}
so that finally
\begin{equation}
P(\vec{c})=P(c_{1},\dots,c_{k},\dots)=
\left(1-\frac{r}{\mu}\right)^{\alpha/r}\left(\frac{\alpha}{r}\right)^{C}
\left(\frac{r}{\mu}\right)^{N} \frac{1}{\prod_{i=1}^{\infty}i^{c_{i}}c_{i}!}.
\label{eq:mod1 P vec c-1}
\end{equation}
Note that $P(0,0,\ldots)$ as expressed in Equation~(\ref{eq:mod1 P0})
corresponds to the $N=0$ case in Equation~(\ref{eq:mod1 PN}) since
the state with no individuals present in the population can
only be represented by the configuration $\vec{c}=(0,0,\ldots)$.

We can now use Equation~(\ref{eq:mod1 P vec c}) to determine the
distribution for the total number of species $C$. To do this, we
consider its moment generating function $M_{C}(\xi)$ defined as the
average $\E{\exp\left(\xi C\right)}$
\[
M_{C}(\xi)\equiv\E{\exp\left(\xi C\right)}=\sum_{c_{1},\dots,c_{k},\dots}
e^{\xi C}\left(1-\frac{r}{\mu}\right)^{\alpha/r}\left(\frac{\alpha}{r}\right)^{C}
\left(\frac{r}{\mu}\right)^{N} \frac{1}{\prod_{i=1}^{\infty}i^{c_{i}}c_{i}!},
\]
with $C=\sum_{k\geq1}c_{k}$ and $N=\sum_{k\geq1}kc_{k}$. We find

\begin{equation}
M_{C}(\xi)=\left(1-\frac{r}{\mu}\right)^{(1-e^{\xi})\alpha/r} 
\sum_{c_{1},\dots,c_{k},\dots}\left(1-\frac{r}{\mu}\right)^{\alpha e^{\xi}/r}
\left(\frac{\alpha e^{\xi}}{r}\right)^{C}\left(\frac{r}{\mu}\right)^{N} 
\frac{1}{\prod_{i=1}^{\infty}i^{c_{i}}c_{i}!}.
\label{eq:mod1 M_C}
\end{equation}
Upon comparing Equation\,(\ref{eq:mod1 P vec c-1}) with the terms in
the last summation in Equation~(\ref{eq:mod1 M_C}) we can easily see
that the terms within the sum define the probability
$P(c_{1},\dots,c_{k},\dots)$ of \deleted{an} another simple independent BDI
model with immigration rate $\alpha\to\alpha e^{\xi}$. Thus, from
normalization, the sum in Equation\,(\ref{eq:mod1 M_C}) is equal to
one. By writing $M_{C}(\xi)$ in the form
\[
M_{C}(\xi)=\exp\left[\left(e^{\xi}-1\right)
\frac{\alpha}{r}\log\left({1\over 1 - r/\mu}\right)\right], 
\]
which is a moment generating function of a Poisson
random variable with parameter
$(\alpha/r)\log\left[1/\left(1-r/\mu\right)\right]$ (see
\citet[Chapter 4]{Grimmett2001}) \added{we find}
\begin{equation}
P(C)=\left(1-\frac{r}{\mu}\right)^{\frac{\alpha}{r}}
\frac{\left(\frac{\alpha}{r}\log\left[1/\left(1-\frac{r}{\mu}\right)\right]\right)^{C}}{C!}.
\label{eq:mod1 P C}
\end{equation}
Using this distribution, we find 

\begin{equation}
\E{C}=\var{C} = \left(\frac{\alpha}{r}\right)\log\left[\frac{1}{1-r/\mu}\right].
\label{EC}
\end{equation}

We now find the marginal probability $P(c_{k})$ for the number of
species $c_{k}$ with $k$ individuals regardless of all others. By
using Equation~(\ref{eq:mod1 P vec c-1}), separating out the $c_{k}$
terms, we obtain

\begin{align}
P(c_{k}) & =\sum_{(c_{i})_{i\neq k}}P(c_{1},c_{2},\ldots,c_{k-1},c_{k},c_{k+1})\label{eq:mod1 Pc_k}\\
 & =\left(1-\frac{r}{\mu}\right)^{\alpha/r}\sum_{(c_{i})_{i\neq k}}\prod_{j=1}^{\infty}
\frac{1}{c_{j}!}\left(\frac{1}{j}\,\frac{\alpha}{r}\,\left(\frac{r}{\mu}\right)^{j}\right)^{c_{j}}\nonumber \\
 & =\frac{1}{c_{k}!}\left(\frac{1}{k} \frac{\alpha}{r} \left(\frac{r}{\mu}\right)^{k}\right)^{c_{k}}
\left(1-\frac{r}{\mu}\right)^{\alpha/r}\prod_{j\neq k}^{\infty}\exp
\left(\frac{1}{j}\,\frac{\alpha}{r}\,\left(\frac{r}{\mu}\right)^{j}\right)\nonumber \\
 & =\frac{1}{c_{k}!}\left(\frac{1}{k}\frac{\alpha}{r}
\left(\frac{r}{\mu}\right)^{k}\right)^{c_{k}}\exp\left(-\frac{1}{k}\,\frac{\alpha}{r}\,
\left(\frac{r}{\mu}\right)^{k}\right),\nonumber
\end{align}
which is a Poisson distribution with parameter equal to the mean and variance

\begin{equation}
\E{c_{k}} = \var{c_{k}} =\frac{\alpha}{r}\frac{1}{k}\left(\frac{r}{\mu}\right)^{k}.
\label{Eck}
\end{equation}

Next, we determine the marginal distribution of the number $n_{i}$
of individuals belonging to species $i$. By taking the mean of the
first relation in Equation~(\ref{eq:relations NCc}), we find
\[
\E{c_{k}}=\E{\sum_{i=1}^{C}\ind{n_{i},k}}=\E{\sum_{i=1}^{C}\E{\ind{n_{i},k}|C}}.
\]
Since species are assumed to be non-interacting, the random variables
$\left(n_{i}\right)_{i\leq C}$ are independent and identically distributed (iid)
and are also independent of $C$. Thus, for every $1\leq i\leq C$
we can write
\[
\E{\ind{n_{i},k}|C}=\E{\ind{n_{1},k}}=P(n_{1}=k)
\]
for which $P(n_{1}=k)$ is still undetermined. The above relation yields
\begin{equation}
\E{c_{k}}=\E{C}P(n_{1}=k), 
\label{eq:mod1 Eck}
\end{equation}
in which $\E{c_{k}}$ and $\E{C}$ are determined by
Equations~(\ref{Eck}) and (\ref{EC}). Since all $n_{i}$ values are
identically distributed and $P(n_{i}=k)=P(n_{1}=k)$, we can finally write
$P(n_{i})$ for any species $i$:

\begin{equation}
P(n_{i})=\frac{1}{n_{i}}\left(\frac{r}{\mu}\right)^{n_{i}}
\frac{-1}{\log\left[1-r/\mu\right]}.
\label{eq:mod1 Pn_i}
\end{equation}
Thus, every $n_{i}$ follows Fisher's logarithmic series distribution
\citep{Fisher1943} \added{with} parameter $r/\mu$.  Note that although the
distribution $P(N)$ for the total population $N$ depends on the
immigration rate, the distribution in Equation~(\ref{eq:mod1 Pn_i}) is
\textit{independent} of $\alpha$. This is because each immigration
event necessarily introduces a new species but does not influence the
dynamics of a species already present. Once introduced, the evolution
of any species depends solely on its birth and death rates $r$ and
$\mu$.

Finally, we can use Equation~(\ref{eq:mod1 P vec c-1}) to determine
the expected Shannon's entropy and Simpson's diversity index, as
defined in Equation~(\ref{eq:def diversity indices}). Using a similar
procedure to the one used to determine $P(c_{k})$, we isolate the
$c_{k}$ term in the definition of $P(\vec{c})$ and find the same form
in terms of $c_{k}-1$. \added{Note that} we can write
the mean of $c_{k}f(N)$, for any function $f(N)$, as
\[
\E{c_{k}f(N)}=\frac{1}{k}\frac{\alpha}{r}\left(\frac{r}{\mu}\right)^{k}\E{f(N+k)}.
\]
By considering the functions $f(x)=\log\left(x/k\right)/x$ and $f(x)=(k/x)^{2}$
we find the respective expressions for Shannon's Entropy and Simpson's
diversity index

\begin{equation}
\E H=\frac{\alpha}{r}\sum_{k=1}^{\infty}\left(\frac{r}{\mu}\right)^{k}
\E{\frac{\log\left(\frac{N+k}{k}\right)}{N+k}}\quad\text{and}\quad
\E S=1-\frac{\alpha}{r}\sum_{k=1}^{\infty}k\left(\frac{r}{\mu}\right)^{k}
\E{\left(\frac{1}{N+k}\right)^{2}}.\label{eq:mod1 EH ES}
\end{equation}
Since the distribution of $N$ is known and given by Equation~(\ref{eq:mod1 PN}),
we can use Equation~(\ref{eq:mod1 EH ES}) to numerically compute
$\E H$ and $\E S$.

\added{For completeness, we also derive results for the sBDI process with a
finite number of clones $Q$ that each carry a finite immigration rate
into the system. In Appendix \ref{FINITENUMBER}, we use the detailed
balance conditions to derive explicit steady state probability
distributions over the particle count vector $n_{i}$, the species
count vector $c_{k}$, and the number of clones in the sample $C$.}

\subsection{\label{subsec:LIL}Fast immigration limit}

We now consider the \added{large} immigration limit of
the sBDI model.  While at steady-state the distribution of the number
of individuals in each species $P(n_{i})$, given by
Equation~(\ref{eq:mod1 Pn_i}), is independent of \deleted{the} $\alpha$, the
distributions $P(N)$, $P(C)$ and $P(c_{k})$ do depend on the total
immigration rate $\alpha$ as indicated in Equations~(\ref{eq:mod1
  PN}),~(\ref{eq:mod1 P C}) and~(\ref{eq:mod1 Pc_k}),
respectively. Since immigration always introduces a new species, the
\textit{per clone} immigration rate is zero.  To study the large
immigration regime in which each clone has a finite immigration rate,
we assume $\alpha \equiv \widetilde{\alpha}\,\Omega$ scales as the
parameter $\Omega \to \infty$, which can be thought of as the total
number of different clones that can immigrate into the system per unit
time.  Increasing $\alpha$ will introduce new individuals and new
species to the system, so one can intuitively conclude that the total
population $N$ and number of species $C$, as well as the number of
species with $k$ individuals $c_{k}$, will also increase. We also show
that, as $\Omega$ increases, the scaled values $N/\Omega$, $C/\Omega$
and $c_{k}/\Omega$ converge in distribution to a constant, as
described in Equation~(\ref{eq:def convergence distrib}), with average
values given by
\[
\frac{N}{\Omega}\xrightarrow[\Omega\rightarrow\infty]{{\cal D}}
\frac{\widetilde{\alpha}}{\mu-r},\quad
\frac{C}{\Omega}\xrightarrow[\Omega\rightarrow\infty]{{\cal D}}
\frac{\widetilde{\alpha}}{r}\log\left[\frac{1}{1-r/\mu}\right]\quad
\text{and}\quad\frac{c_{k}}{\Omega}\xrightarrow[\Omega\rightarrow\infty]{{\cal D}}
\frac{\widetilde{\alpha}}{r}\,\frac{(r/\mu)^{k}}{k},
\]
and vanishing variances. A rigorous proof is given in
Appendix~\ref{subsec:S-mod1-Convergence high immigration}.  
Moreover, we can also write the convergence in distribution of the
scaled Shannon's entropy $H/\log\Omega$ and Simpson's diversity index
$S$,

\[
{H\over \log\Omega}\xrightarrow[\Omega\rightarrow\infty]{{\cal D}}1\quad
\text{and}\quad S\xrightarrow[\Omega\rightarrow\infty]{{\cal D}}1,
\]
as also derived in Appendix~\ref{subsec:S-mod1-Convergence high
  immigration}. We can now use
the scaling results above to infer that
\[\frac{\E{H}}{\log \Omega}= 1-{\cal O}(\Omega^{-1}), \quad 
\E{S} = 1 - {\cal O}(\Omega^{-1}).
\]

\begin{figure}[h]
\begin{center}
\includegraphics[width=7in]{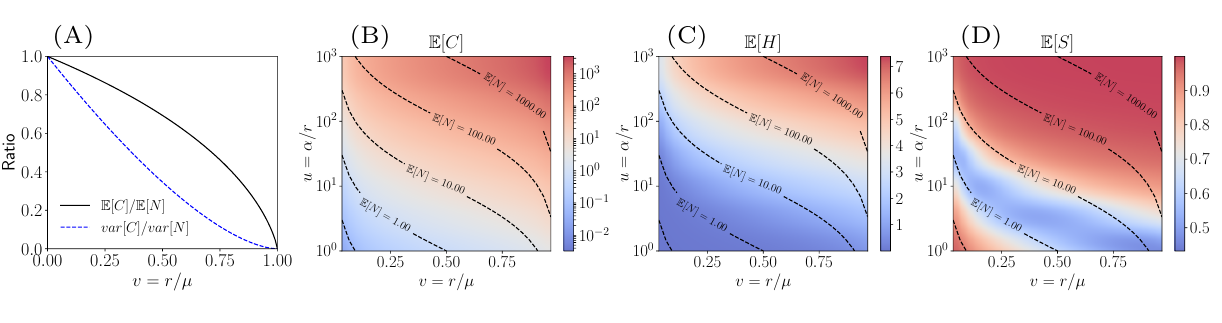}\vspace{-5mm}
\caption{(a) The ratios $\protect\E C/\protect\E N$ and $\protect\var
  C/\protect\var N$ as a function of $v = r/\mu$. This ratio is
  independent of the immigration rate $\alpha$ and can be explicitly
  determined by our theoretical results.  (b), (c) and (d) Analytical
  results for the total richness $\protect\E C$, Shannon's entropy
  $\protect\E H$ and Simpson's index $\protect\E S$, respectively, as
  functions of $u=\alpha/r$ and $v=r/\mu$. Both $\protect\E C$ and
  $\protect\E H$ increase with $\alpha/r$, $r/\mu$, and $\protect\E N$.
  Simpson's index $\protect\E S$ has a similar behavior only if
  $\protect\E N= u v/(1-v)$ is greater than $\sim 10$.
\label{Fig2}}
\end{center}
\end{figure}

\subsection{Interpretation of results}

All distributions computed above depend on two nondimensional quantities:
$u\equiv \alpha/r$ and $v^{-1}\equiv \mu/r$ (with $0 \leq v<1$). 
%
%The emergence of $u,v$ can be easily understood by analyzing the full
%time-dependent Master Equation for the sBDI model and upon
%re-normalizing time by the reproduction rate $r$.
%
Also note that the ratios $\E C/\E N$ and $\var C/\var N$ are
``immigration-invariant'' in the sense that they depend only on
$v=r/\mu$ and are independent of the immigration rate $\alpha$ and/or
$u$.  These ratios are plotted in Figure~\ref{Fig2}(a).
For systems where the immigration rate is difficult to estimate, such
quantities can be useful probes of the system. For example, $\E C/\E
N$ represents the relative abundance of species with respect to the
total number of individuals in the system.  By construction, this
ratio cannot exceed unity.  The limiting case of $\E C/\E N \to 1$
corresponds to a completely heterogeneous system where every individual
belongs to a different species while $\E C/\E N \to 0$ corresponds to
homogeneous systems where the entire population is dominated by very
few species.

In Figures~\ref{Fig2}(b), (c), and (d) we show the respective heat-maps of $\E C$,
$\E H$ and $\E S$ as a function of $u=\alpha/r$ and $v=\alpha/r$,
along with level-sets of $\E N$.  These figures have been generated
using Equations~(\ref{eq:mod1 P C}) and~(\ref{eq:mod1 EH ES}) and show
that both $\E C$ and $\E H$ increase with $u$, $v$, and $\E N$.  In
our plots, $\E S$ is not strictly monotonic despite the expectation
that $\E S$, as a measure of diversity, would follow the same trend as
$\E C$ and $\E{H}$. However, this qualitative discrepancy occurs only
in the small $\E N$ regime where there is a high probability that
there \added{are} no individuals in the system and diversity loses its
meaning. In the extreme limit of $N\to 0$, we find $C\to 0$ and $H\to
0$, but $S \to 1$, giving rise to the nonmonotonic pattern for
$\E{S}$.

\section{\label{sec:mod-2}Birth-Death-Immigration Model with Mutation (BDIM)}

In this section we consider a Birth-Death-Immigration Model with
Mutation (BDIM).  Mutation events are particularly relevant in ecology
as they lead to speciation within populations \citep{Lambert2011}, and
in studies of gene domain family evolution \citep{Karev2002}.  In the
BDIM process, we still assume individuals and species are
non-interacting and that birth, death, immigration, and mutation rates
do not depend on the state of the system. We allow an individual of a
given species to mutate and give rise to a new, yet unrepresented,
species. Mutations are assumed to be neutral in that an individual
arising from mutation maintains the same birth and death rates as the
rest of the population.

We start by allowing mutations only in offspring arising immediately
after their birth, as illustrated in Figure~\ref{Fig1}(b). For each
birth event there is a probability $\epsilon$ (with $0 \leq \epsilon
\leq 1$) that the offspring is mutated, representing a completely new
species. This mechanism is applicable to \textit{e.g.}, bacterial
populations where DNA replication can induce a gene mutation that will
be carried by the newborn cell.  The subsequent theoretical analysis
will be carried out within the framework of a single mutation at birth
as described here. However our mathematical treatment is not limited
to this specific case and, in
Subsection~\ref{subsec:Other-mutation-mechanisms}, we will apply the
same tools to study other relevant scenarios such as ``somatic''
mutations that can occur any time during the lifetime of an
individual, \added{and} ``double'' (or ``symmetric'')
mutations where both parent and offspring mutate upon birth.

\subsection{\label{sec:mod-2analysis}Derivation of steady state statistics}

In Section~\ref{sec:mod1}, we were able to use reversibility and
detailed balance to determine $P(\vec{c})$, the probability for a
given species-count configuration $\vec{c}=(c_{1},\ldots,c_{k},\dots)$
to occur. The introduction of mutation, however, makes the system
irreversible and analytically evaluating $P(\vec{c})$ becomes
prohibitively complex. We can nonetheless exploit some general features of
the BDIM model, such as neutrality and independence, to 
extract results such as the mean and the variance of $C$ and
$c_{k}$. The evaluation of other quantities such as the mean of the
diversity indices $\E H$ and $\E S$ will require numerical
simulations. Our theoretical analysis relies on two important features
of the model:

\begin{itemize}
\item Because mutations do not affect overall birth or death rates but
  only the species to which newborns belong, the distribution for
  $P(N)$ remains identical to the one derived in
  Equation~(\ref{eq:mod1 PN}) for the simple BDI model. Hence, in the
  BDIM model, the overall growth rate due to immigration and birth is
  still $\alpha+Nr$ and the overall death rate is still $\mu N$. The
  resulting $P(N)$ is independent of mutation events and
  Equation~(\ref{eq:mod1 PN}) still holds.

\item The marginal distribution $P(n_{i})$ of the number of
  particles $n_{i}$ of species $i$ still follows a logarithmic series
  distribution as in Equation\,(\ref{eq:mod1 Pn_i}), but with the
  replacement $r\to r(1-\epsilon)$.  Intuitively, this can be
  understood by noting that under mutation a new individual is
  introduced into the $n_{i}$ population with rate $r(1-\epsilon)$
  instead of $r$, since the ``remainder'' $r\epsilon$ is the rate at
  which a new individual in a new species arises. The dynamics of the
  $n_{i}$ individuals thus remains unchanged, provided the birth rate
  is modified to $r(1-\epsilon)$ to account for the diminished births
  within the given species. In Appendix~\ref{subsec:S-mod2 distrib
    n_i} we provide a more rigorous justification of this
  argument. Also, in Figure~\ref{fig:S-mod2: Sim distrib n_i} of
  Appendix~\ref{subsec:S-mod2 distrib n_i} we plot the probability
  distribution for the number of individuals in a given species as
  determined from simulations of the BDIM model, compare our findings
  to the expected logarithmic distribution, and show good agreement
  between the two.  Thus, both theoretically and numerically, we
  verify that $P(n_{i})$ follows a logarithmic series distribution
  \added{with} parameter $r\left(1-\epsilon\right)/\mu$ for all values
  of $0\leq \epsilon \leq1$:

\begin{equation}
P(n_{i})=\frac{1}{n_{i}}\left(\frac{r(1-\epsilon)}{\mu}\right)^{n_{i}}
\frac{-1}{\log\left[1-r(1-\epsilon)/\mu\right]}.
\label{eq:mod2 Pni}
\end{equation}

\end{itemize}
Once the $P(n_{i})$ and $P(N)$ distributions are known for the BDIM
model we can use Equation~(\ref{eq:relations NCc}) and the fact that
the $(n_{i})_{i\leq C}$ are iid and independent of $C$ to express the
mean of the third relation of Equation~(\ref{eq:relations NCc}) as
\[
\E N=\E{\sum_{i=1}^{C}\E{n_{i}|C}}=\E C\E{n_{1}},
\]
so that
\begin{equation}
\E{C}=\frac{\alpha/\mu}{1-r/\mu}\log\left(\frac{1}{1-r(1-\epsilon)/\mu}\right)
\frac{1-r(1-\epsilon)/\mu}{r(1-\epsilon)/\mu}.
\label{eq:mod2 EC}
\end{equation}
Using the moment generating function of $N$, we can similarly
determine the variance of $C$ as shown in detail in
Appendix~\ref{subsec:S-mod2 Moments of C}
\begin{equation}
\var C=\E C\left[\frac{\E C}{(\alpha/r)}+
\log\left(1-\frac{r(1-\epsilon)}{\mu}\right)+1\right].\label{eq:mod2 VC}
\end{equation}
Finally, we take the mean of the first relation in
Equation~(\ref{eq:relations NCc}).  Since all the $(n_{i})_{i\leq C}$
are iid and independent of $C$ we can write

\begin{equation}
\E{c_{k}}=\E{C} P(n_{1}=k)=\frac{\alpha/\mu}{1-r/\mu}\frac{1}{k}
\left(1-\frac{r(1-\epsilon)}{\mu}\right)
\left(\frac{r(1-\epsilon)}{\mu}\right)^{k-1}.
\label{eq:mod2 Eck}
\end{equation}
For the variance of $c_{k}$, we also use the definition in
the first relation in
Equation~(\ref{eq:relations NCc}) to find
\[
c_{k}c_{\ell}= \ind{k,\ell}\sum_{i=1}^{C}\ind{n_{i},k} + 
\sum_{i=1}^{C}\sum_{j\neq i}\ind{n_{i},k}\ind{n_{j},k}.
\]
Upon using Equation~(\ref{eq:mod2 Pni}) to take the mean of this
expression, and recalling that $n_{i}$ is independent of $n_{j\neq i}$
and $C$, we find
\begin{align}
\E{c_{k}c_{\ell}} & = \ind{k,\ell}\E{c_{k}}+\E{C(C-1)}P(n_{i}=k)P(n_{i}=\ell), \nonumber \\
\var{c_{k}} & = \E{c_{k}}+\frac{\var C-\E C}{\E C^{2}}\,\E{c_{k}}^{2}.
\label{eq:mod2 Vck}
\end{align}
These expressions are also valid for the sBDI model, but since 
$C$ and $c_{k}$ are Poisson-distributed in that case, $\var{C}=\E{C}$ and 
$\var{c_{k}}=\E{c_{k}}$ (see below).

We can use Equations~(\ref{eq:mod2 Pni}), (\ref{eq:mod2 Eck}),
(\ref{eq:mod2 EC}), and Appendix (\ref{subsec:S-mod2 Moments of C}) to further
develop the second moments. For example,
\[
\E{c_{k}c_{\ell\neq k}}=\left(\frac{r(1-\epsilon)}{\mu}\right)^{k \ell}
\frac{1}{k\ell}\frac{
\left[\frac{\alpha}{\mu}\left(1-\frac{r(1-\epsilon)}{\mu}\right)
+\epsilon\frac{r}{\mu}\right]
\left[\frac{\alpha}{\mu}\left(1-\frac{r(1-\epsilon)}{\mu}\right)
\right]}{\left(1-\frac{r}{\mu}\right)^{2}\left(\frac{r(1-\epsilon)}{\mu}\right)^{2}},
\]
which reduces to the simple BDI result $\E{c_{k}}^{2}$  when $\epsilon = 0$.
For $k=\ell$, we have 

\begin{equation*}
\var{c_{k}} = \E{c_{k}}+\epsilon\frac{(\alpha/\mu)(r/\mu)}{(1-r/\mu)^{2}k^{2}}
\left(\frac{r(1-\epsilon)}{\mu}\right)^{2(k-1)}.
\end{equation*}

\subsection{Fast immigration limit}

We now study the large immigration limit of the BDIM model. As done in
Section\,\ref{subsec:LIL} we set $\alpha=\widetilde{\alpha}\,\Omega$
and consider the $\Omega\to\infty$ limit. Since the dynamics of
$N/\Omega$ remain unchanged in the BDIM model compared to the
dynamics in the simple BDI model, we recover convergence in
distribution for $N/\Omega$ towards the constant
$\widetilde{\alpha}/(\mu-r)$ as $\Omega\to\infty$. Following the same
procedures illustrated in Appendix~\ref{subsec:S-mod1-Convergence high
  immigration} for the simple BDI model, and using the moment
generating functions of $C$ and $c_{k}$ we can also prove the
convergence in distribution of $C/\Omega$ and $c_{k}/\Omega$ towards
the following

\begin{align*}
\frac{C}{\Omega} & \xrightarrow[\Omega\rightarrow\infty]{{\cal D}}
\frac{(\widetilde{\alpha}/\mu)}{1-r/\mu}\frac{1-r(1-\epsilon)/\mu}{r(1-\epsilon)/\mu}
\log\left(\frac{1}{1-r(1-\epsilon)/\mu}\right),\\
\frac{c_{k}}{\Omega} & \xrightarrow[\Omega\rightarrow\infty]{{\cal D}}
\frac{(\widetilde{\alpha}/\mu)}{1-r/\mu}\frac{1-r(1-\epsilon)/\mu}{k}
\left(\frac{r(1-\epsilon)}{\mu}\right)^{k-1}.
\end{align*}
Finally, the convergence in distribution of the scaled Shannon's entropy
$H/\log\Omega$ and Simpson's diversity index $S$ are

\[
\frac{H}{\log\Omega}\xrightarrow[\Omega\rightarrow\infty]{{\cal D}}1\quad
\text{and}\quad S\xrightarrow[\Omega\rightarrow\infty]{{\cal D}}1.
\]

\subsection{Interpretation of results and comparison with sBDI model}

\begin{figure}
\begin{center}\includegraphics[width=6.8in]{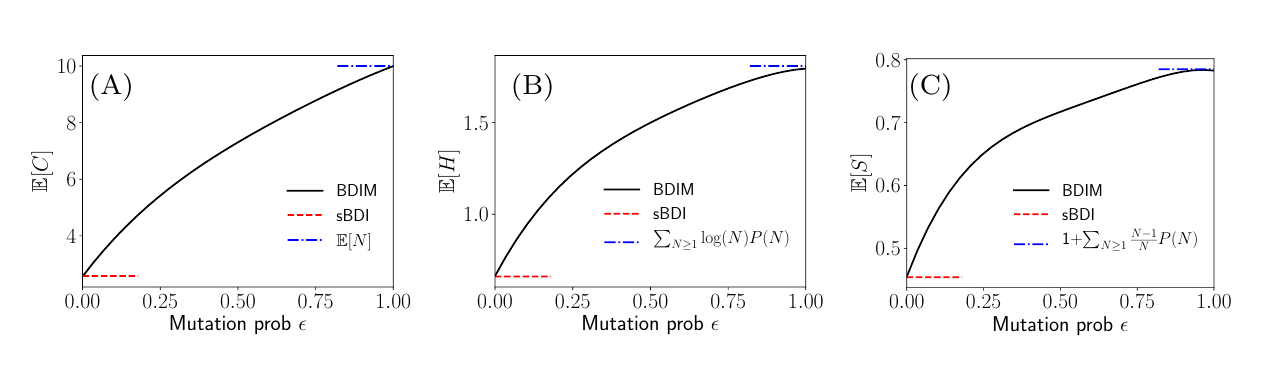}\vspace{-5mm}
\caption{\label{Fig4}Comparison of various diversity
  indices between the sBDI model and the BDIM model at varying values
  of mutation probability $\epsilon$. For both models, we set
  $\alpha=1$, $\mu=1$, and $r=0.9$. (a) Total richness $C$ as determined
  using Equations~(\ref{eq:mod2 EC}) and~(\ref{eq:mod2 VC}). (b)
  Expected Shannon's entropy $\E{H}$, determined with simulations of
  the model with mutation at division. (c) Expected Simpson's
  diversity index $\E{S}$, determined with simulations of the model
  with mutation at division. In each case, mutation increases
  diversity (black curves) relative to that of the sBDI model (blue
  dot-dashed lines). The maximum diversity is obtained in the limit where
  mutation always occurs (i.e. $\epsilon=1$) and all individuals
  belong to different species such that $C=N$ (red dashed lines).}
\end{center}
\end{figure}

In this subsection we compare features of the sBDI and BDIM models.
Note that only three parameters are necessary to characterize all
results obtained in both models: $u\equiv\alpha/r \geq 0$, $v\equiv
r/\mu$ $(0\leq v \leq 1)$, and $0\leq \epsilon \leq 1$. In
Figures~\ref{Fig4}(a), (b) and (c) we plot the total
richness $C$, Shannon's entropy $H$, and Simpson's index $S$ as
defined in Equation~(\ref{eq:def diversity indices}).  In these
figures, $\E{C}$ was determined using Equations~(\ref{eq:mod2 EC})
and~(\ref{eq:mod2 VC}), while $\E{H}$ and $\E{S}$ were found from
simulations.

The expected diversity indices for the simple BDI model \deleted{is
  independent of \protect{$\epsilon$} and} are shown by the dashed
\added{red} horizontal lines. As can be seen from
Figure~\ref{Fig4}, all measures of diversity in the BDIM model
increase with $\epsilon$ and reach their maximum at $\epsilon = 1$
(shown by the blue dot-dashed lines) when all individuals give rise to
mutant offspring.  Upon setting $\epsilon = 1$, and assuming a nonzero
population $N$, Equation~(\ref{eq:def diversity indices}) yields
$C=N$, $H=\log N$ and $S=(N-1)/N$, mirroring the fact that each
species only has one individual.  By noting that $H=0$ and $S=1$ for
$N=0$, we find \deleted{in general} for $\epsilon=1$

\[
\E C=\E N,\quad\E H=\sum_{N\geq1}\log N\,P(N)\quad\text{and}
\quad\E S=1+\sum_{N\geq1}\frac{N-1}{N}\,P(N),
\]
with $P(N)$ being the probability distribution of $N$ given in Equation~(\ref{eq:mod1 PN}).
%\subsection{Estimation of the Shannon's Entropy and Simpson's index}

For general $\epsilon$ in the BDIM model, since we cannot analytically
determine the probability for the species-count vector $\vec{c}$, we
cannot derive explicit formulae for $\E H$ and $\E S$ as we did for
the simple BDI model in Section~\ref{sec:mod1}.  However, we can
estimate both $\E H$ and $\E S$ by approximating $c_k$ with
$\E{c_{k}}$ in Equation~(\ref{eq:def diversity indices}) \added{to} find

\begin{equation}
\E H\simeq-\left(1-\frac{r(1-\epsilon)}{\mu}\right)\sum_{k\geq1}
\left(\frac{r(1-\epsilon)}{\mu}\right)^{k-1}\log\left[\frac{\mu}{\alpha}
k\left(1-\frac{r}{\mu}\right)\right]\quad\text{and}\quad
\E S\simeq1-\frac{\mu}{\alpha}\frac{1-r/\mu}{1-r(1-\epsilon)/\mu}.
\label{eq:mod2 EH ES}
\end{equation}
We compare these approximations with results obtained from numerical
simulations in Figure~\ref{fig:S-mod2: Estim HS} in Appendix B.1. As
can be seen our analytical estimates become more accurate as the
average number of individuals $\E N=\alpha/(\mu-r)$ increases, that
is, for $\alpha\gg\mu$ and $\mu\gtrsim r$.  For $\E N\geq5$, both
estimates for Shannon's entropy and Simpson's diversity index fall
within $10\%$ of their simulated values.

\subsection{\label{subsec:Other-mutation-mechanisms}Alternative mutation mechanisms}

The BDIM model, as described above, assumes that mutations occur with
probability $\epsilon$ during each birth event. We can very easily
adapt the mathematical reasoning used in
Section\,\ref{sec:mod-2analysis} to characterize other types of
mutation processes. Note that if mutation events add more species to
the system, but do not change the overall birth and death rates of the
population, the total-population distribution $P(N)$ will remain
unchanged from the expression found in Equation (\ref{eq:mod1 PN}) for
the simple BDI model. This will be the case for the two alternative 
mutation mechanisms described below.

\begin{description}
\item [{Somatic~mutation:}] Each individual may spontaneously mutate
  at constant rate $\eta>0$ over its lifetime, giving rise to an
  individual of a new species. Such a birth-independent mutation might 
be a reasonable model for \textit{e.g.}, DNA damage 
or epigenetic changes in a cell.
%
%In some cases, this type of mutation may be more realistic that the at
%birth mutation introduced earlier, for instance when analyzing the
%evolution of gene domain families, since the gene can mutate at any
%time before duplication.
%
In this scenario, for a given $n_{i}$ population, new individuals are
added to the same $i$ species at rate $rn_{i}$ and removed at rate
$(\mu+\eta)n_{i}$ since mutation events will effectively transfer an
individual from a given species to a new one. Hence, the distribution
for $P(n_{i})$ should remain a logarithmic series distribution as in
Equation~(\ref{eq:mod1 Pn_i}) but with parameter $r/(\mu+\eta)$. All
theoretical results found in Section~(\ref{sec:mod-2analysis}) remain
the same in this case provided we replace
$\epsilon\to\eta/(\mu+\eta)$.

\item [{Double~mutation:}] Both parent and offspring may spontaneously
  mutate at birth, as for example in symmetric stem cell
  differentiation.  More generally, we can assume that one of the two
  individuals mutates to a new species with probability $\epsilon_{1}$
  and that both mutate into two new species with probability
  $\epsilon_{2}$.  In this case, for a given $n_{i}$ population, new
  individuals are added at rate $r(1-\epsilon_{1}-\epsilon_{2})n_{i}$
  to species $i$ and removed at rate $(\mu+r\epsilon_{2})n_{i}$. The
  number of individuals in species $i$ should thus still be
  logarithmically distributed as in Equation~(\ref{eq:mod1 Pn_i}), but
  with parameter $r(1-\epsilon_{1}-\epsilon_{2})/(\mu+r\epsilon_{2})$.
  All theoretical results found in Section~\ref{sec:mod-2analysis}
  remain the same, provided we replace
  $\epsilon\to(r\epsilon_{2}+\mu(\epsilon_{1}+\epsilon_{2}))/(r\epsilon_{2}+\mu)$.
\end{description}

\section{\label{sec:mod-3}Birth-Death-Immigration Model with Carrying Capacity
(BDICC)}

In the third and final model analyzed in this paper, we include an
important interaction within the total population -- a carrying
capacity that is typically used to represent resource limitations. The
more individuals present in a system, the more they need to share
resources, potentially affecting survival or reproduction rates.  The
carrying-capacity concept is ubiquitous in ecology \added{such as for
  species on an island with finite resources that limit the total
  population.}  Other applications may include lymphatic growth which
is known to be induced by several molecules, in particular cytokines
\citep{Tan2001} that may become insufficient to sustain further
proliferation of T-cells if the population becomes too large.

We first consider a carrying capacity on the death rate of each
individual and derive analytical results; more general cases will be
addressed via numerical simulations. As shown in Figure~\ref{Fig1}(c),
the only difference between our BDI model with carrying capacity
(BDICC) and the sBDI model is that the death rate now depends on the
total number of individuals in the system $N$.  We assume that
$\mu(N)$ is an increasing function with $N$ as dwindling resources led
by population increases will also increase the death rate. It is
important to remark that populations described by the BDICC model do
not evolve independently. Since the dynamics of each individual now
depends on that of all others, there is a global, but ``neutral''
interaction. In contrast to the two previous models, the number of
individuals in each species $\left(n_{i}\right)_{i\leq C}$, can no
longer be considered an independent random variable so that

\[
\E{c_{k}}\neq\E{C} P(n_{1}=k).
\]
The equality of the quantities on the left and right hand sides above
\added{was} used in the previous analysis to determine
Equations~(\ref{eq:mod1 Eck}) and~(\ref{eq:mod2 Eck}) and is no longer
applicable to the BDICC model.

\subsection{\label{subsec:mod-3 mod analysis}Derivation of steady state statistics}

We first consider the dynamics of the total number of individuals $N$
and study how $P(N)$ is modified in the BDICC model. In this case, the
overall population still undergoes a birth and death process with
rates $\alpha+rN$ and $\mu(N)N$, respectively. The properties of birth
and death process with non homogeneous rates are known
\citep{Bansaye2015a}.  In particular, in the case of an increasing
function $\mu(N)>0$, the conditions for the existence of a steady state
is \deleted{$\mu(N) > 0$ and}
\[
\lim_{N\rightarrow\infty}\mu(N)>r.
\]
More general conditions for the existence of a steady-state
configuration have been detailed in the case of a non-increasing death
rate $\mu(N)$ \citep{Bansaye2015a}.  If \added{a} steady-state exists, then
$P(N)$ can be found using detailed balance, similar to what was done
in Section~\ref{sec:mod1}
\begin{equation}
P(N)=\begin{cases}
\displaystyle \frac{1}{Z_{\alpha,r}}, \quad N=0, \\
\displaystyle \frac{1}{Z_{\alpha,r}}\frac{1}{N!}{\displaystyle \prod_{k=0}^{N-1}}
\frac{\alpha+r k}{\mu(k+1)}, \quad N\geq 1
\end{cases}
\label{eq:mod3 PN}
\end{equation}
with $Z_{\alpha,r}$ a normalization constant given by
\[
Z_{\alpha,r}=1+ \sum_{n=1}^{\infty}\frac{1}{n!}{\displaystyle \prod_{k=0}^{n-1}}
\frac{\alpha+rk}{\mu(k+1)}.
\]

\noindent To determine $P(\vec{c})$, $P(C)$ and $P(c_{k})$, we rely on
reversibility of the system and detailed balance.  Interestingly,
while a non-constant death rate $\mu(N)$ preserves detailed balance, a
non-constant growth function $r(N)$ does not strictly obey detailed
balance.  We will come back to this point further in the discussion,
in Section~(\ref{subsec:mod3 CC on birth}).  For now, we consider
$\mu(N)$ and constant $r$ and write all possible transitions of the
system as was done in Section~\ref{sec:BDIanalyze}

\begin{align*}
 &  & (c_{1},c_{2},\ldots) & \xrightarrow{\alpha}(c_{1}+1,c_{2},\ldots) & 
 & \text{Immigration}\\
\text{for }k\geq1\quad &  & (c_{1},\ldots,c_{k},c_{k+1},\ldots) & 
\xrightarrow{rkc_{k}}(c_{1},\ldots,c_{k}-1,c_{k+1}+1,\ldots) &  & \text{Birth}\\
\begin{aligned}\text{for }k\geq2\quad\\
\\
\end{aligned}
 &  & \begin{aligned}(c_{1},\ldots,c_{k-1},c_{k},\ldots)\\
(c_{1},c_{2},\ldots)
\end{aligned}
 & \begin{aligned} & \xrightarrow{\mu(N)kc_{k}}(c_{1},\ldots,c_{k-1}+1,c_{k}-1,\ldots)\\
 & \xrightarrow{\mu(N)c_{1}}(c_{1}-1,c_{2},\ldots)
\end{aligned}
 & \left.\vphantom{\begin{array}{c}
(c_{1})\\
(c_{1})
\end{array}}\right\} \, & \text{Death}
\end{align*}
which differ from the ones written in Section\,\ref{sec:BDIanalyze}
by virtue of $\mu\to\mu(N)$ with $N=\sum_{k\geq1}kc_{k}$. By assuming 
detailed balance, we write

\begin{equation}
\mu(N)kc_{k} P(c_{1},\ldots,c_{k-1},c_{k},\ldots)=
(k-1)\left(c_{k-1}+1\right)r P(c_{1},\ldots,c_{k-1}+1,c_{k}-1,\ldots),
\label{eq:mod3 balance_ck}
\end{equation}
for $k\geq2$, while for $k=1$ the following holds

\begin{equation}
\mu(N) c_{1} P(c_{1},c_{2},\ldots)= \alpha P(c_{1}-1,c_{2},\ldots).
\label{eq:mod3 balance2_ck}
\end{equation}
We follow the same procedure as in Section\,\ref{sec:BDIanalyze}
and iterate Equation~(\ref{eq:mod1 balance_ck}) using
Equation~(\ref{eq:mod1 balance_ck-1}). After imposing normalization, we 
obtain

\begin{equation}
P(\vec{c})=P(c_{1},\dots,c_{k},\dots)=\frac{1}{Z_{\alpha,r}}
\left(\frac{\alpha}{r}\right)^{C}\frac{r^{N}}{\prod_{n=1}^{N}\mu(n)}
\frac{1}{\prod_{i=1}^{\infty}i^{c_{i}}c_{i}!},
\label{PvecCC}
\end{equation}
where $C=\sum_{k\geq1}c_{k}$ as defined in Equation~(\ref{eq:relations NCc})
and where $Z_{\alpha,r}$ is the normalization constant that can be
obtained by evaluating $P(N=0)$ in Equation~(\ref{eq:mod3 PN}) so that

\begin{equation}
Z_{\alpha,r}=\sum_{c_{1},\dots,c_{k},\dots}^{\infty}\left(\frac{\alpha}{r}\right)^{C}
\frac{r^{N}}{\prod_{n=1}^{N}\mu(n)}\frac{1}{\prod_{i=1}^{\infty}i^{c_{i}}c_{i}!}
=1+\sum_{n=1}^{\infty}\frac{1}{n!}{\displaystyle 
\prod_{k=0}^{n-1}}\frac{\alpha+rk}{\mu(k+1)}.\label{Znorm}
\end{equation}
More details can be found in Appendix~\ref{subsec:S-mod3-Distrib N}.
We can now use the expression for $P(\vec{c})$ in
Equation~(\ref{PvecCC}) to evaluate the moment generating function of
$C$ and related moments
\[
M_{C}(\xi)\equiv\E{\exp\left(\xi C\right)}=\frac{1}{Z_{\alpha,r}}
\sum_{c_{1},\dots,c_{k},\ldots}\left(\frac{\alpha}{r}e^{\xi}\right)^{C}
\frac{r^{N}}{\prod_{n=1}^{N}\mu(n)}\frac{1}{\prod_{i=1}^{\infty}i^{c_{i}}c_{i}!}.
\]
Since the argument of the sum in the above expression is the same
as in Equation~(\ref{Znorm}) provided $\alpha\to\alpha e^{\xi}$
we can write

\begin{align*}
M_{C}(\xi) & =\frac{Z_{\alpha e^{\xi},r}}{Z_{\alpha,r}},
\end{align*}
for any $\xi<0$. We can now differentiate $M_{C}(\xi)$ with respect
to $\xi$ and take the limit $\xi\rightarrow0$ to find the
following expressions for the mean and the variance of $C$
\begin{align}
\E C= & \alpha\,\E{\sum_{k=0}^{N-1}\left(\alpha+rk\right)^{-1}}, \label{eq:mod3 EC}\\
\var C= & \E C\left(1-\E C\right)+\alpha^{2}\,\E{\left(\sum_{k=0}^{N-1}
\frac{1}{\alpha+r k}\right)^{2}-\sum_{k=0}^{N-1}
\frac{1}{\left(\alpha+r k\right)^{2}}}.
\label{eq:mod3 VC}
\end{align}

\noindent We can use the above expressions and $P(N)$ as determined in
Equation~(\ref{eq:mod3 PN}) to evaluate the mean and variance of
$C$. Note that setting a uniform $\mu(N)=\mu$ in
\added{Equations}~(\ref{PvecCC}) \added{and}~(\ref{Znorm}) reduces the results to
those of the sBDI model (Section~\ref{sec:BDIanalyze}).  We can now
evaluate $\E{c_{k}}$ using Equation~(\ref{PvecCC}):
\[
\E{c_{k}}=\frac{1}{Z_{\alpha,r}}\!\sum_{c_{1},\dots,c_{k},\dots}\!\!c_{k}
\left(\frac{\alpha}{r}\right)^{C}\frac{r^{N}}{\prod_{n=1}^{N}\mu(n)}
\frac{1}{\prod_{i=1}^{\infty}i^{c_{i}}c_{i}!},
\]
which can be rearranged to yield
\begin{align}
\E{c_{k}} & =\frac{1}{Z_{\alpha,r}}\frac{\alpha}{kr}\!\sum_{c_{1},\dots,c_{k},\dots}
\!\!\left(\frac{\alpha}{r}\right)^{C}\frac{r^{N+k}}{\prod_{n=1}^{N+k}\mu(n)} 
\frac{1}{\prod_{i=1}^{\infty}i^{c_{i}}c_{i}!}\nonumber \\
 & =\frac{\alpha r^{k-1}}{k}\sum_{c_{1},\dots,c_{k},\dots}\!\!
\frac{P(\vec{c})}{\prod_{m=1}^{k}\mu(N+m)}\nonumber \\
 & =\frac{\alpha r^{k-1}}{k}\E{\prod_{m=1}^{k}
\frac{1}{\mu(N+m)}}.\label{eq:mod3 Eck}
\end{align}
A uniform $\mu(n)=\mu$ returns $\E{c_{k}}=(\alpha/\mu) (r/\mu)^{k-1}/k$,
as previously determined in Section~\ref{sec:BDIanalyze}. We can
also verify that for any function $f(N)$,

\[
\E{c_{k}f(N)}=\frac{\alpha r^{k-1}}{k}\E{\frac{f(N+k)}{\prod_{m=1}^{k}\mu(N+m)}}.
\]
For $f(x)=\log\left(x/k\right)/x$
and $f(x)=(k/N)^{2}$ the expressions for Shannon's
Entropy and Simpson's diversity index become 

\begin{eqnarray}
\E H & = & \alpha\sum_{k=1}^{\infty}r^{k-1}
\E{\frac{\log\left[\frac{N+k}{k}\right]}{\left(N+k\right)\prod_{m=1}^{k}\mu(N+m)}},\\
\E S & = & 1-\alpha\sum_{k=1}^{\infty}kr^{k-1}
\E{\frac{1}{\left(N+k\right)^{2}\prod_{m=1}^{k}\mu(N+m)}}.
\label{eq:mod3 EH ES-1}
\end{eqnarray}

\noindent Once again, setting $\mu(N)=\mu$ a constant allows us to
recover the results in Equation~(\ref{eq:mod1 EH ES}) for the sBDI
model.

\subsection{Fast immigration limit}

% As done for the sBDI and the BDIM models, in this section we consider
% the large immigration limit by posing $\alpha=\widetilde{\alpha}\,\Omega$
% and setting $\Omega\to\infty$. \rmodif{In addition to that, we will also assume
% specific conditions for the death rate. At first, let's consider a continuous function $\widetilde{\mu}(x)$ independent of $\Omega$, such as
% $\widetilde{\mu}(x)$ is continuous and strictly increasing, and
% that  $\lim_{x\rightarrow\infty} r/{\widetilde{\mu}}(x) < 1$
% Under these conditions, one can show that there exists a unique, positive solution
% $n^{*}$ to the fixed point equation
%
% \begin{equation}
% n^{*}\widetilde{\mu}(n^{*})=\widetilde{\alpha}+rn^{*}.\label{eq:fixed-pt-eq}
% \end{equation}
%
% For a given $\Omega$, we define the death rate as a function of the total number of individuals $N$ as being
% \[
% \mu(N)=\widetilde{\mu}(N/\Omega)
% \]
% The death rate is thus explicitly depending on $\Omega$. The reason behind this
% definition is due to the fact that we want to keep  $\mu(N)$ at the
% same order of magnitude as $\Omega$ increases: like for the previous models, we
% will show that $\E N$ diverges as $\Omega$ increases but the random variable
% $N/\Omega$ will be show  to converge in distribution to a constant; as a
% consequence the death rate $\mu(N)$ will also converges in distribution
% to a constant.
%
% }

To analyze the large immigration limit, $\alpha=\widetilde{\alpha}\,\Omega$, 
$\Omega\to\infty$, we need to assume a specific form for the death rate. 
For a given $\Omega$, we take the death rate as a
function of $N/\Omega$:
\[
\mu(N)=\widetilde{\mu}(N/\Omega).
\]
The reason behind this scaling is that we want to keep $\mu(N)$ at the
same order of magnitude as $\Omega$ increases. As in the previous
models, we will show that $\E{N}$ diverges as $\Omega$ increases, but
the random variable $N/\Omega$ will be shown to converge in
distribution to a constant. As a consequence, the death rate
$\widetilde{\mu}(N/\Omega)$ will also converge in distribution to a
constant.

% We also assume that the death rate
% $\mu$ depends on $\Omega$ such as {\color{red} or set? is this
% an assumption or can it be done always?}
% \[
% \mu(x)=\widetilde{\mu}(x/\Omega)
% \]
% with $\widetilde{\mu}$ a, continuous, positive and increasing function
% of its argument that is independent of $\Omega$. This choice allows
% $\mu(N)$ to remain \rmodif{the same order of magnitude as $\Omega$,
% and $N$ as a result, increases.}
Given $\widetilde{\mu}(x)$ is continuous and strictly increasing, and
that  $\lim_{x\rightarrow\infty} r/{\widetilde{\mu}}(x) < 1$,
one can show that there exists a unique, positive solution
$n^{*}$ to the fixed-point equation

\begin{equation}
n^{*}\widetilde{\mu}(n^{*})=\widetilde{\alpha}+rn^{*}.
\label{eq:fixed-pt-eq}
\end{equation}
In Appendix~\ref{subsec:S-mod3-Convergence N_Omega}, we show that
for every $\delta >0$,
\[
P(|N/\Omega-n^{*} |>\delta)\xrightarrow{\Omega\rightarrow\infty}0,
\]
thus proving that
\begin{equation}
\frac{N}{\Omega}\xrightarrow[\Omega\rightarrow\infty]{{\cal D}}n^{*}
\label{eq:mod3 conv N/=0003A9}
\end{equation}
in which $n^{*}$ is defined by Equation~(\ref{eq:fixed-pt-eq}). The
proof of this convergence is analogous to the one in \citet[Proposition 4]{Dessalles2017b}).
Intuitively, $n^{*}$ can be identified with the steady-state solution
to the deterministic approximation of the dynamics of $n(t) \equiv N (t)/\Omega$
given by
\[
\frac{{\rm d} n(t)}{{\rm d}t}=\widetilde{\alpha}+r n(t)-\widetilde{\mu}(n(t)) n(t).
\]

\noindent
Using the convergence of Equation~(\ref{eq:mod3 conv N/=0003A9}),
we can show convergence in distribution of $C/\Omega$ and $c_{k}/\Omega$
as follows
\[
\frac{C}{\Omega}\xrightarrow[\Omega\rightarrow\infty]{{\cal D}}
\frac{\widetilde{\alpha}}{r}\log
\left[\frac{1}{1-r/\widetilde{\mu}(n^{*})}\right]\quad\text{and}\quad
\frac{c_{k}}{\Omega}\xrightarrow[\Omega\rightarrow\infty]{{\cal D}}
\frac{\widetilde{\alpha}}{r}\,\frac{1}{k}\left(\frac{r}{\widetilde{\mu}(n^{*})}\right)^{k}.
\]
The complete proofs of these convergences are given in
Appendix~\ref{subsec:S_mod3 Convergence C/=0003A9}, but one can also
verify them by inspecting Equations~(\ref{eq:mod3 EC})
and~(\ref{eq:mod3 Eck}) respectively to determine the convergence of
$\E{C/\Omega}$ and $\E{c_{k}/\Omega}$ using convergence of $N/\Omega$
to $n^{*}$.

Even though the dynamics of all $n_{i}$ are coupled through the death
rate $\mu(N)=\mu\left(\sum n_{i}\right)$, all $n_{i}$ remain
identically distributed: $P(n_{i}=k)=P(n_{j}=k)$ for all $i,j\leq C$
and $k\geq1$. This ``neutrality'' allows us to determine the convergence of
$n_{i}$ in the $\Omega \to \infty$ limit as detailed in
Appendix~\ref{subsec:S_mod3 Convergence ni}:

\[
P(n_{i}=k)\xrightarrow[\Omega\rightarrow\infty]{}
\frac{1}{k}\left(\frac{r}{\widetilde{\mu}(n^{*})}\right)^{k}
\frac{-1}{\log[1-r/\widetilde{\mu}(n^{*})]},
%
%{1-r/\widetilde{\mu}(n^{*})}\right]^{-1},
\]
which shows that for $\Omega \to \infty$, $n_{i}$ converges to a
logarithmic-series distribution with parameter $r/\mu(n^{*})$.

Finally, we can use the convergence in distribution of both $N/\Omega$
and $c_{k}/\Omega$, to determine the convergence in distribution for
the rescaled Shannon's Entropy $H/\log\Omega$ and Simpson's diversity
index $S$:
\[
\frac{H}{\log\Omega}\xrightarrow[\Omega\rightarrow\infty]{{\cal D}}1
\quad\text{and}\quad S\xrightarrow[\Omega\rightarrow\infty]{{\cal D}}1.
\]
These convergence results are identical for all three models and their
proofs are similar to the ones for the sBDI model as described in
Appendix~\ref{subsec:S-mod1-Convergence high immigration}.

\subsection{Interpretation and analysis of results}

%Comparison with the sBDI model and variations of the BDICC model}
%\subsubsection{Comparison with the sBDI model}
%\label{subsec:mod3 CC on death}

\begin{figure}
\begin{center}\includegraphics[width=6.8in]{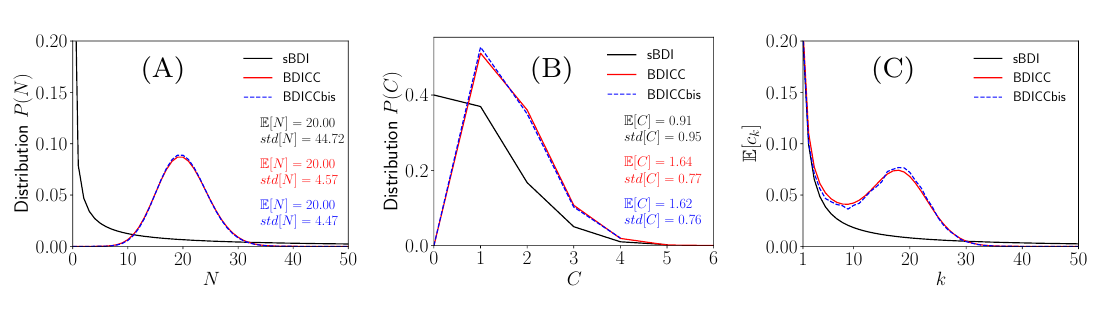}\vspace{-5mm}
\caption{\label{fig:mod3 BDICC vs sBDI}Comparison of the sBDI model
  with the two carrying capacity models (BDICC with carrying capacity
  on the death rate and BDICC-bis with carrying capacity on birth the
  rate) for slow immigration (parameters are chosen such as
  $\protect\E N=20$ in both cases: $\alpha=0.2$, $r=.99$, $\mu=1$ for
  the sBDI; $\alpha=0.2$, $r=.99$, $\mu(N)=0.0475 N$ for the
  BDICC). (a) Theoretical distributions $P(N)$ for the three
  models. (b) Distributions of the richness $P(C)$ obtained from
  Monte-Carlo simulations. (c) The theoretical expected species-count
  vector $\left(\protect\E{c_{k}}\right)_{k\protect\geq1}$ calculated
  from Equations~(\ref{eq:mod3 Eck}) and (\ref{eq:mod3 PN}).  Contrary
  to the sBDI model, the BDICC model is dominated by only one species
  ($C\simeq1$) with around $20$ individuals (the peak of
  $\protect\E{c_{k}}$ arises at $k\simeq20$). This attribute is completely
  missed in a mean-field approximation to $\E{c_{k}}$
  \citep{GOYALBMC}. Negligible differences between the BDICC and
  BDICC-bis models are observed.}
\end{center}
\end{figure}

\subsubsection{Comparison with the sBDI model}
\label{subsec:mod3 CC on death}

To properly compare the sBDI and BDICC models, we fix their
immigration rates $\alpha$ and birth rates $r$ to be the same. For the
BDICC model, we use a linear death rate function $\mu(N)=\mu_{1} N$
and tune both $\mu_{1}$ and the constant death rate $\mu$ in the sBDI
model to yield the same average total number of individuals $\E N$.

In Figure~\ref{fig:mod3 BDICC vs sBDI} we plot the distributions
$P(N)$ and $P(C)$ as well as the average $\E{c_{k}}$ in a low
immigration regime ($\alpha=0.2$ and $r=0.99$) for both the sBDI and
the BDICC models. We adjusted $\mu$ for the sBDI model and $\mu_{1}$
for the BDICC model so that $\E{N}=20$ in both cases. Clear
differences emerge. First, since $\mu(N)$ is proportional to the
existing population $N$ in the BDICC model, very rarely will the
population reach vanishingly small levels: as $N \to 0$ so will
$\mu(N)\to 0$ allowing birth and immigration to replenish $N$. This is in
contrast to the sBDI model where $\mu$ is a constant independent of
$N$.

Another feature of a low immigration rate is that it allows one
species to ``invade the niche'' of the BDICC model before the arrival
of another species.  The result is that only one species ($C\simeq1$)
represents the whole population and $\E{c_{k}}$ has a peak around $k
\approx \E{N}=20$.  This exclusion effect does not arise in the sBDI
model since the presence of species already in the system does not
influence the dynamics of the newly arriving ones. These exclusionary
interactions are also \added{the} origin of the peak observed in
Figure~\ref{fig:mod3 BDICC vs sBDI}(c).  \added{Note that this
  difference is not only due to the sBDI model's high probability of
  extinction ($N=0$): we checked that the distributions of the sBDI
  model, conditioned on $N>0$, also fail to display the exclusionary
  effect where one clone dominates.} Direct mean-field approximations,
$\E{c_{k}c_{\ell}}\approx \E{c_{k}}\E{c_{\ell}}$, lead to monotonic
decreasing $\E{c_{k}}$ \citep{GOYALBMC}, completely missing the peak
around the carrying capacity ($k \approx 20$). Global carrying
capacity interactions can also have a significant influence on
Shannon's entropy and Simpson's diversity index.

%For example, $\E{H}=0.12$ and $\E{S}=0.45$ in the sBDI model and
%$\E{H}=0.26$ and $\E{S}=0.16$ in the BDICC model.

The qualitative differences between the two models diminish as the
immigration rate $\alpha$ increases. This confirms our theoretical
analysis through which we showed that the sBDI and the BDICC models
follow similar trends as $\alpha$ increases.
%
% : in both cases, $N/\Omega$, $C/\Omega$ and $c_{k}/\Omega$ converge
%in distribution to constants, and $n_{i}$ also converges to a
%log-series distribution.
%
If we fix $\mu$ of the sBDI model and $\mu_{1}$ in the BDICC model
such that $\lim_{\Omega \to \infty} \E{N/\Omega}$ remains the same for
both models, we find that $N/\Omega$, $C/\Omega$ and $c_{k}/\Omega$
converge to the same constants in the two models and that $n_{i}$
converges to the same the log-series distribution as well.

\subsubsection{\label{subsec:mod3 CC on birth}Carrying capacity on birth (BDICC-bis
model)} Our BDICC model included an interaction only through the death
rate $\mu(N)$. This choice, as opposed to, say, $r(N)$ was made
because the detailed balanced assumption can be shown to 
hold between all pairs of states, rendering 
our analytic results for the probability distribution $P(\vec{c})$
exact.

Alternatively, one can impose an interaction through a
population-dependent birth term. It is well-known that even if the
mean populations are equal, models using $\mu(N)$ yield different
higher order statistics from those using $r(N)$ \citep{LJSALLEN}.  The
interacting model with $\mu$ constant, but a growth rate $r(N)$ is
dubbed the BDICC-bis model. For the BDICC-bis model, the equilibrium
distribution of $N$ can still be determined as
\[
P(N) = \begin{cases}
\displaystyle {1\over Z_{\alpha,\mu}},\quad N=0, \\
\displaystyle \frac{1}{Z_{\alpha,\mu}}\frac{1}{N!}
\prod_{k=0}^{N-1}\frac{\alpha+r(k) k}{\mu}, \quad N\geq 1,
\end{cases}
\]
with $Z_{\alpha,\mu}$ a normalizing constant. However, as shown in
Section~\ref{subsec:S_mod3 BDICC on birth} of the Appendix the
BDICC-bis model with population-dependent growth is no longer
reversible when enumerated by the species counts $c_{k}$ and we cannot
use detailed balance properties to exactly determine the probability
distribution $P(\vec{c})$. Consequently, neither \deleted{can} means
\added{nor} variances of $c_{k}$ and $C$ \added{can} be
determined. We thus perform numerical simulations by setting
$r(N)=r_{1}/N$, while keeping $\alpha, \mu$ uniform.

We compare results of the BDICC-bis model to those of the sBDI model
($\alpha, r, \mu$ uniform) and the previous BDICC model ($\alpha, r,
\mu(N) = \mu_{1} N$). As in Subsection~\ref{subsec:mod3 CC on death}
we consider a low immigration rate $\alpha=0.2$, set $\mu =1$, and
adjust the parameter $r_{1}$ so that $\E{N}$ is the same across the
three models. Results for the BDICC-bis model are plotted as the
\added{blue} dashed curves in Figure~\ref{fig:mod3
  BDICC vs sBDI}.  Observed trends for the $P(N)$ and $P(C)$
distributions within the BDICC and the BDICC-bis models are similar,
as well as for $\E{c_{k}}$.  Shannon's entropy and Simpson's diversity
index also remain similar, $\E H=0.25$ and $\E S=0.15$ for the
BDICC-bis model, and $\E H=0.26$ and $\E S=0.16$ for the BDICC model.

\subsubsection{\label{reflectingBC} Quasi-steady state and reflecting boundary conditions} 
When $\alpha = 0$ in the BDICC model, the $N=0$ state is a perfect
sink. In the absence of immigration, a system cannot escape from the
``absorbing'' $N=0$ state. However, in the deterministic limit, the
$N=0$ state is unstable while the finite-population state with $N^{*}$
individuals is stable (for $\mu(N) = \mu_{1}N$, $N^{*} = r/\mu_{1}$).
Even though the true steady-state of the stochastic problem is $N=0$,
it may take an exponentially long time for a population initially at
$N\sim N^{*}$ to become extinct. Therefore, given a system initiated
with a large population $N \sim N^{*}$, we expect that a quasi-steady
state is established before extinction.

To find distributions associated with the long-lived quasi-steady
state of the BDICC model, we modify the absorbing boundary condition
at $N=0$ to a reflecting boundary condition by simply preventing the
last individual from dying by setting $\mu(N=1) = 0$. 
%
%Note that it is different to what have been
%supposed in the general BDICC model where $\mu(N)$ is always supposed
%to be strictly positive (and increasing) for $N\geq1$.
%
We can now compute the steady state distribution of $N$ using detailed
balance to find
\[
P(N)=P(1)\,\frac{1}{N!}\prod_{k=1}^{N-1}\frac{\alpha+rk}{\mu(k+1)},
\]
with $P(1)$ being the probability of having one individual.  Contrary
to the BDICC model with an absorbing boundary condition, we can no
longer recurse the detailed balance equations down to $N=0$, since the
last individual cannot die (in other words $P(0)=0$). By denoting
\[
Z_{\alpha,r}'=\sum_{n=1}^{\infty}\frac{1}{n!}\prod_{k=1}^{n-1}\frac{\alpha+rk}{\mu(k+1)},
\]
we find
\[
P(N)=\frac{1}{Z_{\alpha,r}'}\,\frac{1}{N!}\prod_{k=1}^{N-1}\frac{\alpha+rk}{\mu(k+1)}.
\]
Similarly, using the 
detailed balance equations, we find the distribution of $\vec{c}$,

\begin{align*}
P(\vec{c}) & =\frac{P(1,0,\ldots)}{\alpha}
\left(\frac{\alpha}{r}\right)^{C}\frac{r^{N}}{\prod_{n=2}^{N}\mu(n)}
\frac{1}{\prod_{i=1}^{\infty}i^{c_{i}}c_{i}!}, 
\end{align*}
where $P(1,0,\ldots)=1/Z'_{\alpha,r}$.

%we have that
%\[
%P(\vec{c})=\frac{1}{\alpha Z'_{\alpha,r}}
%\left(\frac{\alpha}{r}\right)^{C}r^{N}\frac{1}{\prod_{n=2}^{N}\mu(n)}
%\times\frac{1}{\prod_{i=1}^{K}i^{c_{i}}c_{i}!}.
%\]

The importance of the quasi-steady state is most discernible in the
$\alpha \to 0$ limit where initial conditions determine long-lived
configurations. With absorbing boundary conditions, the equilibrium
state is the trivial empty state even if it is deterministically
unstable.  However, \added{by using a reflecting boundary condition on
  the total population, we can approximate the long-lived quasi-steady
  state distributions with}

\begin{align}
P(N) & \xrightarrow[\alpha\to0]{}\frac{1}{Z_{0,r}'}\,
\frac{r^{N-1}}{N!}\prod_{k=2}^{N}\frac{1}{\mu(k+1)} \label{REFLECTINGPN} \\
P(\vec{c}) & \xrightarrow[\alpha\to 0]{}
\begin{cases}
\displaystyle \frac{1}{Z_{0,r}'}\,\frac{r^{N-1}}{N!}\prod_{k=2}^{N}
\frac{1}{\mu(k+1)} & \text{if}\quad C=\sum_{k}c_{k}=1 \label{REFLECTINGPc}\\
0 & \text{otherwise}.
\end{cases}
\end{align}
In this limit, only one species survives and occupies the whole
system before final extinction at exponentially long times. 

Intuitively, without immigration, new species cannot be introduced in
the system, and with probability $1$ there will be at some point only
one individual in the system. This single-species population persists
for a long time before final extinction.  This long time persistence
is approximated by the reflecting boundary condition that prevents
true extinction.  Note that this limit is related to species
extinction and coarsening in a multispecies Moran model with fixed
population size \citep{BLYTHE}. The distributions $P(N)$ for absorbing
and reflecting boundary conditions are compared in
Figure~\ref{fig:QSS} for small $\alpha$.

\begin{figure}[h!]
%\begin{center}
\begin{minipage}[l]{0.46\linewidth}
\includegraphics[height=5cm]{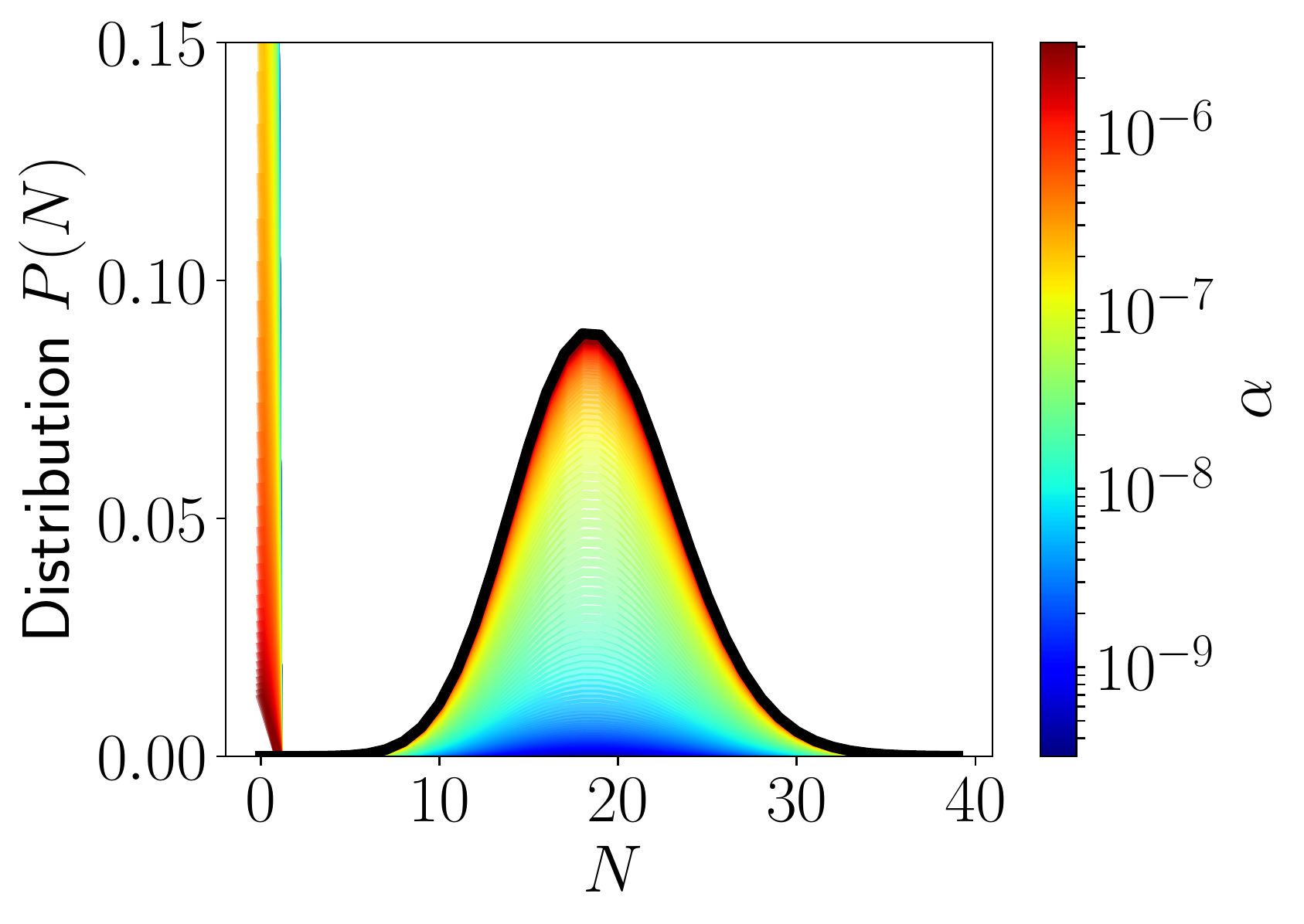}
\end{minipage}
\begin{minipage}[l]{0.53\linewidth}
\caption{\baselineskip=11pt Comparison of $P(N)$ for the BDICC model
  with absorbing and reflecting boundary conditions in the small
  $\alpha$ limit. For both submodels, $r=0.995$ and $\mu(N) = 0.0498
  N$, leading to a carrying capacity of $N^{*}\approx 20$.  The thick
  black curve corresponds to the quasi-steady state for $\alpha = 0$
  computed by using a reflecting boundary condition approximation
  (Eq.~\ref{REFLECTINGPN}). The colored curves correspond to the
  steady-state distribution of the absorbing model using different
  values of $\alpha$. When $\alpha = 0$, the standard absorbing BDICC
  model leads to an equilibrium ``vacuum'' or ``extinct'' state (dark
  blue), while the BDICC model approximated with reflecting boundary
  condition leads to a the quasi-steady state distribution $P(N)$
  centered about $N^{*}$.}
%  As $\alpha$
%  increases the distributions at about the carrying capacity (defined
%  by $\mu(N^{*})\approx r$ for small $\alpha$) are similar except for
%  a residual probability of extinction.}
 \label{fig:QSS}
\end{minipage}
%\end{center}
\end{figure}

\begin{doublespace}
{\footnotesize{}}
\begin{table}
\begin{doublespace}
\begin{centering}
\resizebox{!}{0.9\textheight}{
\begin{turn}{90}{\footnotesize{}}%
\begin{tabular}{|c|>{\centering}m{5cm}|>{\centering}m{5.75cm}|>{\centering}m{6.25cm}|}
\hline & Simple Birth-Death-Immigration model (sBDI) &
Birth-Death-Immigration model with mutation (BDIM) &
Birth-Death-Immigration model with carrying capacity
(BDICC)
\tabularnewline 
\hline 
\hline Defn. & \begin{onehalfspace}
  $u\equiv \alpha/r$, $v\equiv r/\mu$
\end{onehalfspace}

$Z=\left(1-v\right)^{-u}$

$f_{k}(x)=\log\left[\frac{x+k}{k}\right]/\left(x+k\right)$ & $u=\alpha/r$, $v=r/\mu$

$p\equiv \frac{r}{\mu}\left(1-\epsilon\right)$ & 
%\begin{onehalfspace}
$u\equiv \alpha/r$, $v(x) \equiv r/\mu(x)$
%\end{onehalfspace}

$Z_{\alpha, r}=\sum_{n=0}^{\infty}\frac{1}{n!}{\displaystyle \prod_{k=0}^{n-1}}
\frac{\alpha+r k}{\mu(k+1)},$

$S_{n}(x)=\sum_{k=0}^{x-1}(\alpha+r k)^{-n}$

$f_{k}(x)=\log\left[\frac{x+k}{k}\right]/\left(x+k\right)$\tabularnewline
\hline
Cond. & $v<1$ & $v<1$ & $\lim_{x\rightarrow\infty}v(x)<1$~(sufficient cond.{*})\tabularnewline
\hline
$N$ & $N\sim\text{NegBinom}(u,v)$ & $N\sim\text{NegBinom}(u,v)$ & 
$P(N)=\frac{1}{Z_{\alpha,r}}\frac{1}{N!}{\displaystyle \prod_{k=0}^{N-1}}
\left(u+k\right) v(k+1)$\tabularnewline
\hline
$\vec{c}$ & $P(\vec{c})=\frac{1}{Z} \frac{u^{C}v^{N}}
{\prod_{i=1}^{\infty}i^{c_{i}}c_{i}!}$ & N.A. & 
$P(\vec{c})=\frac{u^{C}}{Z_{\alpha,r}}{\displaystyle 
\prod_{n=1}^{N}}\frac{v(n)}{\prod_{i=1}^{\infty}i^{c_{i}}c_{i}!}$\tabularnewline
\hline
$c_{k}$ & $c_{k}\sim\text{Poisson}\left(\frac{u v^{k}}{k}\right)$ & 
$\E{c_{k}}=\frac{uv}{1-v}\left(1-p\right)\frac{p^{k-1}}{k}$

$\var{c_{k}}=\E{c_{k}}+\epsilon \frac{uv^{2} p^{2(k-1)}}{(1-v)^2 k^{2}}$ & 
$\E{c_{k}}=\frac{u}{k}\E{{\displaystyle \prod_{m=1}^{k}}v(N+m)}$\tabularnewline
\hline
$C$ & $C\sim\text{Poisson}\left(u\log\left[\frac{1}{1-v}\right]\right)$ & 
$\E C=\frac{uv}{1-v}\frac{1-p}{p}\log\frac{1}{1-p}$
\vspace{2mm}
$\begin{alignedat}{1}\vspace{2mm}\var C= & \E C\left[\frac{\E
      C}{u}+1+\log(1-p)\right]\end{alignedat} $ & $\E
C=\alpha\E{S_{1}(N)}$

$\begin{alignedat}{1}\var C= & \E C\left(1-\E C\right)\\[-8pt] &
 +\alpha^{2}\E{\left(S_{1}(N)\right)^{2}-S_{2}(N)}
\end{alignedat}
$\tabularnewline
\hline
$\vec{n}$ & $n_{i}\sim\text{LogSeries}\left(v\right)$ & $n_{i}\sim
\text{LogSeries}\left(p\right)$ & N.A.\tabularnewline
\hline
$H$ & $\E H=u{\displaystyle \sum_{k=1}^{\infty}}v^{k}\E{f_{k}(N)}$ & $\E H
\simeq\left(1-p\right){\displaystyle \sum_{k=1}^{\infty}}p^{k-1}\log
\left[\frac{uv}{k\left(1-v\right)}\right]$ & $\E H=
u{\displaystyle \sum_{k=1}^{\infty}}\E{f_{k}(N){\displaystyle 
\prod_{m=1}^{k}}v(N+m)}$\tabularnewline
\hline
$S$ & $\E S=1-u{\displaystyle \sum_{k=1}^{\infty}}kv^{k}
\E{\left(\frac{1}{N+k}\right)^{2}}$ & $\E S\simeq1-\frac{1}{uv}\frac{1-v}{1-p}$ & 
$\E S=1-u{\displaystyle \sum_{k=1}^{\infty}}k\E{\frac{\prod_{m=1}^{k}v(N+m)}
{\left(N+k\right)^{2}}}$\tabularnewline
\hline
\end{tabular}\end{turn}}
\par\end{centering}
\end{doublespace}
{\footnotesize{}\caption{\label{tab:recap}Table summarizing our
    analytical results. $\text{Poisson}$: Poisson distribution;
    $\text{NegBinom}$: Negative binomial distribution,
    $\text{LogSeries}$: logarithmic distribution. In each case, the
    quantities $C$ and $N$ implicitly depend on the vector $\vec{c}$
    through Equation~(\ref{eq:relations NCc}). ({*} indicates a
    sufficient condition, for a necessary and sufficient condition,
    see \citet[Chapter 1]{Bansaye2015a}.) The functions
      $f_{k}(x)$, $S_{1}(N)= \sum_{k=0}^{N-1}(\alpha + rk)^{-1}$ and
      $S_{2}(N)=\sum_{k=0}^{N-1}(\alpha + rk)^{-2}$ are defined in
      entries of the first row.}}
%{\footnotesize\par}
\end{table}
%{\footnotesize\par}
\end{doublespace}

{\footnotesize{}}
\begin{table}
\begin{doublespace}
\begin{centering}
\resizebox{!}{0.9\textheight}{
\begin{turn}{90}{\footnotesize{}}%
\begin{tabular}{|c|>{\centering}m{5cm}|>{\centering}m{5.75cm}|>{\centering}m{6.25cm}|}
\hline \:  & Simple Birth-Death-Immigration model (sBDI) &
Birth-Death-Immigration model with mutation (BDIM) &
Birth-Death-Immigration model with carrying capacity
(BDICC)\tabularnewline \hline \hline Cond. &
$\Omega\rightarrow\infty$ & $\Omega\rightarrow\infty$ &
$\Omega\rightarrow\infty$

$\mu(x)=\widetilde{\mu}(x/\Omega)$

$n^{*}$: positive soln of
$x\widetilde{\mu}(x)=\widetilde{\alpha}+r\,x$\tabularnewline
\hline $N/\Omega$ & $\frac{\widetilde{\alpha}}{\mu-r}$ &
$\frac{\widetilde{\alpha}}{\mu-r}$ & $n^{*}$\tabularnewline \hline
$c_{k}/\Omega$ & $\frac{\widetilde{\alpha}}{r}\,\frac{(r/\mu)^{k}}{k}$
&
$\frac{\widetilde{\alpha}}{\mu-r}\,\frac{\mu-r\left(1-\epsilon\right)}
{r\left(1-\epsilon\right)}\,\frac{\left(\frac{r}{\mu}\left(1-\epsilon\right)\right)^{k}}{k}$
&
$\frac{\widetilde{\alpha}}{k}\,\frac{r^{k-1}}{\widetilde{\mu}(n^{*})^{k}}$\tabularnewline
\hline $C/\Omega$ &
$\frac{\widetilde{\alpha}}{r}\log\left[\frac{1}{1-r/\mu}\right]$ &
$\frac{\widetilde{\alpha}}{\mu-r}\frac{\mu-r\left(1-\epsilon\right)}
{r\left(1-\epsilon\right)}\log\left[\frac{1}{1-\frac{r}{\mu}\left(1-\epsilon\right)}\right]$
&
$\frac{\widetilde{\alpha}}{r}\log\left[\frac{1}
{1-r/\widetilde{\mu}(n^{*})}\right]$\tabularnewline
\hline $n_{i}$ & $n_{i}\sim\text{LogSeries}\left(r/\mu\right)$ &
$n_{i}\sim\text{LogSeries}\left(\frac{r}{\mu}\left(1-\epsilon\right)\right)$
&
$n_{i}\sim\text{LogSeries}\left(r/\widetilde{\mu}(n^{*})\right)$\tabularnewline
\hline $H/\log\Omega$ & $1$ & $1$ & $1$\tabularnewline 
\hline 
$S$ & $1$ & $1$ & $1$
\tabularnewline \hline
\end{tabular}\end{turn}}
\par\end{centering}
\end{doublespace}
{\footnotesize{}\caption{\label{tab:recap-1}Table summarizing model results in 
the fast immigration limit defined by 
$\alpha = \widetilde{\alpha}\Omega, \Omega \to \infty$. $H/\log\Omega$ and $S$
are expanded to the first nontrivial term.}}{\footnotesize\par}
\end{table}
{\footnotesize\par}

%$1-\frac{\mu-r}{1-\frac{r}{\mu}
%  \left(1-\epsilon\right)}\frac{1}{\widetilde{\alpha}\Omega}$ &
%$1-\frac{\widetilde{\mu}(n^{*})}{\widetilde{\alpha}\Omega}$

\section{Summary and Conclusions}

In this paper we analyzed three stochastic, neutral
birth-death-immigration (BDI) models: the simple BDI (sBDI), BDI with
mutations (BDIM), and BDI with carrying capacity (BDICC). Where
possible, we derived analytical expressions for the steady-state
distribution $P(N)$ of the total population and the steady-state
distribution $P(C)$ for the total number of species in the system. In
many cases, we were also able to derive expressions for the
steady-state distributions of individual subpopulations $P(n_{i})$ and
$P(c_{k})$, given in terms of cells counts $n_{i}$ and species counts
$c_{k}$, respectively.

All three models (sBDI, BDIM, and BDICC) analyzed show similar species
abundance distribution functions. In particular, we find that the
number of individuals in one species $n_{i}$ follows a strict
log-series distribution $P(n_i)$, or, in the case of the BDICC model,
can be approximated by one.  The prediction that species could follow
this type of distribution dates to the early days of theoretical
ecology. For example, after analyzing insect abundances in the field,
\citet{Fisher1943} proposed that the distribution of insect species in
an area should follow a geometric or, possibly, a log-series
distribution.  The log-series distribution has since been widely used
in theoretical ecology \citep{Volkov2003,Bell2001,MacArthur2016}, but
has also been challenged. For instance \citet{Preston1948} speculated
that actual species abundances would be better described by a
log-normal, or possibly a Poisson log-normal distribution
\citep{Bulmer1974}. Within immunology, the abundance of T-cell clones
appears to follow a power-law distribution, incompatible with a
log-series distributions \citep{Desponds2016}. The log-series
characteristic of our BDI models can be linked to their neutrality,
{\textit {i.e.}} that replication and death rates are independent of
the given species.

We also evaluated diversity metrics such as Shannon's entropy and
Simpson's diversity index and provided expectations and variances of a
number of quantities. Stochastic simulations were also performed and
matched with our analytical results. Our analytical results are
summarized in Table~1, while Table~2 lists the same results in the
large immigration regime. Interestingly, we show that in the fast
immigration limit, the diversity indices $H$ and $S$ converge to
values independent of the model, but the richness $C$ converges to
values that are model-dependent. Only the richness can distinguish the
different processes in the fast immigration limit, implying that 
in this limit it is a more useful diversity metric.

Finally, we confirmed the consistency of detailed balance for a
carrying capacity model in which the global interaction is implemented
through the death rate (BDICC) but demonstrated that detailed balance
is violated if carrying capacity is effected through the birth rate
(BDICC-bis model).  Nonetheless, this asymmetry generates almost no
qualitative difference in the statistical properties when comparing
the two models using equal mean total populations.

Many related applications motivate us to extend our work towards
non-neutral BDI models. We expect that lifting the neutrality
condition will typically generate longer tails in species abundance
distributions.

\section*{Acknowledgments}
This work was supported in part by an INRA Contrat Jeune Scientifique
Award (RD) and by the National Science Foundation through grants
DMS-1814364 (TC) and DMS-1814090 (MD).  The authors also thank Song Xu
for clarifying discussions.

%and the Army Research Office (W911NF-14-1-0472). 

%{\footnotesize{}\newpage}{\footnotesize\par}

\appendix

\section*{Mathematical Appendices}

\section{Simple Birth-Death-Immigration models (sBDI)}

\subsection{\label{FINITENUMBER} Finite number of
    species} So far, we have assumed immigration events introduce
  completely new species to the system, regardless of the existing
  population structure. Within the context of island biodiversity,
  this assumption corresponds to the mainland hosting an unlimited
  number of species, so that individuals who emigrate to the island
  are always part of a new species. Mathematically, we are assuming
  that each species immigrates only once.

In this Appendix, we consider an alternative model where the number of
mainland species $Q$ is finite.  In this case, the probability that a
newly immigrated individual belongs to species $i$ (with $1\leq i\leq
Q$) is $1/Q$ and the number of species in the island cannot exceed
$Q$. As a consequence, the total number of species $C \leq Q$, and the
number of species with $k$ individuals $c_{k}\leq Q$ for all $k$.

The dynamics of the total number of individuals $N$ remains unchanged
with respect to the sBDI model, as the type of species immigrating
from the mainland does not affect overall birth or death
rates. Therefore, the distribution for $P(N)$ remains identical to the
one derived in Eq.~(\ref{eq:mod1 PN}) for the simple BDI model.  We
can now determine the distribution of $\vec{c}$ in the alternative
model using the same approach taken for the sBDI model. Transitions
are given by

{\small{}
\begin{align*}
 &  & (c_{1},c_{2},\ldots) & \xrightarrow{\alpha(1-C/Q)}(c_{1}+1,c_{2},\ldots) & 
\left.\vphantom{\begin{array}{c}
(c_{1})\\
(c_{1})
\end{array}}\right\}  & \begin{alignedat}{1} & \text{Immigration}\\
 & \text{(of new species)}
\end{alignedat}
\\
\text{for }k\geq1\quad &  & (c_{1},\ldots,c_{k},c_{k+1},\ldots) & 
\xrightarrow{\left(rk+\alpha/Q\right)c_{k}}(c_{1},\ldots,c_{k}-1,c_{k+1}+1,\ldots) & 
\left.\vphantom{\begin{array}{c}
(c_{1})\\
(c_{1})
\end{array}}\right\}  & \begin{alignedat}{1} & \text{Birth+\text{Immigration }}\\
 & \text{(of existing species)}
\end{alignedat}
\\
\begin{aligned}\text{for }k\geq2\quad\\
\\
\end{aligned}
 &  & \begin{aligned}(c_{1},\ldots,c_{k-1},c_{k},\ldots)\\
(c_{1},c_{2},\ldots)
\end{aligned}
 & \begin{aligned} & \xrightarrow{\mu kc_{k}}(c_{1},\ldots,c_{k-1}+1,c_{k}-1,\ldots)\\
 & \xrightarrow{\mu c_{1}}(c_{1}-1,c_{2},\ldots)
\end{aligned}
 & \left.\vphantom{\begin{array}{c}
(c_{1})\\
(c_{1})
\end{array}}\right\}  & \,\,\text{Death}.
\end{align*}

\noindent
Note that the birth process rate is effectively augmented by
$\alpha/Q$, due to the possibility of a new individual immigrating
into an existing species. Conversely, the corresponding immigration
rate for new species is decreased by $\alpha C/Q$. Also note that the
limit $Q \to \infty$ reduces the current model to the original
sBDI. Using detailed balanced equations, similarly as in the sBDI
model, we can write $P(\vec{c})$ as follows
\[
P(\vec{c})=\left(1-\frac{r}{\mu}\right)^{\alpha/r}\frac{Q!}{\left(Q-C\right)!}
\left(\frac{r}{\mu}\right)^{N}
\left(\frac{1}{\prod_{i=1}^{\infty}c_{i}!}\right)
\prod_{\ell=1}^{\infty}\prod_{j=0}^{\ell-1}
\left(\frac{j+\frac{\alpha}{Qr}}{j+1}\right)^{c_{\ell}}.
%
%\times\frac{1}{\prod_{i=1}^{\infty}c_{i}!}.
\]
One can verify that this distribution satisfies all the required
transition equations.  Yet, contrary to the sBDI model, it is more
difficult to determine the distributions of $C$, $c_{k}$ and $n_{i}$
based on this formulation; in particular the factor
$Q!/\left(Q-C\right)!$ prevents us from applying the same mathematical
procedure used in the sBDI case.

We can however take a different route, namely invoking neutrality and
the independence of the system, to deduce the distributions of $C$ and
$c_{k}$.  Since each species behaves independently from all others, we
can consider the number $m_{i}$ of individuals in the $i^{\rm th}$ species
(with $1\leq i\leq Q$) independently from the rest. Note that $m_{i}$
is a random variable that can be zero when there are no individuals of
species $i$ present in the system. The quantity $m_i$ is the
counterpart to $n_{i}$ introduced for the sBDI model with the caveat
that $n_i$ represents the number of individuals of a species
\emph{actually present} on the island (i.e. $\P{n_{i}=0}=0$).
%Note that this random variable $m_{i}$ is possibly $0$ 
%(the case where there is no individuals of species $i$ in the island). 
%In the main article, we considered the random variable
%$n_{i}$ that describe the number of individuals of a species \emph{actually
%present} on the island (i.e. $\P{n_{i}=0}=0$). 
In the current model $n_{i}$ can be expressed as a function of $m_i$ via
\begin{equation}
\P{n_{i}=k}=\P{m_{i}=k|m_{i}>0} \qquad \text{for }k\geq1, \label{eq:def n_i}
\end{equation}

\noindent
describing the distribution of the $i^{\rm th}$ species provided that at
least one of its individuals is on the island.  The random variable
$m_{i}$ follows a birth and death process: its birth rate is
$\alpha/Q+rm_{i}$ and its death rate is $\mu m_{i}$.  The $\alpha/Q$
rate corresponds to immigration, the rate $r m_{i}$ corresponds to actual
reproduction. We already determined the steady state distribution of
this process in Eq.~(\ref{eq:mod1 PN}), yielding a negative binomial
distribution \added{with} parameters $\alpha/(rQ)$ and $r/\mu$ as follows
\[
P(m_{i})=\left(1-\frac{r}{\mu}\right)^{\alpha/\left(Qr\right)}
\left(\frac{r}{\mu}\right)^{m_{i}}\frac{1}{m_{i}!}\prod_{k=0}^{m_{i}-1}
\left(\frac{\alpha}{Qr}+k\right).
\]
The $P(n_i)$ distribution can be determined from $P(m_i)$ expressed
above, using Eq.~(\ref{eq:def n_i})
\[
\P{n_{i}=k}=\frac{\P{m_{i}=k}}{1-\P{m_{i}=0}}=
\frac{\left(1-\frac{r}{\mu}\right)^{\alpha/\left(Qr\right)}}
{1-\left(1-\frac{r}{\mu}\right)^{\alpha/\left(Qr\right)}}
\left(\frac{r}{\mu}\right)^{k} \frac{1}{k!}\prod_{k'=0}^{k-1}
\left(\frac{\alpha}{Qr}+k'\right) \qquad \text{for any }k\geq1. 
\]

\noindent
Finally, the number of species $c_{k}$ with $k$ individuals and the
total number of species $C$ can be expressed as a function of $m_{i}$
as follows
\[
c_{k}=\sum_{i=1}^{Q}I\left(m_{i}=k\right)\quad\text{and}\quad 
C=\sum_{i=1}^{Q}I\left(m_{i}>0\right).
\]
Since all $m_{i}$ are i.i.d., the probability distributions of $c_{k}$
and $C$ are given by 
\begin{align*}
P(c_{k}) & =\binom{Q}{c_{k}}\P{m_{i}=k}^{c_{k}}\left(1-\P{m_{i}=k}\right)^{Q-c_{k}},\\
P(C) & =\binom{Q}{C}\left(1-\P{m_{i}=0}\right)^{C}\P{m_{i}=k}^{Q-C},
\end{align*}
which are binomial distributions of respective parameters $Q$ and
$\P{m_{i}=k}$ for $c_{k}$, and $Q$ and $1-\P{m_{i}=0}$ for $C$.  Note
that this approach does not allow us to determine the diversity
indices $H$ and $S$.}

%%%%%%%%%%%%%%%%%%%%%%%%%%%%%%%%%%%%%%%%%%%%%%%%%%%%%%%%%%%%%%%%%%%

\subsection{\label{subsec:S-mod1-Convergence high immigration}Convergences in
the large immigration regime}

In this section, we will prove the convergence of
\[
N/\Omega,\quad C/\Omega,\quad\left(\frac{c_{1}}{\Omega},
\frac{c_{2}}{\Omega},\ldots\right),\quad\text{and}\quad H/\log\Omega
\]
in the large immigration regime defined by $\alpha = \widetilde{\alpha}\Omega$, 
$\Omega \to \infty$.
\begin{prop}
\label{prop:Model1_scaled_N}The scaled total number of individuals
$N/\Omega$ converges in distribution to the constant
$\widetilde{\alpha}/(\mu-r)$.
\end{prop}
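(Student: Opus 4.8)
The plan is to exploit the explicit law of $N$ derived above. By Equation~(\ref{eq:mod1 PN}), at steady state $N$ is negative-binomially distributed, with mean and variance given by Equation~(\ref{eq:meanN}). Since the paper's notion of convergence in distribution to a constant, Equation~(\ref{eq:def convergence distrib}), is precisely convergence in probability, it suffices to control the first two moments of $N/\Omega$ and apply Chebyshev's inequality.

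First I would substitute $\alpha = \widetilde{\alpha}\,\Omega$ into Equation~(\ref{eq:meanN}), using $1 - r/\mu = (\mu-r)/\mu$, to obtain
\[
\E{N/\Omega} = \frac{\widetilde{\alpha}}{\mu-r},
\qquad
\var{N/\Omega} = \frac{1}{\Omega^{2}}\,\frac{\alpha\mu}{(\mu-r)^{2}}
= \frac{\widetilde{\alpha}\,\mu}{\Omega\,(\mu-r)^{2}} .
\]
Thus the mean of $N/\Omega$ equals the proposed limit $\ell \equiv \widetilde{\alpha}/(\mu-r)$ for every $\Omega$, and the variance decays like $1/\Omega$; both expressions are finite precisely because $r<\mu$, the very condition required for a steady state to exist.

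Then, for any fixed $\delta>0$, Chebyshev's inequality gives
\[
\P{\left|N/\Omega - \ell\right| \ge \delta}
\;\le\; \frac{\var{N/\Omega}}{\delta^{2}}
\;=\; \frac{\widetilde{\alpha}\,\mu}{\Omega\,(\mu-r)^{2}\,\delta^{2}}
\;\xrightarrow[\Omega\to\infty]{}\; 0,
\]
so that $\lim_{\Omega\to\infty}\P{\left|N/\Omega - \ell\right| < \delta} = 1$, which is exactly Equation~(\ref{eq:def convergence distrib}). Hence $N/\Omega \xrightarrow[\Omega\rightarrow\infty]{{\cal D}} \widetilde{\alpha}/(\mu-r)$.

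There is no serious obstacle here; the only points needing care are that the second-moment bound is legitimate (it is, since $\var{N}<\infty$ when $r<\mu$) and that convergence in probability to a constant coincides with convergence in distribution to that constant, so no separate argument at the level of distribution functions is needed. As an alternative route I could start from the probability generating function of the negative binomial law, $\E{z^{N}} = \big((1-r/\mu)/(1-(r/\mu)z)\big)^{\alpha/r}$, evaluate it at $z=e^{t/\Omega}$ with $\alpha=\widetilde{\alpha}\,\Omega$, and Taylor-expand in $1/\Omega$ to show $\log\E{e^{tN/\Omega}} \to \widetilde{\alpha}\,t/(\mu-r)$, i.e.\ that the limiting moment generating function is that of the point mass at $\widetilde{\alpha}/(\mu-r)$; the continuity theorem then yields the same conclusion. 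The Chebyshev argument is the shorter one and is what I would write up.
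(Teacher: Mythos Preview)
Your proof is correct and takes a genuinely different route from the paper. The paper argues via the moment generating function: it writes $\E{e^{\xi N/\Omega}} = \big((1-r/\mu)/(1-e^{\xi/\Omega}r/\mu)\big)^{\widetilde{\alpha}\Omega/r}$, takes the logarithm, expands as $\Omega\to\infty$, and shows the limit is $\exp[\xi\,\widetilde{\alpha}/(\mu-r)]$, then invokes the continuity theorem for MGFs. This is exactly the alternative you sketch at the end of your proposal. Your primary argument, by contrast, reads off the explicit mean and variance of $N$ from Equation~(\ref{eq:meanN}), observes that $\E{N/\Omega}$ is already the target constant and $\var{N/\Omega}=O(1/\Omega)$, and closes with Chebyshev. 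Your approach is shorter and more elementary, requiring no asymptotic expansion or MGF machinery; the paper's MGF route, on the other hand, is the template it reuses for the convergences of $C/\Omega$ and $c_k/\Omega$ (where explicit variances are less immediately available), so it has the virtue of uniformity across the subsequent propositions.
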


\begin{proof}
The definition of the convergence in distribution described in
Equation~(\ref{eq:def convergence distrib}) is equivalent to the
convergence of its moment generating function. One is left 
with showing that 

\[
\text{for any }\xi<0,\quad\lim_{\Omega\rightarrow\infty}\E{e^{\xi N/\Omega}}
=\frac{\widetilde{\alpha}}{\mu-r}
\]
(see for instance \citet[Chapter 5]{Billingsley2012}). Since
$N\sim\text{NegBinom}\left(\widetilde{\alpha}\Omega/r,r/\mu\right)$
for which the moment generating function is known, we have for any $\xi<0$:
\begin{align*}
\E{e^{\xi N/\Omega}} & =
\left(\frac{1-r/\mu}{1-e^{\xi/\Omega}r/\mu}\right)^{\widetilde{\alpha}\Omega/r}.
\end{align*}
Upon taking the logarithm of the previous expression, we find
\begin{align*}
\log\left[\E{e^{\xi N/\Omega}}\right] &
=\frac{\widetilde{\alpha}\Omega}{r}
\left[\log\left(1-r/\mu\right)-\log\left(1-e^{\xi/\Omega}r/\mu\right)\right]\\ &
\simlim{\Omega\to\infty}-\frac{\widetilde{\alpha}\Omega}{r}
\log\left[1-\frac{\xi}{\Omega}\frac{r/\mu}{1-r/\mu}\right]\\ &
\simlim{\Omega\to\infty}\frac{\widetilde{\alpha}\Omega}{r}
\frac{\xi}{\Omega}\frac{r/\mu}{1-r/\mu} = \xi\frac{\widetilde{\alpha}}{\mu-r},
\end{align*}
so
\[
\E{e^{\xi N_{\Omega}/\Omega}}\xrightarrow
[\Omega\rightarrow\infty]{}\exp\left[\xi\frac{\widetilde{\alpha}}{\mu-r}\right],
\]
thus proving the proposition.
\end{proof}
\begin{prop}
The scaled total number of species $C/\Omega$ converges
in distribution to
\[
{C\over \Omega}\xrightarrow[\Omega\rightarrow\infty]{{\cal D}}
\frac{\widetilde{\alpha}}{r}\log\left[\frac{1}{1-r/\mu}\right].
\]
\end{prop}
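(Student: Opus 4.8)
The plan is to proceed exactly as in the proof of Proposition~\ref{prop:Model1_scaled_N}, i.e.\ via moment generating functions. The crucial input is already available: Equation~(\ref{eq:mod1 P C}) says that $C$ is Poisson distributed with parameter $\lambda_{\Omega}\equiv(\alpha/r)\log\left[1/(1-r/\mu)\right]$, which upon substituting $\alpha=\widetilde{\alpha}\Omega$ becomes $\lambda_{\Omega}=(\widetilde{\alpha}\Omega/r)\log\left[1/(1-r/\mu)\right]$, and hence grows linearly in $\Omega$. First I would recall that a Poisson$(\lambda)$ random variable has moment generating function $\exp\left[\lambda(e^{\xi}-1)\right]$, so that for any $\xi<0$
\[
\E{e^{\xi C/\Omega}}=\exp\left[\lambda_{\Omega}\left(e^{\xi/\Omega}-1\right)\right]
=\exp\left[\frac{\widetilde{\alpha}\Omega}{r}\log\left(\frac{1}{1-r/\mu}\right)\left(e^{\xi/\Omega}-1\right)\right].
\]

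Next I would expand $e^{\xi/\Omega}-1=\xi/\Omega+\mathcal{O}(\Omega^{-2})$ as $\Omega\to\infty$, so that the exponent in the last display converges:
\[
\frac{\widetilde{\alpha}\Omega}{r}\log\left(\frac{1}{1-r/\mu}\right)\left(e^{\xi/\Omega}-1\right)\xrightarrow[\Omega\to\infty]{}\xi\,\frac{\widetilde{\alpha}}{r}\log\left(\frac{1}{1-r/\mu}\right).
\]
Therefore $\E{e^{\xi C/\Omega}}\to\exp\left[\xi\,(\widetilde{\alpha}/r)\log(1/(1-r/\mu))\right]$ for every $\xi<0$, which is the moment generating function of the (degenerate) constant $(\widetilde{\alpha}/r)\log\left[1/(1-r/\mu)\right]$. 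As in Proposition~\ref{prop:Model1_scaled_N}, pointwise convergence of the (Laplace) moment generating function implies convergence in distribution (see \citet[Chapter~5]{Billingsley2012}), which completes the argument.

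There is essentially no obstacle here; the computation is even lighter than for $N/\Omega$ because the Poisson moment generating function has such a clean form. For completeness I would also record the purely elementary alternative: Equation~(\ref{EC}) gives $\E{C}=\var{C}=\lambda_{\Omega}$, so $\E{C/\Omega}=(\widetilde{\alpha}/r)\log\left[1/(1-r/\mu)\right]$ is in fact independent of $\Omega$ while $\var{C/\Omega}=\lambda_{\Omega}/\Omega^{2}\to0$; Chebyshev's inequality then yields $P\!\left(\left|C/\Omega-(\widetilde{\alpha}/r)\log\left[1/(1-r/\mu)\right]\right|>\delta\right)\le\var{C/\Omega}/\delta^{2}\to0$ for every $\delta>0$, which is precisely the definition of convergence in distribution to a constant given in Equation~(\ref{eq:def convergence distrib}). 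The only point requiring the slightest care is the interchange of the $\Omega\to\infty$ limit with the Taylor expansion of $e^{\xi/\Omega}-1$, handled exactly as in the $N/\Omega$ case.
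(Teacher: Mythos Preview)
Your proof is correct and follows exactly the approach the paper indicates: the paper's own proof simply reads ``The proof is similar to Proposition~\ref{prop:Model1_scaled_N},'' and your moment-generating-function computation for the Poisson-distributed $C$ is precisely that similar argument. The Chebyshev alternative you add is also valid and, if anything, slightly more elementary than what the paper sketches.
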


\begin{proof}
The proof is similar to Proposition~\ref{prop:Model1_scaled_N}.
\end{proof}

\begin{prop}
For each $k>0$,  $c_{k}/\Omega$ converges
in distribution to
\[
{c_{k}\over \Omega}\xrightarrow[\Omega\rightarrow\infty]{{\cal D}}
\frac{\widetilde{\alpha}}{r}\,\frac{(r/\mu)^{k}}{k}.
\]
\end{prop}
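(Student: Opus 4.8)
The plan is to mimic the proof of Proposition~\ref{prop:Model1_scaled_N}, exploiting the fact that the full marginal law of $c_{k}$ has already been identified. Recall from Equation~(\ref{eq:mod1 Pc_k}) that, at steady state, $c_{k}$ is Poisson distributed with parameter $\frac{\alpha}{r}\frac{1}{k}(r/\mu)^{k}$ (cf.\ Equation~(\ref{Eck})). Substituting $\alpha=\widetilde{\alpha}\Omega$, this parameter becomes $\lambda_{\Omega}\equiv\frac{\widetilde{\alpha}\Omega}{r}\frac{(r/\mu)^{k}}{k}$, which diverges linearly in $\Omega$, so that rescaling $c_{k}$ by $\Omega$ is the natural choice.

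First I would write down the moment generating function of the rescaled variable. Since the MGF of a $\text{Poisson}(\lambda)$ random variable is $\exp[\lambda(e^{t}-1)]$, we have for any $\xi<0$
\[
\E{e^{\xi c_{k}/\Omega}}=\exp\left[\lambda_{\Omega}\left(e^{\xi/\Omega}-1\right)\right]
=\exp\left[\frac{\widetilde{\alpha}\Omega}{r}\frac{(r/\mu)^{k}}{k}\left(e^{\xi/\Omega}-1\right)\right].
\]
Next I would take the logarithm and let $\Omega\to\infty$. Using $e^{\xi/\Omega}-1=\xi/\Omega+O(\Omega^{-2})$, the exponent converges,
\[
\frac{\widetilde{\alpha}\Omega}{r}\frac{(r/\mu)^{k}}{k}\left(e^{\xi/\Omega}-1\right)
\xrightarrow[\Omega\rightarrow\infty]{}\xi\,\frac{\widetilde{\alpha}}{r}\frac{(r/\mu)^{k}}{k},
\]
so that $\E{e^{\xi c_{k}/\Omega}}\to\exp[\xi\,\widetilde{\alpha}(r/\mu)^{k}/(rk)]$, which is the moment generating function of the constant $\widetilde{\alpha}(r/\mu)^{k}/(rk)$. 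By the standard correspondence between pointwise convergence of moment generating functions near the origin (here $\xi<0$ suffices, the limit being a constant) and convergence in distribution (\citet[Chapter 5]{Billingsley2012}), this proves the claim.

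There is essentially no obstacle here: the whole argument is a one-line consequence of the already-established Poisson marginal. If one preferred to avoid the MGF continuity theorem, an equally short route is Chebyshev's inequality, since $\var{c_{k}/\Omega}=\lambda_{\Omega}/\Omega^{2}=\widetilde{\alpha}(r/\mu)^{k}/(rk\Omega)\to 0$ while $\E{c_{k}/\Omega}=\widetilde{\alpha}(r/\mu)^{k}/(rk)$ is constant, so $P(|c_{k}/\Omega-\widetilde{\alpha}(r/\mu)^{k}/(rk)|>\delta)\to 0$ for every $\delta>0$, which is exactly the criterion in Equation~(\ref{eq:def convergence distrib}). The only mild point worth recording is that this establishes convergence of the one-dimensional marginal $c_{k}/\Omega$ for each fixed $k$; joint convergence of the vector $(c_{1}/\Omega,c_{2}/\Omega,\dots)$, if needed, follows from the product form of $P(\vec{c})$ in Equation~(\ref{eq:mod1 P vec c-1}), which shows that any finite collection $c_{k_{1}},\dots,c_{k_{m}}$ consists of independent Poisson variables, so the same MGF computation applies coordinatewise.
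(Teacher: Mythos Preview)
Your proof is correct but takes a different route from the paper. You exploit directly the already-derived Poisson marginal for $c_{k}$ (Equation~(\ref{eq:mod1 Pc_k})) and compute the MGF of the rescaled variable explicitly; the paper instead starts from the representation $c_{k}=\sum_{i=1}^{C}\ind{n_{i},k}$, conditions on $C$ using the iid structure of the $n_{i}$, and reduces the MGF of $c_{k}/\Omega$ to an expression involving $C/\Omega$, finally invoking the convergence of $C/\Omega$ established in the preceding proposition. Your argument is shorter and self-contained (it does not need Proposition~2), and the Chebyshev alternative you mention is even more elementary. The paper's approach, while more circuitous here, has the virtue of illustrating a technique that carries over to the BDIM and BDICC settings, where the marginal law of $c_{k}$ is not available in closed form but the decomposition through $C$ and the $n_{i}$ still works.
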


\begin{proof}
For any vector $\vec{c}$ and $k\geq1$, we have that
\[
c_{k}=\sum_{i=1}^{C}\ind{n_{i},k}.
\]
Consider the moment generating function of the random variable
$c_{k}$. For any $\xi<0$, we have
\[
\E{e^{\xi c_{k}/\Omega}}=\E{\exp
\left(\frac{\xi}{\Omega}\sum_{i=1}^{C}\ind{n_{i},k}\right)}.
\]
\added{Since} $n_{i}$ are identical and independently distributed
and independent of $C$, and since their distributions do
not depend on the parameter $\Omega$, it follows that

\begin{align*}
\E{e^{\xi c_{k}/\Omega}} & 
=\E{\left(\E{\exp\left(\frac{\xi}{\Omega}\ind{n_{1},k}\right)}\right)^{C}}\\
 & =\E{\left(e^{\xi/\Omega} P(n_{1}=k) +
\left(1-P(n_{1}=k)\right)\right)^{C}}\\
 & =\E{\left(\left(e^{\xi/\Omega}-1\right)P(n_{1}=k)+1\right)^{C}}.
\end{align*}
Since the probability distribution of $n_{1}$ is known, we have

\begin{align*}
\E{e^{\xi c_{k}/\Omega}} & =\E{\left(1-{1\over k}\left({r \over \mu}\right)^{k}\,
\frac{(e^{\xi/\Omega}-1)}{\log(1-r/\mu)}\right)^{C}}.
\end{align*}
Note that for any real $A$,
\begin{align*}
C\log\left[1-\left(e^{\xi/\Omega}-1\right)A\right] & 
\simlim{\Omega\to\infty}-C\left(e^{\xi/\Omega}-1\right)A,\\
 & \simlim{\Omega\to\infty}-{C\over \Omega}\xi A.
\end{align*}
Considering the exponential of this expression, we have
\[
\E{e^{\xi c_{k}/\Omega}}=\E{\exp\left[-{C\over \Omega}\frac{(r/\mu)^{k}}{k}\,
\frac{\xi}{\log(1-r/\mu)}\right]}.
\]
Finally, since we have already shown that $C/\Omega$ converges
in distribution (Proposition 2 above), we find
\[
\lim_{\Omega\rightarrow\infty}\E{e^{\xi c_{k}/\Omega}}=
\exp\left(\xi\frac{\widetilde{\alpha}}{r}\,\frac{(r/\mu)^{k}}{k}\right).
\]
\end{proof}

\begin{prop}
The Shannon's Entropy $H$ converges in distribution 
\added{as} 
\[
{H\over \log\Omega}\xrightarrow[\Omega\rightarrow\infty]{\mathbb{{\cal D}}}1.
\]
\end{prop}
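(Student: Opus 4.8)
The plan is to reduce the statement to a single algebraic identity for Shannon's entropy and then feed in the convergences already established in this subsection. Since $N=\sum_{k\geq1}kc_{k}$, one has on the event $\{N\geq1\}$
\[
H=-\sum_{k\geq1}c_{k}\frac{k}{N}\log\!\left(\frac{k}{N}\right)=\log N-\frac{1}{N}\sum_{k\geq1}kc_{k}\log k,
\]
(with $H=0$ on $\{N=0\}$, where the sum over species is empty), so that
\[
\frac{H}{\log\Omega}=\frac{\log N}{\log\Omega}-\frac{1}{N\log\Omega}\sum_{k\geq1}kc_{k}\log k .
\]
Because the claimed limit $1$ is deterministic, convergence in distribution is equivalent to convergence in probability (cf.\ Equation~(\ref{eq:def convergence distrib})), so it suffices to show the first term tends to $1$ and the second to $0$ in probability.

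For the first term I would write $\log N=\log\Omega+\log(N/\Omega)$ on $\{N\geq1\}$ and invoke Proposition~\ref{prop:Model1_scaled_N}: $N/\Omega\to\ell:=\widetilde{\alpha}/(\mu-r)>0$ in probability, while $\P{N=0}=(1-r/\mu)^{\widetilde{\alpha}\Omega/r}\to0$. Hence $\log(N/\Omega)$ is bounded in probability, so $\log(N/\Omega)/\log\Omega\to0$ and thus $\log N/\log\Omega\to1$.

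The substantive step is the tail term. Set $X_{\Omega}:=\sum_{k\geq1}kc_{k}\log k\geq0$ (every summand is nonnegative since $\log k\geq0$). From $\E{c_{k}}=(\widetilde{\alpha}\Omega/r)(r/\mu)^{k}/k$ (Equation~(\ref{Eck})) and Tonelli,
\[
\E{X_{\Omega}}=\frac{\widetilde{\alpha}\Omega}{r}\sum_{k\geq2}\left(\frac{r}{\mu}\right)^{k}\log k=:C_{0}\Omega,\qquad C_{0}<\infty,
\]
the series converging because $r/\mu<1$. Markov's inequality then gives, for every $\epsilon>0$, $\P{X_{\Omega}>\epsilon\,\Omega\log\Omega}\leq C_{0}/(\epsilon\log\Omega)\to0$; combined with $\P{N<\tfrac12\ell\,\Omega}\to0$, on the complement of these two vanishing-probability events one has $X_{\Omega}/(N\log\Omega)<2\epsilon/\ell$. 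Since $\epsilon$ is arbitrary, $X_{\Omega}/(N\log\Omega)\to0$ in probability, and combining with the first term yields $H/\log\Omega\to1$ in probability, hence in distribution.

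The main obstacle is precisely this control of $X_{\Omega}/(N\log\Omega)$: the mean bound only gives $X_{\Omega}/\Omega=O(1)$ in probability, not $o(\log\Omega)$ pointwise, so the argument must divide by the extra $\log\Omega$, pair the summable-mean bound on $X_{\Omega}$ with the lower bound $N\gtrsim\ell\,\Omega$ from the concentration of $N/\Omega$, and take a union bound over the two good events. The same scheme transfers verbatim to the BDIM and BDICC entropies once the analogous inputs are in hand: $\E{c_{k}}=O(\Omega)$ with geometric decay in $k$, and $\P{N<c\,\Omega}\to0$ for a suitable $c>0$.
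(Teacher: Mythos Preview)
Your proof is correct and is in fact more careful than the paper's. The paper simply rewrites
\[
\frac{H}{\log\Omega}=\sum_{k\geq1}k\,\frac{c_{k}}{\Omega}\,\frac{\Omega}{N}\,\frac{\log N-\log k}{\log\Omega}
\]
and then passes to the limit term by term, using the established limits $c_{k}/\Omega\to(\widetilde{\alpha}/r)(r/\mu)^{k}/k$, $\Omega/N\to(\mu-r)/\widetilde{\alpha}$, and $(\log N-\log k)/\log\Omega\to1$ for each fixed $k$; the resulting sum is the geometric series $\frac{\mu-r}{r}\sum_{k\geq1}(r/\mu)^{k}=1$. This is quick but leaves the interchange of the $\Omega\to\infty$ limit with the infinite sum over $k$ unjustified. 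Your decomposition $H=\log N-N^{-1}\sum_{k}kc_{k}\log k$ sidesteps that issue entirely: the dominant $\log\Omega$ contribution sits in the single term $\log N$, and the remainder is a nonnegative random variable whose mean you compute exactly from $\E{c_{k}}$ and control by Markov's inequality together with the lower bound on $N$. The cost is a slightly longer argument; the gain is a fully rigorous proof that, as you observe, ports directly to the BDIM and BDICC cases once the analogous bounds on $\E{c_{k}}$ and the concentration of $N/\Omega$ are in place.
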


\begin{proof}
\added{Using} the definition of $H$,
\[
{H\over \log\Omega}=\sum_{k=1}^{\infty}k\,\frac{c_{k}}{\Omega}\,
\frac{\Omega}{N}\frac{\log N-\log k}{\log\Omega},
\]
where $c_{k}/\Omega$ and $N/\Omega$ converge
in distribution to known constants, we find
\[
{H\over \log\Omega}\xrightarrow[\Omega\rightarrow\infty]{\mathbb{{\cal D}}}
\frac{\mu-r}{r}\sum_{k=1}^{\infty}\left({r\over \mu}\right)^{k}=1
\]
\end{proof}
\begin{prop}
The Simpson's diversity index $S$ converges in distribution \added{as}
\[
S \xrightarrow[\Omega\rightarrow\infty]
{\mathbb{{\cal D}}} 1.
%\frac{\mu}{\widetilde{\alpha}}.
\]
\end{prop}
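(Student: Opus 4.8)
The plan is to rewrite Simpson's index in the form $S = 1 - \sum_{i=1}^{C}(n_i/N)^2 = 1 - N^{-2}\sum_{k\geq1}k^2 c_k$ (both expressions appear in Equation~(\ref{eq:def diversity indices})), so that it suffices to show that the random variable $T_\Omega \equiv N^{-2}\sum_{k\geq1}k^2 c_k$ converges in probability (equivalently, in distribution) to $0$. In contrast to the entropy, here there is no diverging normalization: the point is that the population spreads over order-$\Omega$ distinct species, each of bounded typical size, so that the sum of squared frequencies is asymptotically negligible.

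The main step is a first-moment bound on the numerator $\sum_{k\geq1}k^2 c_k$. From $\E{c_k}=\frac{\alpha}{r}\frac1k\left(\frac r\mu\right)^k$ (Equation~(\ref{Eck})) and Tonelli's theorem,
\[
\E{\sum_{k\geq1}k^2 c_k}=\frac{\alpha}{r}\sum_{k\geq1}k\left(\frac r\mu\right)^k=\frac{\alpha}{r}\,\frac{r/\mu}{(1-r/\mu)^2},
\]
which is finite precisely because $r/\mu<1$, and which, upon substituting $\alpha=\widetilde\alpha\,\Omega$, grows only linearly in $\Omega$. Markov's inequality then gives, for every $\epsilon>0$,
\[
P\!\left(\sum_{k\geq1}k^2 c_k>\epsilon\,\Omega^2\right)\leq\frac{\widetilde\alpha}{\epsilon\,r}\,\frac{r/\mu}{(1-r/\mu)^2}\,\frac1\Omega\xrightarrow[\Omega\to\infty]{}0 .
\]

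Next I would control the denominator from below using the already-established convergence $N/\Omega\xrightarrow[\Omega\to\infty]{{\cal D}}\widetilde\alpha/(\mu-r)=:\ell>0$ (Proposition~\ref{prop:Model1_scaled_N}). Fix $\delta>0$ and pick $\eta\in(0,\ell)$. On the event $\{N\geq(\ell-\eta)\Omega\}$ one has $T_\Omega\leq(\ell-\eta)^{-2}\Omega^{-2}\sum_{k\geq1}k^2 c_k$, hence
\[
P(T_\Omega>\delta)\leq P\!\big(N<(\ell-\eta)\Omega\big)+P\!\left(\sum_{k\geq1}k^2 c_k>\delta(\ell-\eta)^2\Omega^2\right).
\]
The first term tends to $0$ by the definition of convergence in distribution, Equation~(\ref{eq:def convergence distrib}), and the second tends to $0$ by the Markov bound above with $\epsilon=\delta(\ell-\eta)^2$. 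Therefore $T_\Omega\to0$ in probability, i.e.\ $S\xrightarrow[\Omega\to\infty]{{\cal D}}1$.

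The only real subtlety is the handling of the infinite sum $\sum_{k\geq1}k^2 c_k$: one must check that it has a finite mean -- which holds exactly because of the geometric factor $(r/\mu)^k$ with $r/\mu<1$ (and this is also why the analogous argument would break down at the critical point $r=\mu$) -- and then observe that no uniform-in-$k$ estimate is needed, since a single first-moment inequality already dominates the entire tail. Everything else is the same $\Omega^{-1}$-type bound already used for $N/\Omega$ and $C/\Omega$.
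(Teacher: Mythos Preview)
Your proof is correct and, in fact, more carefully argued than the paper's. The approaches differ: the paper rewrites $S=1-\Omega^{-1}\sum_{k\geq1}(c_k/\Omega)(k/(N/\Omega))^2$ and then substitutes the individual limits $c_k/\Omega\to(\widetilde\alpha/r)(r/\mu)^k/k$ and $N/\Omega\to\widetilde\alpha/(\mu-r)$ term by term inside the infinite sum, obtaining a residual of order $1/\Omega$. This relies on the convergence of each $c_k/\Omega$ separately and leaves the interchange of limit and infinite summation implicit. Your route bypasses the individual $c_k/\Omega$ convergences entirely: a single first-moment computation $\E{\sum_{k}k^2c_k}=O(\Omega)$ plus Markov's inequality controls the whole numerator at once, and only the convergence of $N/\Omega$ (Proposition~\ref{prop:Model1_scaled_N}) is needed for the denominator. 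This is both more elementary---it uses only Equation~(\ref{Eck}) and avoids Proposition~3---and cleaner about the tail of the sum, which the paper's termwise argument does not explicitly address.
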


\begin{proof}
By the definition of $S$ (Equation~\ref{eq:def diversity indices})
\[
S=1 - \frac{1}{\Omega}\sum_{k=1}^{\infty}\frac{c_{k}}{\Omega}\left(\frac{k}{N/\Omega}\right)^{2},
\]
and \added{since} $c_{k}/\Omega$ and $N/\Omega$ converge in distribution to known 
constants, we find
\[
S \xrightarrow[\Omega\rightarrow\infty]{\mathbb{{\cal D}}}
 - \frac{1}{\Omega}
\sum_{k=1}^{\infty}\frac{\widetilde{\alpha}}{r}\,k\left(\frac{r}{\mu}\right)^{k}
\left(\frac{\mu-r}{\widetilde{\alpha}}\right)^{2}
= 1 - \frac{\left(\mu-r\right)^{2}}{\Omega r\widetilde{\alpha}}\sum_{k=1}^{\infty}k\,\left(\frac{r}{\mu}\right)^{k}.
\]
One can then recognize the power series identity
\[
\frac{r/\mu}{(1-r/\mu)^{2}}=\sum_{k=1}^{\infty}k\,\left(\frac{r}{\mu}\right)^{k}
\]
and hence show that the second term vanishes as $\Omega \to \infty$
and deduce the result $S \xrightarrow[\Omega\rightarrow\infty]
{\mathbb{{\cal D}}} 1$.

\end{proof}

\section{BDI model with mutation (BDIM)}

\begin{figure}
\includegraphics[width=1\textwidth]{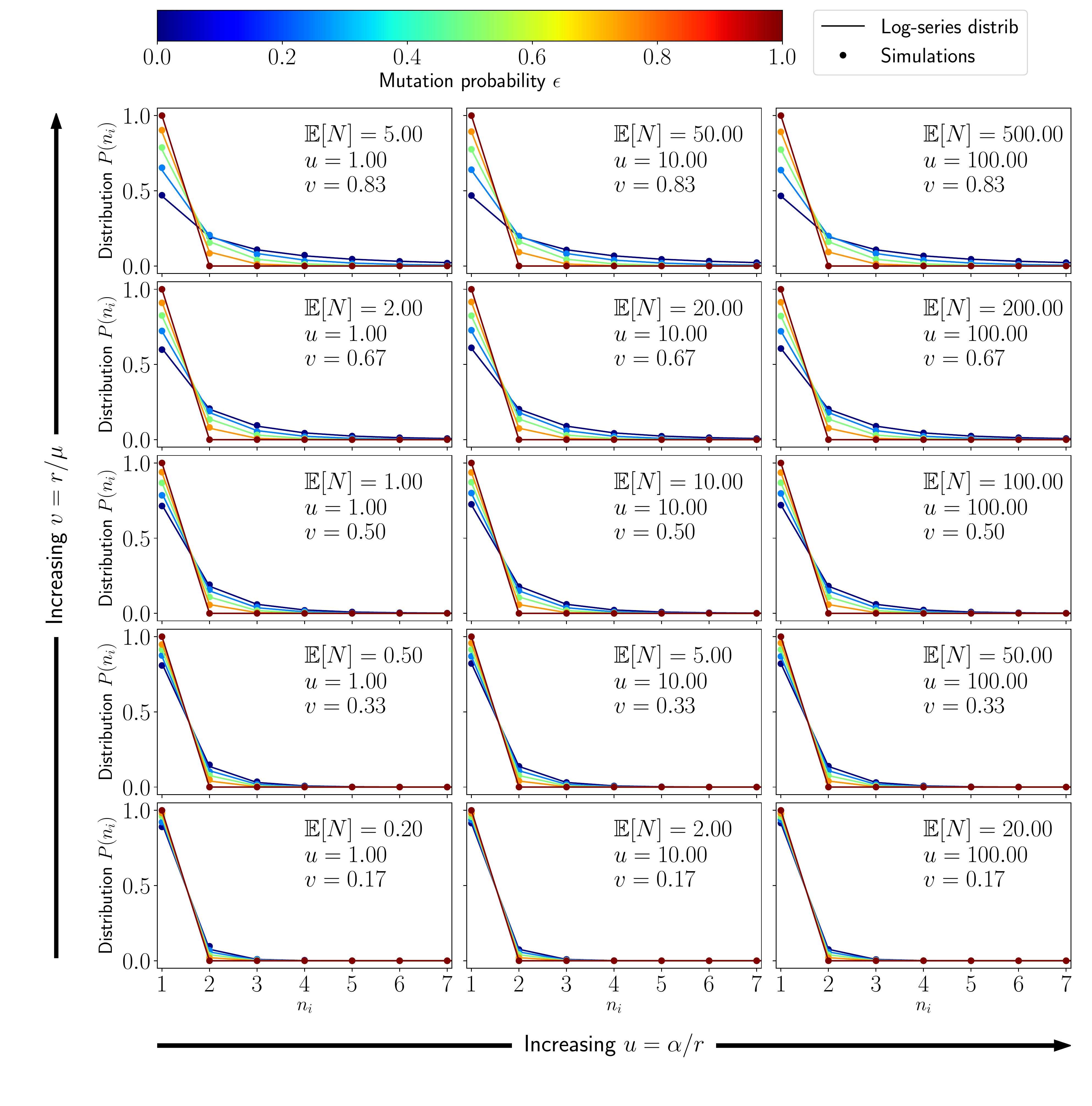}
\caption{\label{fig:S-mod2: Sim distrib n_i}Distribution of the number
  of individuals in one species $n_{i}$ \added{under} different
  parameter \added{choices.} \added{Dots represent simulations for
    various values of $u=\alpha/r$, $v=r/\mu$, ($r=1$) and $\epsilon$;
    solid lines depict logarithmic distributions with parameter
    $r(1-\epsilon)/\mu$.} As expected, the logarithmic distributions
  match the simulations $n_{i}$, and the distributions of $n_{i}$ do
  not depend on $u$.}
\end{figure}

\begin{figure}
\includegraphics[width=1\textwidth]{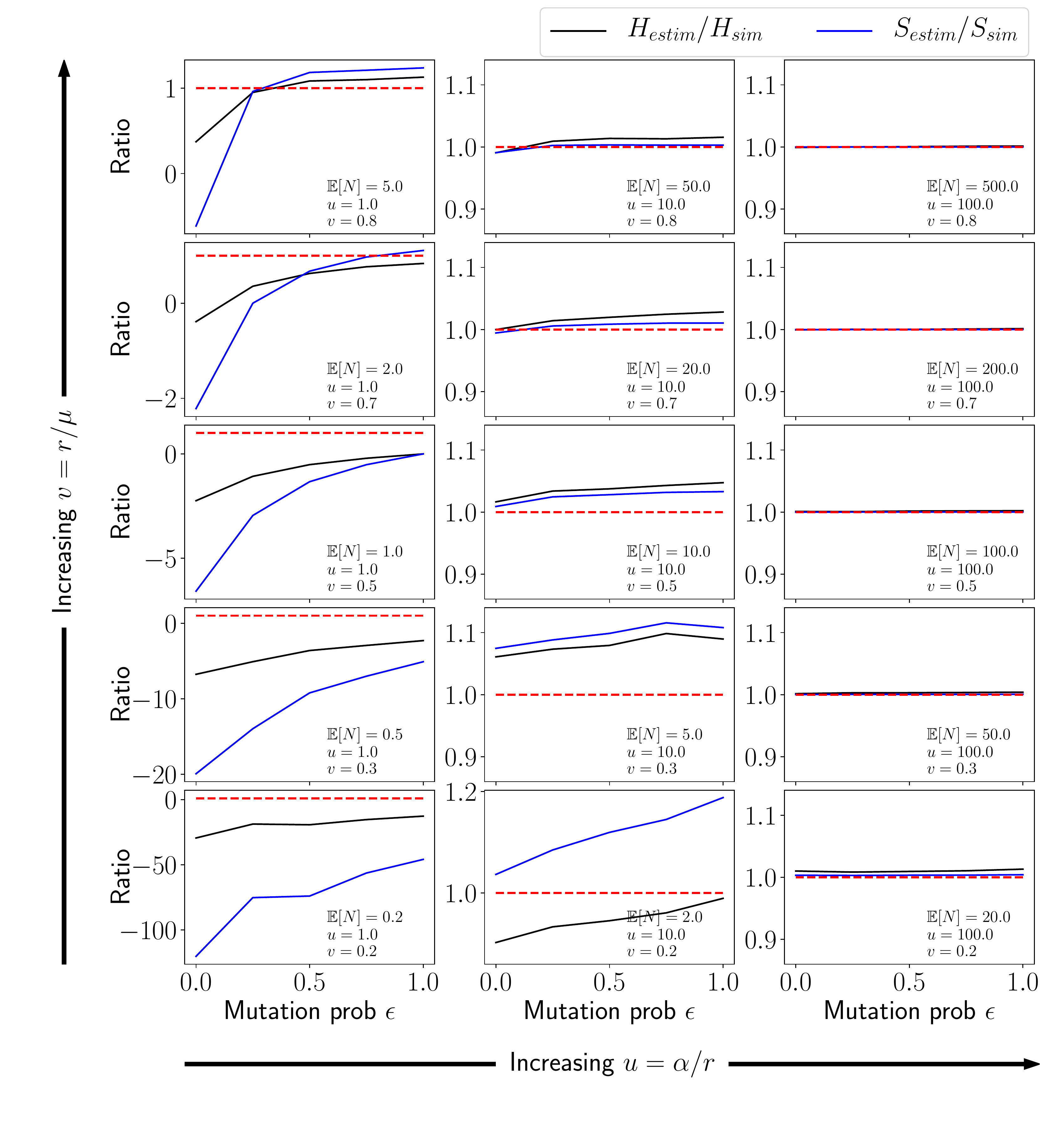}
\caption{\label{fig:S-mod2: Estim HS} Accuracy of Shannon's entropy
  and Simpson's index (as defined in Equation~(\ref{eq:mod2 EH ES})).
  We plot the ratio of the estimates of Shannon's entropy and
  Simpson's index and their respective values measured via simulation
  for different $u=\alpha/r$, $v=r/\mu$ (by taking $r=1$), and
  different $\epsilon$. The estimates become more accurate as $\E N$
  increases: the error is below $10\%$ for any parameters $u$, $v$,
  $\epsilon$ such that $\E N$ is larger than $5$.}
\end{figure}

\subsection{\label{subsec:S-mod2 distrib n_i}Distribution of the number of individual
in one species}

We propose an argument for a Log-series distribution of any species
\[
\pi_{k}=P(n_{i}=k)
\]
when all species are independent of each other. There are several ways
to interpret $\pi_{k}$. First consider the explicit dynamics of each
species. Denote by $m_{q}(t)$ the number of individuals of
species $q$ at time $t$ and define $a_{q}$ as the time of arrival (by
convention, we order the species such as $a_{0}=0<a_{1}<a_{2}<\ldots$)
and $d_{q}$ its ``lifespan'', i.e. the species will be extinct at time
$a_{q}+d_{q}$ (see the example \deleted{on}\added{in} Figure~\ref{Fig10}(a)).  Note that the
index $q$ indicates the order of arrival (and not the species
identity index $i$ used in the main article), and that the
distribution of the \added{times} $a_{q}$ is not specified and can be adapted to any
rate of species creation (either by immigration or by mutation). 
The evolution of each species is independent of each other, and each
of them defines an identically distributed birth-death process
characterized by the following transitions

\begin{equation}
\begin{cases}
m_{q}\rightarrow m_{q}+1 & \text{at rate }m_{q}\,r(1-\epsilon),\\
m_{q}\rightarrow m_{q}-1 & \text{at rate }m_{q}\,\mu.
\end{cases}\label{eq:S-mod2 n dynamic-2}
\end{equation}
Due to the $r<\mu$ assumption, this process will become extinct almost
surely \citep[Chapter 2]{Bansaye2015a} and the lifespan $d_{q}$ of
each species is finite.

\begin{figure}
%\subfloat[\label{fig:S_mod2
%    example}]{\includegraphics[height=6cm,page=1]{figures/Figure_20180321_BDIM_principle_proof}}~~\subfloat[\label{fig:S_mod2
%    example-1}]{\includegraphics[height=6cm,page=2]{figures/Figure_20180321_BDIM_principle_proof}}
\begin{center}
\includegraphics[width=5in]{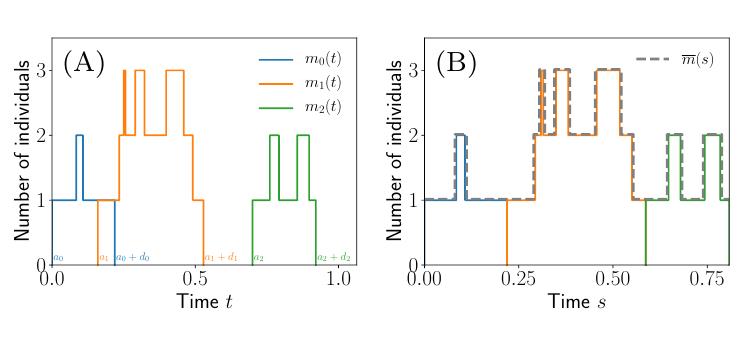}\vspace{-4mm}
\caption{(a) A representative trajectory of three immigrated species.
  The $q$-th species is introduced at time $a_{q}$ and extinguishes at
  time $a_{q}+d_{q}$. (b) Construction of the process $\overline{m}$
  (defined in Equation~(\ref{eq:S_mod2 define overline m})) by stacking and 
  concatenating the trajectories of each species $\left(m_{q}\right)_{q\in\protect\N}$.
\label{Fig10}}
\end{center}
\end{figure}

In the main article, we interpreted $\pi_{k}$ as the
number of individuals in a given species at steady state, that is to say,
we considered the $T\rightarrow\infty$ limit
\[
\pi_{k}=\lim_{T\rightarrow\infty}\P{m_{J_{T}}(T)=k}
\]
where $J_{T}$ is the index of a randomly sampled species among those
that \deleted{existing}\added{exist} at time $T$; \deleted{that is to
  say}\added{i.e.}, $J_{T}$ is uniformly chosen among all the species
$q$ such that $a_{q}<T<a_{q}+d_{q}$.

However, there is another way to interpret $\pi_{k}$.  Consider all
species that \emph{exist or have existed} up to \deleted{the} time $T$ and
then randomly select one of them,  species $I_{T}$.  The number of 
individuals in species
$I_{T}$ at a\deleted{n} randomly chosen time $\tau_{I_{T}}$ between the
introduction of the species (at time $a_{I_{T}}$) and the extinction
(at time $a_{I_{T}}+d_{I_{T}}$) is denoted $m_{I_{T}}$. In this
picture, we can characterize $\pi_{k}$ according to
\begin{equation}
\pi_{k}=\lim_{T\rightarrow\infty}\P{m_{I_{T}}(\tau_{I_{T}})=k}.
\label{eq:S_mod2 pi_k}
\end{equation}

The main difference between the two approaches is that, in the first
case, we sample among the species that exist\deleted{s} at a precise time $T$ 
before taking $T\to \infty$, while in the second case,
we sample among all the species that existed before time $T$ 
(before taking $T\to \infty$).

For a fixed time $T$, the last species introduced in the system
is given by
\[
Q_{T}=\argmax_{q\in\N}\left(a_{q}<T\right).
\]
All species that exist\deleted{s} or have existed before time $T$ are in
the set $\left\{0,\ldots,Q_{T}\right\} $. Note that \deleted{as}\added{since} $a_{q}$ are
increasing in $q$, $\lim_{T\rightarrow\infty} Q_{T}=\infty$.  As per
Equation~(\ref{eq:S_mod2 pi_k}), we have to sample one species among
the set $\left\{0,\ldots,Q_{T}\right\} $.  One key point is that the
random selection is not uniform: there is a higher chance of selecting
species with longer lifespans.  If $I_{T}$ is the index of the
randomly chosen species, we can write

\[
\P{I_{T}=q}=\ind{q\leq Q_{T}}\frac{d_{q}}{\sum_{j=0}^{Q_{T}}d_{j}}.
\]
The first term $\ind{q\leq Q_{T}}$ ensures that the species
$q$ exists before time $T$ while the second term proportionally weights 
the probability of sampling according to their lifespans. 
Conditioned on species  $I_{T}$ having been sampled, we then 
randomly chose a time $\tau_{I_{T}}$ uniformly distributed between $a_{I_{T}}$
and $a_{I_{T}}+d_{I_{T}}$.

\begin{prop}
The limiting distribution becomes
\[
\pi_{k}=\lim_{T\rightarrow\infty}\P{m_{I_{T}}(\tau_{I_{T}})=k}=
\frac{1}{\log\left(1-\frac{r(1-\epsilon)}{\mu}\right)}
\frac{1}{k}\left(\frac{r(1-\epsilon)}{\mu}\right)^{k}.
\]
\end{prop}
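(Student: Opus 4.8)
The plan is to recognize $\pi_{k}$ as the long-run fraction of time the concatenated process $\overline{m}$ spends at level $k$, and then to read off that fraction from the stationary distribution of $\overline{m}$, which turns out to be reversible and is solved by a one-line detailed-balance recursion.

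First I would unwind the two-stage sampling. Writing $O_{k}^{(q)}=\int_{0}^{d_{q}}\ind{m_{q}(s),k}\,{\rm d}s$ for the time species $q$ spends with exactly $k$ individuals, choosing $q\le Q_{T}$ with probability proportional to $d_{q}$ and then a time uniform on $[a_{q},a_{q}+d_{q}]$ collapses to
\[
\P{m_{I_{T}}(\tau_{I_{T}})=k}=\frac{\sum_{q=0}^{Q_{T}}O_{k}^{(q)}}{\sum_{q=0}^{Q_{T}}d_{q}}.
\]
The trajectories $\left(m_{q}(\cdot)\right)_{q\in\N}$ are i.i.d.\ copies of the subcritical linear birth-death process of Equation~(\ref{eq:S-mod2 n dynamic-2}) started from a single individual, and $Q_{T}\to\infty$ almost surely because the $a_{q}$ increase to infinity; so the strong law of large numbers gives $\pi_{k}=\E{O_{k}}/\E{d}$. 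Equivalently, $\pi_{k}$ is the asymptotic occupation fraction at level $k$ of the process $\overline{m}$ obtained by stacking and concatenating the $m_{q}$.

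Next I would describe $\overline{m}$ as a genuine continuous-time Markov chain on $\{1,2,\dots\}$: from state $k\ge2$ it jumps to $k+1$ at rate $kr(1-\epsilon)$ and to $k-1$ at rate $k\mu$, exactly as a single species does, while at state $1$ the ``death'' that would extinguish a species is replaced by an instantaneous restart at $1$; merging the resulting geometric number of exponential holding periods shows that from state $1$ the chain simply moves to $2$ at rate $r(1-\epsilon)$, with no downward transition. This chain is irreducible and reversible, so I would solve the detailed-balance relations $\pi_{1}r(1-\epsilon)=2\mu\,\pi_{2}$ and $kr(1-\epsilon)\,\pi_{k}=(k+1)\mu\,\pi_{k+1}$ for $k\ge2$, obtaining $\pi_{k}=\pi_{1}\,\rho^{k-1}/k$ with $\rho\equiv r(1-\epsilon)/\mu<1$ (the bound comes from $r<\mu$). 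Summability of $\sum_{k\ge1}\rho^{k-1}/k=-\rho^{-1}\log(1-\rho)$ then fixes $\pi_{1}=-\rho/\log(1-\rho)$, giving
\[
\pi_{k}=\frac{-1}{\log\left(1-\frac{r(1-\epsilon)}{\mu}\right)}\,\frac{1}{k}\left(\frac{r(1-\epsilon)}{\mu}\right)^{k},
\]
the logarithmic-series distribution with parameter $r(1-\epsilon)/\mu$.

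The main obstacle is making the first step rigorous, in particular guaranteeing $\E{d}<\infty$ so that the SLLN / renewal-reward argument is valid and so that the occupation-fraction interpretation is legitimate. I would handle this two ways that reinforce each other: finiteness of the extinction time $\E{d}$ is a standard property of subcritical ($r(1-\epsilon)<\mu$) linear birth-death processes, and it also follows a posteriori from the summability of the reversible invariant measure constructed above, which makes $\overline{m}$ positive recurrent and licenses the ergodic identification of the time-average occupation with $\pi_{k}$. The remaining work is bookkeeping: checking that the incomplete (current) species contributes negligibly as $T\to\infty$, that convergence holds along the random index $Q_{T}$, and that the merged-holding-time computation at state $1$ is correct; after that, everything reduces to the detailed-balance recursion.
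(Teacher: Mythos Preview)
Your proposal is correct and follows essentially the same route as the paper: rewrite the two-stage sampling as an occupation-time ratio, identify this with the ergodic average of the concatenated process $\overline{m}$, and compute the stationary law of $\overline{m}$ by detailed balance to obtain the log-series distribution. You are in fact more careful than the paper on two points---the merged-holding-time analysis at state $1$ (the paper's transition display with $\ind{\overline{m}>0}$ is effectively a typo for $\ind{\overline{m}>1}$) and the finiteness of $\E{d}$ needed for the renewal-reward/SLLN step---though note that your displayed identity for $\P{m_{I_{T}}(\tau_{I_{T}})=k}$ is really the conditional probability given the trajectories, and the unconditional statement requires the expectation the paper writes.
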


\begin{proof}
By summing over all possible species $q$, we can write
\begin{align*}
\P{m_{I_{T}}(\tau_{I_{T}})=k} & =\sum_{q\in\N}\E{\ind{q\leq Q_{T}}
\frac{d_{q}}{\sum_{j=0}^{Q_{T}}d_{j}}\ind{m_{q}(\tau_{q}),k}}\\
 & =\E{\sum_{q=0}^{Q_{T}}\frac{d_{q}}{\sum_{j=0}^{Q_{T}}d_{j}}
\frac{1}{d_{q}}\int_{a_{q}}^{a_{q}+d_{q}}\ind{m_{q}(t),k}\,{\rm d}t}\\
 & =\E{\frac{\sum_{q=0}^{Q_{T}}\int_{a_{q}}^{a_{q}+d_{q}}\ind{m_{q}(t),k}\,{\rm d}t}{\sum_{j=0}^{Q_{T}}d_{j}}}
\end{align*}
Next, consider the process $\overline{m}(s)$ defined as
\begin{equation}
\overline{m}(t)=m_{\nu_{t}}\left(t-\overline{d}_{\nu(t)}+a_{\nu(t)}\right)\label{eq:S_mod2 define overline m}
\end{equation}
with
\[
\overline{d}_{k}=\sum_{q=1}^{k-1}d_{q}\quad\text{and}\quad\nu(t)=\argmax_{q}\left\{ \sum_{j=0}^{q}d_{j}<t\right\} .
\]
The process $\overline{m}$ is simply the stacking of all the processes
$m_{q}$ in the sense that the process $\overline{m}(t)$ for $t$
between $\overline{d}_{q}$ and $\overline{d}_{q+1}$ will be equal to
the process $m_{q}(s)$ for $s=t-\overline{d}_{q}+a_{q}$ between
$a_{q}$ and its extinction time $a_{q}+d_{q}$ (see the example on
Figure~\ref{Fig10}(b)). With this stacked process,

\begin{align*}
\P{m_{I_{T}}(\tau_{I_{T}})=k} & =\E{\frac{\int_{0}^{a_{\delta_{T}}+d_{\delta_{T}}}
\ind{\overline{m}(t),k}\,{\rm d}t}{\overline{d}_{Q_{T}}}}.
\end{align*}
By ergodicity of the process $\overline{m}$, we have
\[
\lim_{T\rightarrow\infty}\P{m_{I_{T}}(\tau_{I_{T}})=k}=\lim_{T\rightarrow\infty}\P{\overline{m}(T)=k}.
\]
Finally, we have to determine the steady state of the process $\overline{m}$.
Since the transitions of the process $\overline{m}$ are a simple birth-death process
\begin{equation}
\begin{cases}
\overline{m}\rightarrow\overline{m}+1 & \text{at rate }\quad\overline{m}\,r(1-\epsilon)\\
\overline{m}\rightarrow\overline{m}-1 & \text{at rate }\quad\overline{m}\,\mu
\ind{\overline{m}>0}.
\end{cases},\label{eq:S-mod2 n dynamic-1-1}
\end{equation}
we have that its equilibrium distribution is a logarithmic series
distribution \deleted{of}\added{with} parameter $p\equiv
r(1-\epsilon)/\mu$ (by imposing equations of detailed balance).
\end{proof}

\subsection{\label{subsec:S-mod2 Moments of C}Moments of $C$}

The third relation of Equation~(\ref{eq:relations NCc}) 
yields the following expression for the moment generating function of $N$:
\[
\E{e^{\xi N}}=\E{\prod_{i=1}^{C}e^{\xi n_{i}}},
\]
for any $\xi<0$. Since all the $\left(n_{i}\right)_{i\leq C}$ are
identical and independently distributed and independent of $C$, we
have
\begin{align}
\E{e^{\xi N}} & =\E{\E{e^{\xi n_{1}}}^{C}} = 
\E{\left(\frac{\log\left(1-pe^{\xi}\right)}{\log\left(1-p\right)}\right)^{C}}.
\label{MGC}
\end{align}
\deleted{since} Equation~(\ref{eq:mod2 Pni}) shows that the distribution over
$n_{1}$ is a log-series distribution \deleted{of}\added{with} parameter
$p=r\left(1-\epsilon\right)$.  By redefining the variable $\xi'$ such
that $e^{\xi'}:=\log\left(1-pe^{\xi}\right)/\log\left(1-p\right)$ and
eliminating $\xi$ for $\xi'$, Equation~(\ref{MGC}) becomes an
expression for the moment generating function of $C$,
\[
\E{e^{\xi'C}}=\left(\frac{1-r/\mu}{1-\frac{r}{\mu}
\frac{1-\left(1-p\right)^{e^{\xi'}}}{p}}\right)^{\alpha/r}.
\]
\added{By differentiating this expression, we can determine the second moment of $C$:}

\begin{align*}
\E{C^{2}} & =\lim_{\xi'\rightarrow0}\frac{{\rm d}^{2}}{{\rm d}\xi'^{2}}\E{e^{\xi'C}}=
\E{C}\left[1+\log\left(1-p\right)+ \left(1+\frac{r}{\alpha}\right)\E{C}\right],
\end{align*}
which yields the expression for $\var{C}$ in Equation~(\ref{eq:mod2 VC}).

\section{BDI model with carrying capacity (BDICC)}

\subsection{\label{subsec:S-mod3-Distrib N}Steady state distribution of $\vec{c}$}

To determine $P(\vec{c})$, the probability of occurrence of the
species-count state $\vec{c}$, first consider a finite
$K=\argmax_{i}(c_{i}>0)$. As explained in the main text, if the
system is reversible, one instance of Equation~(\ref{eq:mod3 balance_ck}) 
is
\[
\mu(N)c_{K}K P(\vec{c})=
(K-1)\left(c_{K-1}+1\right)r P(c_{1},\ldots,c_{K-1}+1,c_{K}-1,\vec{0}).
\]
Recursively unwinding this relationship, we find
\begin{align*}
P(\vec{c}) & =P(c_{1},\ldots,c_{K-1}+1,c_{K}-1,\vec{0})
\frac{r}{\mu(N)}\frac{K-1}{K}\frac{c_{K-1}+1}{c_{K}},\\
P(\vec{c}) & =P(c_{1},\ldots,c_{K-1}+c_{K},0,\vec{0})
\frac{r^{c_{K}}}{\mu(N)\ldots\mu(N-c_{K}+1)}\left(\frac{K-1}{K}\right)^{c_{K}}
\frac{\left(c_{K-1}+c_{K}\right)!}{c_{K}!c_{K-1}!},\\
P(\vec{c}) & =P(C,\vec{0})\frac{r^{N-C}}{\mu(N)\ldots\mu(N-(K-1)c_{K}-\ldots-c_{2}+1)}
\prod_{i=1}^{K-1}\prod_{j=i+1}^{K}\left(\frac{i}{i+1}\right)^{c_{j}}
\frac{C!}{\prod_{i=1}^{K}c_{i}!},\\
P(\vec{c}) & =P(C,\vec{0})\frac{r^{N-C}}{\prod_{n=1}^{N-C}\mu(N-n+1)}
\frac{C!}{\prod_{i=1}^{K}i^{c_{i}}c_{i}!}.
\end{align*}
After applying Equation~(\ref{eq:mod3 balance2_ck}), we have
by recursion
\[
P(C,0,\ldots)=\frac{\alpha}{\mu(C)}\frac{1}{C}P(C-1,0,\ldots)
=\frac{\alpha^{C}}{C!\prod_{i=1}^{C}\mu(i)}P(0,\ldots),
\]
and 
\[
P(\vec{c})=P(0,\ldots)\left(\frac{\alpha}{r}\right)^{C}
\frac{r^{N}}{\prod_{n=1}^{N}\mu(n)}\frac{1}{\prod_{i=1}^{K}i^{c_{i}}c_{i}!}.
\]
Since the state $\vec{c}=\vec{0}$ uniquely corresponds to the state
$N=0$ and the \added{above} expression holds for $K$ arbitrarily
large, it follows that

\begin{equation}
P(\vec{c})=\frac{1}{Z_{\alpha,r,\mu}}\left(\frac{\alpha}{r}\right)^{C}
\frac{r^{N}}{\prod_{n=1}^{N}\mu(n)}\frac{1}{\prod_{i=1}^{\infty}i^{c_{i}}c_{i}!}.
\label{eq:mod3_P(c)}
\end{equation}

One can verify that this steady-state distribution satisfies the detailed balanced 
conditions connecting all pairs of states:

\begin{equation}
\begin{cases}
\mu\left(\sum_{k}kc_{k}\right) kc_{k}P(\vec{c})=
(k-1)\left(c_{k-1}+1\right) r P(c_{1}, \ldots, c_{k-1}+1,c_{k}-1, \ldots) & \forall k>1, \\
\mu\left(\sum_{k}kc_{k}\right) c_{1}P(\vec{c})=\alpha P(c_{1}-1,\ldots, c_{k}, \ldots).
\end{cases}
\label{eq:S_mod3 verfying P vec c}
\end{equation}

%ensuring the reversibility of the system and $P(\vec{c})$ as its
%equilibrium distribution.

\subsection{\label{subsec:S-mod3-Convergence N_Omega}Convergence of $N/\Omega$}
\begin{thm}
\label{thm:mod3_N}The random variable $N/\Omega$ converges
in probability to the real $n^{*}$ which is the only solution of
the fixed point Equation~(\ref{eq:fixed-pt-eq}).
\end{thm}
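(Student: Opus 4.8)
My plan is to argue directly from the closed form of the stationary law $P(N)$ in Equation~(\ref{eq:mod3 PN}) (with $\alpha=\widetilde\alpha\Omega$ and $\mu(N)=\widetilde\mu(N/\Omega)$) and show that this distribution concentrates at $n^{*}$, i.e. that $P(|N/\Omega-n^{*}|>\delta)\to0$ for each fixed $\delta>0$ (which is exactly the claimed convergence in probability). The basic object is the one-step ratio
\[
R_N:=\frac{P(N)}{P(N-1)}=\frac{\widetilde\alpha\Omega+r(N-1)}{N\,\widetilde\mu(N/\Omega)},\qquad N\ge1 .
\]
The first step is to repackage the hypotheses on $\widetilde\mu$ as a monotonicity statement. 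Put $\psi(x):=\widetilde\mu(x)-r-\widetilde\alpha/x$ on $(0,\infty)$; since $\widetilde\mu$ is continuous, positive and strictly increasing and $\lim_{x\to\infty}r/\widetilde\mu(x)<1$, the function $\psi$ is continuous and strictly increasing with $\psi(0^{+})=-\infty$ and $\lim_{x\to\infty}\psi(x)=\lim_{x\to\infty}\widetilde\mu(x)-r>0$. Hence $\psi$ has a unique zero $n^{*}$ — this simultaneously supplies the uniqueness of the solution of Equation~(\ref{eq:fixed-pt-eq}) claimed in the main text — with $\psi<0$ on $(0,n^{*})$ and $\psi>0$ on $(n^{*},\infty)$. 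A direct manipulation of $R_N$ then gives the identity $R_N-1=-\psi(N/\Omega)/\widetilde\mu(N/\Omega)-r/\bigl(N\,\widetilde\mu(N/\Omega)\bigr)$.

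The second step is to convert this into two \emph{uniform} one-sided bounds. For the upper side: whenever $N\ge(n^{*}+\tfrac{\delta}{2})\Omega$, monotonicity of $\psi$ and $\widetilde\mu$ yields $\psi(N/\Omega)/\widetilde\mu(N/\Omega)\ge\beta_{+}:=1-\bigl(r+\widetilde\alpha/(n^{*}+\tfrac{\delta}{2})\bigr)/\widetilde\mu(n^{*}+\tfrac{\delta}{2})$, which is a strictly positive constant because $\psi(n^{*}+\tfrac{\delta}{2})>0$; hence $R_N\le q:=1-\beta_{+}<1$ uniformly in such $N$ and in $\Omega$. For the lower side I would split the range $1\le N\le(n^{*}-\tfrac{\delta}{2})\Omega$ into $N\le\eta_{0}\Omega$ and $\eta_{0}\Omega<N\le(n^{*}-\tfrac{\delta}{2})\Omega$: on the first block the immigration input dominates, $R_N\ge\widetilde\alpha\Omega/\bigl(N\widetilde\mu(N/\Omega)\bigr)\ge\widetilde\alpha/\bigl(\eta_{0}\widetilde\mu(\eta_{0})\bigr)>1$ once $\eta_{0}$ is chosen small (legitimate since $\eta_{0}\widetilde\mu(\eta_{0})\to0$); on the second block $-\psi(N/\Omega)/\widetilde\mu(N/\Omega)$ is bounded below by a positive constant while the correction $r/\bigl(N\widetilde\mu(N/\Omega)\bigr)\le r/\bigl(\eta_{0}\Omega\,\widetilde\mu(\eta_{0})\bigr)\to0$, so $R_N\ge\rho>1$ for all $\Omega$ large. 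Thus there exist $q\in(0,1)$, $\rho>1$ and $\Omega_{0}$ with $R_N\le q$ for $N\ge(n^{*}+\tfrac{\delta}{2})\Omega$ and $R_N\ge\rho$ for $1\le N\le(n^{*}-\tfrac{\delta}{2})\Omega$ whenever $\Omega\ge\Omega_{0}$.

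The last step is the concentration estimate, which is now just telescoping and summing geometric series. For the right tail, $P(N)=P\bigl(\lceil(n^{*}+\tfrac{\delta}{2})\Omega\rceil\bigr)\prod_{j}R_j\le q^{\,N-\lceil(n^{*}+\delta/2)\Omega\rceil}$ using $P(\cdot)\le1$ and $R_j\le q$; summing over $N\ge(n^{*}+\delta)\Omega$, where the exponent is at least $\tfrac{\delta}{2}\Omega-1$, gives $P(N/\Omega>n^{*}+\delta)\le q^{\delta\Omega/2-1}/(1-q)\to0$. Symmetrically, using $R_j\ge\rho$ down to $(n^{*}-\tfrac{\delta}{2})\Omega$ one gets $P(N)\le\rho^{-(\lfloor(n^{*}-\delta/2)\Omega\rfloor-N)}$ for $N<(n^{*}-\tfrac{\delta}{2})\Omega$, and summing over $N\le(n^{*}-\delta)\Omega$ gives $P(N/\Omega<n^{*}-\delta)\le\rho^{-(\delta\Omega/2-1)}/(1-\rho^{-1})\to0$. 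Adding the two tails proves the theorem. I expect the only genuinely delicate point to be the uniform lower bound $R_N\ge\rho>1$ near $N=0$: there the naive fluid ratio $(\widetilde\alpha+r\,N/\Omega)/\bigl((N/\Omega)\widetilde\mu(N/\Omega)\bigr)$ is not uniformly close to $R_N$ and $\widetilde\mu(N/\Omega)$ may be very small, so one must argue separately (as above) that the immigration term $\widetilde\alpha\Omega$ overwhelms everything in that regime. Everything else is routine; alternatively the convergence can be imported from the analogous fluid-limit argument of \citet[Proposition~4]{Dessalles2017b}, identifying $n^{*}$ with the globally attracting equilibrium of $\dot n=\widetilde\alpha+rn-\widetilde\mu(n)n$.
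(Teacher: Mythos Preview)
Your proof is correct and rests on the same engine as the paper's: your one-step ratio $R_N$ is exactly the paper's $f_N$, and the monotonicity of your $\psi$ is equivalent to the paper's Lemma~\ref{lem:Decrease_f_x} (strict decrease of $f$). Where you differ is in how the concentration is extracted. The paper rewrites $P(N/\Omega>n^{*}+\delta)$ as $\bigl(1+\text{ratio of two partial sums of }\exp(\sum\log f_k)\bigr)^{-1}$ and then shows, via two auxiliary propositions, that this ratio diverges. You instead telescope $P(N)=P(M)\prod_j R_j$ directly into an explicit geometric bound $P(N)\le q^{\,N-M}$ and sum; this is shorter, more transparent, and actually gives exponential-in-$\Omega$ tail bounds rather than bare $o(1)$. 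Your two-block treatment of the lower tail---separating $N\le\eta_0\Omega$, where the immigration term $\widetilde\alpha\Omega$ dominates, from $\eta_0\Omega<N\le(n^{*}-\tfrac{\delta}{2})\Omega$, where the fluid comparison is uniform---is also more careful than the paper, which merely asserts that the symmetric argument goes through. In doing so you make explicit the mild extra input $\eta_0\widetilde\mu(\eta_0)\to0$ (i.e.\ local boundedness of $\widetilde\mu$ at $0$), which is implicit in the paper's continuity hypothesis on $\widetilde\mu$.
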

To prove this Theorem,  first define
\[
f(x):=\frac{\widetilde{\alpha}+rx}{x\widetilde{\mu}(x)}\quad\text{and}
\quad f_{k}:=
\frac{\widetilde{\alpha}+(k-1)r/\Omega}{(k/\Omega) \widetilde{\mu}(k/\Omega)}
\,\quad\forall k\in\N_{*}.
\]
The function $f$ defines the steady-state constraint on $n=N/\Omega$
given by Equation~(\ref{eq:fixed-pt-eq}) where $x=n^{*}$ is the 
only real solution to $f(x)=1$  With these definitions, the
probability distribution over $N$ can be expressed as
\[
\forall n\in\N,\qquad\P{N=n}=
\frac{\exp\left(\sum_{k=1}^{n}\log f_{k}\right)}
{\sum_{n'=0}^{\infty}\exp\left(\sum_{k=1}^{n'}\log f_{k}\right)}.
\]

%\]
%and the real $n^{*}$ can be defined as the only solution to the
%equation
%\[
%f(n^{*})=1.
%\]

\noindent Now, consider the following lemma:
\begin{lem}
\label{lem:Decrease_f_x}The function $f$ is strictly decreasing
and there exists a $\Omega^{*}$ for which $\,\forall\Omega\geq\Omega^{*}$,
$\left(f_{k}\right)_{k\geq1}$ is a decreasing sequence.
\end{lem}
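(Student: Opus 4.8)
The plan is to reduce both claims to one elementary fact: a quotient whose numerator is an affine function of $x$ with positive slope and positive intercept, and whose denominator is a positive strictly increasing function on $(0,\infty)$, is itself strictly decreasing.

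For the first claim I would split $f$ as
\[
f(x)=\frac{\widetilde{\alpha}}{x\,\widetilde{\mu}(x)}+\frac{r}{\widetilde{\mu}(x)}.
\]
By the standing assumptions $\widetilde{\mu}$ is continuous, strictly increasing, and positive on $(0,\infty)$, so $x\mapsto x\,\widetilde{\mu}(x)$ is a product of two positive strictly increasing functions and is therefore itself strictly increasing; since $\widetilde{\alpha}>0$ and $r>0$, each of the two summands above is strictly decreasing on $(0,\infty)$, and hence so is $f$. Note this argument does not use $\lim_{x\to\infty} r/\widetilde{\mu}(x)<1$; that hypothesis is only needed to guarantee existence of the fixed point $n^{*}$, not monotonicity.

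For the second claim the point is that $f_{k}$ is \emph{not} exactly $f(k/\Omega)$, because its numerator carries $(k-1)r/\Omega$ rather than $rk/\Omega$. Writing $x=k/\Omega$ gives
\[
f_{k}=\frac{\bigl(\widetilde{\alpha}-r/\Omega\bigr)+rx}{x\,\widetilde{\mu}(x)}=:h_{\Omega}(x),\qquad x=k/\Omega .
\]
I would then pick $\Omega^{*}$ so that $\widetilde{\alpha}-r/\Omega^{*}>0$ (any $\Omega^{*}>r/\widetilde{\alpha}$ works); for every $\Omega\geq\Omega^{*}$ the effective intercept $\widetilde{\alpha}-r/\Omega$ remains positive, so the same splitting argument as above shows $h_{\Omega}$ is strictly decreasing on $(0,\infty)$. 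Since $k\mapsto k/\Omega$ is strictly increasing, it follows that $f_{k+1}=h_{\Omega}\!\left((k+1)/\Omega\right)<h_{\Omega}\!\left(k/\Omega\right)=f_{k}$ for all $k\geq1$, as claimed.

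The only step requiring genuine care — and the reason the statement is phrased with "there exists $\Omega^{*}$" rather than "for all $\Omega$" — is keeping the effective immigration intercept $\widetilde{\alpha}-r/\Omega$ positive: for small $\Omega$ this can fail, and then $(f_{k})_{k\ge1}$ need not be monotone near $k=1$, so one must genuinely pass to large $\Omega$. Everything else is the routine check that $\widetilde{\mu}(x)$ and $x\,\widetilde{\mu}(x)$ are positive and strictly increasing, which is immediate from the hypotheses on $\widetilde{\mu}$.
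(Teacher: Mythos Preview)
Your proof is correct. The route differs from the paper's: there the authors handle the discrete part by computing the ratio $f_k/f_{k+1}$ directly, expanding $(k+1)(\widetilde{\alpha}\Omega+(k-1)r)-k(\widetilde{\alpha}\Omega+kr)=\widetilde{\alpha}\Omega-r$ to see when the non-$\widetilde{\mu}$ factor exceeds $1$, and then multiply by $\widetilde{\mu}((k+1)/\Omega)/\widetilde{\mu}(k/\Omega)>1$. You instead recognize $f_k=h_\Omega(k/\Omega)$ with $h_\Omega$ of the same form as $f$ but with shifted intercept $\widetilde{\alpha}-r/\Omega$, and reduce the discrete claim to the continuous one via the same splitting. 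Both arguments land on the same threshold $\Omega^{*}>r/\widetilde{\alpha}$; your route has the virtue of unifying the two parts of the lemma under a single argument and of actually spelling out why $f$ is strictly decreasing (the paper only asserts this as ``a direct implication of the increase of $\widetilde{\mu}$''), while the paper's ratio computation is a bit more hands-on and avoids the need to spot the rewriting $f_k=h_\Omega(k/\Omega)$.
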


\begin{proof}
The decrease of the function $f$ is a direct implication of the increase
of $\widetilde{\mu}$. For, $\left(f_{k}\right)_{k\geq1}$ we have
\[
\left(k+1\right)\left(\widetilde{\alpha}\Omega+r(k-1)\right)-k\,
\left(\widetilde{\alpha}\Omega+r k\right)=\widetilde{\alpha}\Omega-r,
\]
which is positive for large enough $\Omega$. Since $\widetilde{\mu}$
is increasing,
\[
\frac{f_{k}}{f_{k+1}}=
\frac{\widetilde{\mu}(\left(k+1\right)/\Omega)}{\widetilde{\mu}(k/\Omega)}
\frac{\left(k+1\right)\left(\widetilde{\alpha}\Omega+(k-1)r\right)}
{k\,\left(\widetilde{\alpha}\Omega+r k\right)}>1.
\]
\end{proof}
To prove Theorem~\ref{thm:mod3_N}, we have to show that $\forall\delta >0$,
\[
\P{\left|N/\Omega-n^{*}\right|>\delta}\xrightarrow[\Omega\rightarrow\infty]{}0
\]
that is to say, we have to show that
\begin{align}
\P{N/\Omega>n^{*}+\delta} & \xrightarrow[\Omega\rightarrow\infty]{}0,\label{eq:cvg1}\\
\P{N/\Omega<n^{*}-\delta} & \xrightarrow[\Omega\rightarrow\infty]{}0.\label{eq:cvg2}
\end{align}
The proofs of convergence for both limits above are very similar so we
will focus on the proof of Equation~(\ref{eq:cvg1}). To simplify
notation, we define $a_{\Omega,\delta} \equiv \left\lceil
\Omega\left(n^{*}+\delta \right)\right\rceil$, (where
$\left\lceil\cdot \right\rceil$ is the ceiling function). Since the
distribution of $N$ is known, we have
\begin{align*}
\P{N/\Omega> n^{*}+\delta} & =
\frac{\sum_{n=a_{\Omega,\delta}}^{\infty}
\exp\left(\sum_{k=1}^{n}\log f_{k}\right)}{\sum_{n=0}^{a_{\Omega,\delta}-1}
\exp\left(\sum_{k=1}^{n}\log f_{k}\right)+\sum_{n=a_{\Omega,\delta}}^{\infty}
\exp\left(\sum_{k=1}^{n}\log f_{k}\right)}\\
 & =\left(\frac{\sum_{n=0}^{a_{\Omega,\delta}-1}
\exp\left(\sum_{k=1}^{n}\log f_{k}\right)}{\sum_{n=a_{\Omega,\delta}}^{\infty}
\exp\left(\sum_{k=1}^{n}\log f_{k}\right)}+1\right)^{-1}.
\end{align*}
Thus, it is enough to show
\[
\frac{\sum_{n=0}^{a_{\Omega,\delta}-1}
\exp\left(\sum_{k=1}^{n}\log f_{k}\right)}{\sum_{n=a_{\Omega,\delta}}^{\infty}
\exp\left(\sum_{k=1}^{n}\log f_{k}\right)}\xrightarrow[\Omega\rightarrow\infty]{}\infty
\]
in order to prove the convergence of Equation~(\ref{eq:cvg1}).
\begin{prop}
In the $\Omega\rightarrow\infty$ limit, the following equivalence holds
\[
\sum_{n=a_{\Omega,\delta}}^{\infty}
\exp\left(\sum_{k=1}^{n}\log f_{k}\right)
\simlim{\Omega\to\infty}\exp\left(\sum_{k=1}^{a_{\Omega,\delta}-1}
\log f_{k}\right)\frac{1}{1-f(n^{*}+\delta)}
\]
\end{prop}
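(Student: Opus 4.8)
The idea is to factor the rapidly‑varying ``geometric'' part $\exp\!\left(\sum_{k=1}^{a_{\Omega,\delta}-1}\log f_{k}\right)$ out of every term of the tail sum and to recognize what remains as a convergent geometric series of ratio $f(n^{*}+\delta)$. Abbreviate $a\equiv a_{\Omega,\delta}=\lceil\Omega(n^{*}+\delta)\rceil$ and $g\equiv f(n^{*}+\delta)$. Two preliminary observations are needed. First, $0<g<1$: positivity is immediate from the definition of $f$, and $g<1$ because $f$ is strictly decreasing (Lemma~\ref{lem:Decrease_f_x}) while $f(n^{*})=1$ by Equation~(\ref{eq:fixed-pt-eq}). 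Second, $f_{a+j}\xrightarrow{\Omega\to\infty}g$ for every fixed $j\geq0$, since $(a+j)/\Omega\to n^{*}+\delta$ and $\widetilde{\mu}$ is continuous. Now, using $\sum_{k=1}^{n}\log f_{k}=\sum_{k=1}^{a-1}\log f_{k}+\sum_{k=a}^{n}\log f_{k}$ for $n\geq a$,
\[
\sum_{n=a}^{\infty}\exp\!\left(\sum_{k=1}^{n}\log f_{k}\right)=\exp\!\left(\sum_{k=1}^{a-1}\log f_{k}\right)\,R_{\Omega},\qquad R_{\Omega}:=\sum_{n=a}^{\infty}\prod_{k=a}^{n}f_{k},
\]
so the proposition reduces to showing $R_{\Omega}\to 1/(1-g)$ as $\Omega\to\infty$.

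The estimate of $R_{\Omega}$ is a dominated‑convergence argument for series, in which the $\Omega$‑uniform majorant is supplied by the monotonicity half of Lemma~\ref{lem:Decrease_f_x}. For $\Omega\geq\Omega^{*}$ the sequence $(f_{k})_{k\geq1}$ is non‑increasing, so $f_{k}\leq f_{a}$ for all $k\geq a$ and hence $\prod_{k=a}^{n}f_{k}\leq f_{a}^{\,n-a+1}$. Since $f_{a}\to g<1$, for all large $\Omega$ one has $f_{a}\leq\rho$ for some fixed $\rho\in(g,1)$, so every summand of $R_{\Omega}$ is dominated, uniformly in $\Omega$, by the term of a convergent geometric series; in particular the tail $\sum_{n\geq a+M}\prod_{k=a}^{n}f_{k}$ is at most $\rho^{M+1}/(1-\rho)$, uniformly small once $M$ is large. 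On the other hand, each finite head $\sum_{a\leq n\leq a+M}\prod_{k=a}^{n}f_{k}$ involves only the finitely many ratios $f_{a},\dots,f_{a+M}$, all of which tend to $g$; hence the head converges, as $\Omega\to\infty$, to the corresponding partial sum of a geometric series of ratio $g$. Sending $M\to\infty$ afterwards yields $R_{\Omega}\to 1/(1-g)$, which is the claimed equivalence. The remaining work is routine bookkeeping; the only thing to watch is the exact role of the boundary index $a_{\Omega,\delta}$ when the exponent $\sum_{k=1}^{n}$ is split.

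The main obstacle is precisely this interchange of the infinite summation with the limit $\Omega\to\infty$: the number of contributing terms grows like $\Omega$, and the rescaled indices $k/\Omega$ sweep across an interval on which $f$ is not constant, so a naive term‑by‑term passage is not licensed. Lemma~\ref{lem:Decrease_f_x} is exactly what converts this into a clean, $\Omega$‑uniform geometric domination; without it one would instead have to control $\prod_{k=a}^{n}f_{k}=\exp\!\left(\sum_{k=a}^{n}\log f_{k}\right)$ directly via the Riemann‑sum estimate $\sum_{k=a}^{n}\log f_{k}\approx\Omega\int_{n^{*}+\delta}^{n/\Omega}\log f(x)\,{\rm d}x$, which is feasible (the integrand is negative on $(n^{*},\infty)$) but considerably heavier. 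A secondary subtlety is that both $a=a_{\Omega,\delta}$ and the ratio $f_{a}$ vary with $\Omega$, so the limit $f_{a}\to g$ must be justified from continuity of $\widetilde{\mu}$ at $n^{*}+\delta$ together with $\lceil\Omega(n^{*}+\delta)\rceil/\Omega\to n^{*}+\delta$; the companion equivalence feeding Equation~(\ref{eq:cvg2}) is handled by the same scheme, with $g$ replaced by $f(n^{*}-\delta)>1$.
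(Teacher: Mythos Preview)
Your approach is essentially the same as the paper's: factor out $\exp\!\left(\sum_{k=1}^{a-1}\log f_{k}\right)$ and show that the remaining quantity $R_{\Omega}=\sum_{n\geq a}\prod_{k=a}^{n}f_{k}$ converges to the geometric-series value. The paper does exactly this decomposition and then passes to the limit term-by-term with the one-line justification ``$\sum_{k=0}^{n}\log f_{k+a}\sim n\log f(n^{*}+\delta)$, and since $f(n^{*}+\delta)<1$ the sum converges.'' Your write-up is more careful than the paper's in that you make the interchange of the infinite sum with $\Omega\to\infty$ explicit via a dominated-convergence argument, using the monotonicity half of Lemma~\ref{lem:Decrease_f_x} to produce an $\Omega$-uniform geometric majorant $\rho^{m}$; the paper leaves this step implicit.

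Two small remarks. First, if you track the indices, the product $\prod_{k=a}^{n}f_{k}$ has $n-a+1\geq 1$ factors, so the term-wise limit is $g^{\,n-a+1}$ and hence $R_{\Omega}\to g/(1-g)$ rather than $1/(1-g)$. The paper's proof has the same slip. It is immaterial here because the proposition is only used to show that the ratio preceding it diverges, and any finite positive constant suffices. Second, your closing comment that the companion estimate for Equation~(\ref{eq:cvg2}) ``is handled by the same scheme, with $g$ replaced by $f(n^{*}-\delta)>1$'' is not quite right: with ratio larger than $1$ the tail sum diverges, so the proposition has no direct analogue on that side. The paper treats Equation~(\ref{eq:cvg2}) by a separate (but symmetric) argument bounding the reversed ratio from below, not by reusing this proposition.
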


\begin{proof}
We first decompose the sum according to
\begin{align*}
\sum_{n=a_{\Omega,\delta}}^{\infty}\exp\left(\sum_{k=1}^{n}
\log f_{k}\right) & =\exp\left(\sum_{k=1}^{a_{\Omega,\delta}-1}
\log f_{k}\right)\sum_{n=a_{\Omega,\delta}}^{\infty}
\exp\left(\sum_{k=a_{\Omega,\delta}}^{n}\log f_{k}\right).
\end{align*}
The second term of the decomposition can be rewritten as
\[
\sum_{n=a_{\Omega,\delta}}^{\infty}
\exp\left(\sum_{k=a_{\Omega,\delta}}^{n}\log f_{k}\right)=
\sum_{n=0}^{\infty}\exp\left(\sum_{k=0}^{n}\log f_{k+a_{\Omega,\delta}}\right).
\]
Since $a_{k,\Omega}/\Omega\xrightarrow[\Omega\rightarrow\infty]{}n^{*}+\delta$,
it follows that
\[
\sum_{k=0}^{n}\log f_{k+a_{\Omega,\delta}}
\simlim{\Omega\to\infty}n\,\log f(n^{*}+\delta).
\]
As $f$ is a strictly decreasing function (cf. Lemma~\ref{lem:Decrease_f_x}),
and since $n^{*}$ is the only point where $f(n^{*})=1$, it follows
that $f\left(n^{*}+\delta\right)<1$. Therefore, the sum over $n$
converges, and we have
\[
\sum_{n=a_{\Omega,\delta}}^{\infty}
\exp\left(\sum_{k=a_{\Omega,\delta}}^{n}\log f_{k}\right)
\simlim{\Omega\to\infty}\frac{1}{1-f(n^{*}+\delta)}
\]
\end{proof}
With the previous Proposition, it is enough to prove that the ratio
\[
\frac{\sum_{n=0}^{a_{\Omega,\delta}-1}
\exp\left(\sum_{k=1}^{n}\log f_{k}\right)}{\exp\left(\sum_{k=1}^{a_{\Omega,\delta}-1}
\log f_{k}\right)}=\sum_{n=0}^{a_{\Omega,\delta}-1}
\exp\left(-\sum_{k=n+1}^{a_{\Omega,\delta}-1}\log f_{k}\right)
\]
diverges to infinity in order to prove the convergence of Equation~(\ref{eq:cvg1}).
\begin{prop}
The sum
\[
\sum_{n=0}^{a_{\Omega,\delta}-1}\exp\left(-\sum_{k=n+1}^{a_{\Omega,\delta}-1}
\log f_{k}\right)\xrightarrow[\Omega\rightarrow\infty]{}\infty
\]
diverges.
\end{prop}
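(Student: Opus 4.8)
The plan is to bound this sum of positive terms from below by a single well‑chosen term and to show that this one term already diverges. Concretely, I would fix the index $n:=\left\lceil \Omega\left(n^{*}+\delta/2\right)\right\rceil$; for $\Omega$ large it lies strictly below $a_{\Omega,\delta}-1$, which in fact exceeds it by a number of order $\Omega$. Restricting attention to the corresponding summand $\exp\!\left(-\sum_{k=n+1}^{a_{\Omega,\delta}-1}\log f_{k}\right)$ and using that every summand is positive, it suffices to prove that the exponent $\sum_{k=n+1}^{a_{\Omega,\delta}-1}\log(1/f_{k})$ tends to $+\infty$.

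The key step is to compare the discrete $f_{k}$ with the continuous function $f$. From the definitions one has the exact identity
\[
f_{k}=f\!\left(k/\Omega\right)-\frac{r}{k\,\widetilde{\mu}(k/\Omega)},
\]
so on the window $k/\Omega\ge n^{*}+\delta/2$, using that $\widetilde{\mu}$ is positive and increasing, $\bigl|f_{k}-f(k/\Omega)\bigr|\le \dfrac{r}{\Omega\left(n^{*}+\delta/2\right)\widetilde{\mu}(n^{*}+\delta/2)}$, a bound uniform in $k$ on that window and vanishing as $\Omega\to\infty$. On the other hand, $f$ is continuous, strictly decreasing (Lemma~\ref{lem:Decrease_f_x}) and equal to $1$ only at $n^{*}$, so $\rho:=f(n^{*}+\delta/2)<1$ and $f(x)\le\rho$ for every $x\ge n^{*}+\delta/2$. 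Combining the two, for all $\Omega$ sufficiently large we get $f_{k}\le\rho'$ for some fixed $\rho'\in(\rho,1)$ and every $k$ with $k/\Omega\ge n^{*}+\delta/2$; hence $\log(1/f_{k})\ge\log(1/\rho')>0$ for every $k$ appearing in the chosen summand.

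It then only remains to count: the set $\{n+1,\dots,a_{\Omega,\delta}-1\}$ contains at least $\Omega\delta/2-2$ integers for $\Omega$ large, all of which satisfy $k/\Omega\ge n^{*}+\delta/2$, and therefore
\[
-\sum_{k=n+1}^{a_{\Omega,\delta}-1}\log f_{k}\ \ge\ \left(\frac{\Omega\delta}{2}-2\right)\log\frac{1}{\rho'}\ \xrightarrow[\Omega\to\infty]{}\ +\infty,
\]
so the chosen summand, and a fortiori the full sum, diverges. I expect the only genuinely delicate point to be the discrete‑to‑continuous passage — checking that $f_{k}$ stays uniformly below $1$ on a window of indices of length of order $\Omega$ — but the displayed error identity for $f_{k}-f(k/\Omega)$ together with the strict monotonicity of $f$ makes this routine, and no precise asymptotics of the partition function is needed for this half of the argument. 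The companion divergence used to establish Equation~(\ref{eq:cvg2}) is entirely analogous, with the index window placed just below $\left\lceil\Omega(n^{*}-\delta/2)\right\rceil$, where instead $f_{k}$ is uniformly bounded below by some constant larger than $1$.
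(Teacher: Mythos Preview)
Your argument is correct and takes a genuinely different route from the paper's. The paper bounds the \emph{entire} sum from below by a geometric-type series: invoking the monotonicity of the discrete sequence $(f_k)_{k\ge 1}$ (the second half of Lemma~\ref{lem:Decrease_f_x}), it replaces each $\log f_k$ by $\log f_{a_{\Omega,\delta}-1}$ and then uses $f_{a_{\Omega,\delta}-1}\to f(n^{*}+\delta)<1$ to conclude. You instead isolate a \emph{single} summand at $n=\lceil\Omega(n^{*}+\delta/2)\rceil$ and show that this one term already blows up, using the exact identity $f_k=f(k/\Omega)-r/(k\,\widetilde{\mu}(k/\Omega))$ together with the strict decrease of the continuous $f$ to get $f_k<f(k/\Omega)\le\rho<1$ uniformly on a window of $\sim\Omega\delta/2$ indices. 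Two remarks: first, since that identity gives $f_k<f(k/\Omega)$ exactly, you may simply take $\rho'=\rho$ and drop the uniform-error step; second, your approach only needs the easy half of Lemma~\ref{lem:Decrease_f_x} (monotonicity of $f$, not of $(f_k)$), and in fact sidesteps a slip in the paper's presentation---the displayed inequality $\sum_{k=n+1}^{a_{\Omega,\delta}-1}\log f_k\le (a_{\Omega,\delta}-n-1)\log f_{a_{\Omega,\delta}-1}$ goes the wrong way when $(f_k)$ is decreasing, since then $f_k\ge f_{a_{\Omega,\delta}-1}$ on that range.
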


\begin{proof}
Since $\left(f_{k}\right)_{k\geq1}$ is decreasing for large $\Omega$
(cf. Lemma~\ref{lem:Decrease_f_x}), we have 
\[
\sum_{k=n+1}^{a_{\Omega,\delta}-1}\log f_{k}\leq\left(a_{\Omega,\delta}-n-1\right)
\log f_{a_{\Omega,\delta}-1}
\]
for sufficiently large $\Omega$. Therefore, 
\[
\sum_{n=0}^{a_{\Omega,\delta}-1}\exp\left(-\sum_{k=n+1}^{a_{\Omega,\delta}-1}
\log f_{k}\right)\geq\sum_{n'=1}^{a_{\Omega,\delta}}
\left(\frac{1}{f_{a_{\Omega,\delta}-1}}\right)^{n'}.
\]
Since
\[
f_{a_{\Omega,\delta}-1}\xrightarrow[\Omega\rightarrow\infty]{}f(n^{*}+\delta),
\]
for large enough $\Omega$ and since $f$ is decreasing, we have that
$f_{a_{\Omega,\delta}-1}<1-\eta$ for $\eta$ small enough. Therefore,
we conclude the divergence
\[
\sum_{n'=1}^{a_{\Omega,\delta}}\left(\frac{1}{f_{a_{\Omega,\delta}-1}}\right)^{n'}
\xrightarrow[\Omega\rightarrow\infty]{}\infty
\]
and proof of the proposition.
\end{proof}
With this Proposition, we have proven the convergence of
Equation~(\ref{eq:cvg1}).  The convergence of Equation~(\ref{eq:cvg2})
can be proved using exactly the same methods by considering
$b_{\Omega,\delta}= \left\lfloor
\Omega\left(\delta+n^{*}\right)\right\rfloor$ instead of
$a_{\Omega,\delta}$.

\subsection{\label{subsec:S_mod3 Convergence C/=0003A9}Convergence of $C/\Omega$}
\begin{thm}
The scaled total number of species $C/\Omega$ converges
in distribution to 
\[
\frac{C}{\Omega}\xrightarrow[\Omega\rightarrow\infty]{\cal{D}}
\frac{\widetilde{\alpha}}{r}\log\left[1+\frac{r}{\widetilde{\alpha}}n^{*}\right]=
\frac{\widetilde{\alpha}}{r}\log\left[\frac{1}{1-r/\widetilde{\mu}(n^{*})}\right],
\]
in which $n^{*}$ is the only real solution of the fixed point Equation~(\ref{eq:fixed-pt-eq}).
\end{thm}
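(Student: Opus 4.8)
The plan is to condition on the total population $N$ and then feed in the already-established convergence $N/\Omega\xrightarrow[\Omega\rightarrow\infty]{{\cal D}}n^{*}$ from Theorem~\ref{thm:mod3_N}. The key observation is that in the expression~(\ref{PvecCC}) for $P(\vec c)$ the whole carrying-capacity factor $r^{N}/\prod_{n=1}^{N}\mu(n)$ depends on $\vec c$ only through $N$, so it cancels upon conditioning:
\[
P(\vec c\mid N=n)\;\propto\;\Bigl(\tfrac{\alpha}{r}\Bigr)^{C}\frac{1}{\prod_{i\ge1}i^{c_i}c_i!},\qquad \textstyle\sum_k k c_k=n .
\]
This is precisely the Ewens sampling formula with parameter $\theta=\alpha/r$ for the cycle type of a size-$n$ random permutation, so conditionally on $N=n$ the variable $C$ is distributed as the number of cycles; summing over all $\vec c$ with $\sum_k c_k=c$ and using $\sum_{c}|s(n,c)|\,\theta^{c}=\theta(\theta+1)\cdots(\theta+n-1)$ gives the conditional probability generating function $\E{t^{C}\mid N=n}=\prod_{j=0}^{n-1}\frac{\theta t+j}{\theta+j}$. (As a check, differentiating at $t=1$ returns $\E{C\mid N=n}=\alpha\sum_{k=0}^{n-1}(\alpha+rk)^{-1}$, consistent with Equation~(\ref{eq:mod3 EC}).) Note this conditional law is completely independent of the form of $\mu$; only $P(N)$ "sees" the carrying capacity.

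Next I would fix $\xi<0$, set $t=e^{\xi/\Omega}$ and $\theta=\widetilde\alpha\Omega/r$, so that $\E{e^{\xi C/\Omega}\mid N=n}=\prod_{j=0}^{n-1}\bigl(1+(e^{\xi/\Omega}-1)\tfrac{\theta}{\theta+j}\bigr)$. Taking logarithms, using $0\le\theta/(\theta+j)\le1$ and $|e^{\xi/\Omega}-1|\le|\xi|/\Omega$, one gets
\[
\log\E{e^{\xi C/\Omega}\mid N=n}=\frac{\xi}{\Omega}\sum_{k=0}^{n-1}\frac{\widetilde\alpha/r}{\widetilde\alpha/r+k/\Omega}+o(1),
\]
with the $o(1)$ error uniform over $n$ with $n/\Omega\le M$. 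The sum is a Riemann sum for the bounded decreasing function $g(x)=\frac{\widetilde\alpha/r}{\widetilde\alpha/r+x}$, so whenever $n/\Omega\to n^{*}$ it converges to $\int_{0}^{n^{*}}g(x)\,{\rm d}x=\frac{\widetilde\alpha}{r}\log\!\bigl(1+\tfrac{r}{\widetilde\alpha}n^{*}\bigr)$; the fixed-point relation~(\ref{eq:fixed-pt-eq}) gives $1-r/\widetilde\mu(n^{*})=\widetilde\alpha/(\widetilde\alpha+rn^{*})$, so this limit equals $L:=\frac{\widetilde\alpha}{r}\log[1/(1-r/\widetilde\mu(n^{*}))]$. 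Combining the Riemann-sum error bound $g(0)/\Omega$ with $|\int_0^{n/\Omega}g-\int_0^{n^{*}}g|\le g(0)|n/\Omega-n^{*}|$ yields $\sup_{|n/\Omega-n^{*}|\le\epsilon}\bigl|\E{e^{\xi C/\Omega}\mid N=n}-e^{\xi L}\bigr|\to0$ as $\Omega\to\infty$ and then $\epsilon\to0$.

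Finally I would de-condition. Writing $\phi_\Omega(n)=\E{e^{\xi C/\Omega}\mid N=n}$, one has $0\le\phi_\Omega\le1$ since $\xi<0$ and $C\ge0$; splitting $\E{e^{\xi C/\Omega}}=\E{\phi_\Omega(N)}$ over $\{|N/\Omega-n^{*}|\le\epsilon\}$ and its complement, the complement contributes at most $\P{|N/\Omega-n^{*}|>\epsilon}\to0$ by Theorem~\ref{thm:mod3_N}, while on the good event $\phi_\Omega(N)$ is within $o(1)$ of $e^{\xi L}$, so the remaining part tends to $e^{\xi L}$. Hence $\E{e^{\xi C/\Omega}}\to e^{\xi L}$ for every $\xi<0$, which by the moment-generating-function criterion for convergence in distribution (as in the proof of Proposition~\ref{prop:Model1_scaled_N}; see also \citet[Chapter~5]{Billingsley2012}) gives $C/\Omega\xrightarrow[\Omega\rightarrow\infty]{{\cal D}}L$. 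The main obstacle is the \emph{uniformity} of the Riemann-sum estimate (and of the $\log(1+\cdot)$ linearization) over the fluctuating value of $N$, which is exactly what lets the conditional estimate be coupled to the convergence of $N/\Omega$; it rests only on monotonicity/boundedness of $g$ on $[0,n^{*}+1]$ and the bound $|e^{\xi/\Omega}-1|\le|\xi|/\Omega$. A more computational alternative is to substitute $N/\Omega\to n^{*}$ directly into Equations~(\ref{eq:mod3 EC}) and~(\ref{eq:mod3 VC}) and verify $\E{C/\Omega}\to L$ and $\var{C/\Omega}\to0$ (the $\Omega^{2}$ terms in $\var{C}$ must be seen to cancel), but that route additionally needs a uniform-integrability argument to pass expectations through the Riemann-sum limits.
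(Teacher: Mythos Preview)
Your proof is correct and ultimately lands on the same analytic core as the paper: both arguments reduce $\E{e^{\xi C/\Omega}}$ to
\[
\E{\exp\!\Bigl(\sum_{j=0}^{N-1}\log\tfrac{\widetilde\alpha e^{\xi/\Omega}+rj/\Omega}{\widetilde\alpha+rj/\Omega}\Bigr)}
\]
and then control the inner sum by a Riemann-sum/integral comparison as $N/\Omega\to n^{*}$. The paper reaches this expression via the partition-function identity $M_{C}(\xi)=Z_{\alpha e^{\xi},r}/Z_{\alpha,r}$ from Equation~(\ref{Znorm}); you reach the identical product by conditioning on $N$ and recognizing $P(\vec c\mid N)$ as the Ewens sampling formula with parameter $\theta=\alpha/r$, so that $C\mid N=n$ has PGF $\prod_{j=0}^{n-1}(\theta t+j)/(\theta+j)$. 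This is a genuinely nice structural observation the paper does not make explicit: it isolates \emph{why} the conditional law of $C$ is independent of the carrying capacity $\mu(\cdot)$, and it immediately recovers Equations~(\ref{eq:mod3 EC})--(\ref{eq:mod3 VC}) as Ewens-cycle moments.

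Your de-conditioning step is also more carefully argued than the paper's. The paper writes ``Since $N/\Omega$ converges in probability to $n^{*}$'' and then replaces $N$ by $n^{*}\Omega$ inside the sum without comment; you make the justification explicit by using the uniform bound $0\le\phi_{\Omega}(n)\le 1$ (from $\xi<0$) together with the uniform Riemann-sum estimate on $\{|n/\Omega-n^{*}|\le\epsilon\}$, which is exactly what is needed to pass the limit through the expectation. So: same method at heart, but your route through Ewens adds conceptual clarity, and your handling of the final limit is tighter.
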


\begin{proof}
One has to prove that
\[
\E{\exp\left[\xi C/\Omega\right]}=
\frac{Z_{\alpha e^{\xi/\Omega},r,\mu}}{Z_{\alpha,r,\mu}}
\xrightarrow[\Omega\to\infty]{}\left(\frac{1}{1-r/\widetilde{\mu}(n^{*})}\right)^
{\xi\widetilde{\alpha}/r}=\left(1+\frac{r}{\widetilde{\alpha}}
n^{*}\right)^{\xi\widetilde{\alpha}/r}
\]
with
\[
Z_{\alpha,r,\mu}=
\sum_{n'=0}^{\infty}\exp\left(\sum_{k=1}^{n'}
\log\frac{\widetilde{\alpha}+r(k-1)/\Omega}{k/\Omega\,\widetilde{\mu}(k/\Omega)}\right).
\]
First note that
\begin{align*}
\E{\exp\left[\xi C/\Omega\right]} & =
\frac{1}{Z_{\alpha,r,\mu}}\sum_{n=0}^{\infty}
\exp\left(\sum_{k=1}^{n}
\log\frac{\widetilde{\alpha}e^{\xi/\Omega}+r (k-1)/\Omega}{k/\Omega\,\widetilde{\mu}(k/\Omega)}\right)\\
 & =\sum_{n=0}^{\infty}\P{N=n}\exp\left(\sum_{k=1}^{n}
\log\frac{\widetilde{\alpha}e^{\xi/\Omega}+r(k-1)/\Omega}{\widetilde{\alpha}+r (k-1)/\Omega}\right)\\
 & =\E{\exp\left(\sum_{k=1}^{N}\log\frac{\widetilde{\alpha}e^{\xi/\Omega}+r (k-1)/\Omega}{\widetilde{\alpha}+r (k-1)/\Omega}\right)}
\end{align*}
Since $N/\Omega$ converges in probability to $n^{*}$,
\[
\E{\exp\left[\xi C/\Omega\right]}
\simlim{\Omega\to\infty}
\exp\left(\sum_{k=1}^{n^{*}\Omega}
\log\frac{\widetilde{\alpha}e^{\xi/\Omega}+r (k-1)/\Omega}{\widetilde{\alpha}+r(k-1)/\Omega}\right).
\]
%
%\[
%x\longmapsto\log\frac{\widetilde{\alpha}e^{\xi/\Omega}+r \left(x-1\right)/\Omega}{\widetilde{%\alpha}+r \left(x-1\right)/\Omega}
%\]
%
\deleted{As}\added{Since} the function $\log\left(\frac{\widetilde{\alpha}e^{\xi/\Omega}+r
  \left(x-1\right)/\Omega}{\widetilde{\alpha}+r
  \left(x-1\right)/\Omega}\right)$ is decreasing in $x$, 
we can bound the sum with its lower and upper integral bounds
\begin{gather*}
\int_{1}^{n^{*}\Omega+1}\log
\frac{\widetilde{\alpha}e^{\xi/\Omega}+(x-1)r/\Omega}
{\widetilde{\alpha}+(x-1)r/\Omega}\,{\rm d}x\leq\sum_{k=1}^{n^{*}\Omega}\log
\frac{\widetilde{\alpha}e^{\xi/\Omega}+(k-1)r/\Omega}{\widetilde{\alpha}+(k-1)r/\Omega}\leq
\frac{\xi}{\Omega}+\int_{1}^{n^{*}\Omega}\log
\frac{\widetilde{\alpha}e^{\xi/\Omega}+(x-1)r/\Omega}
{\widetilde{\alpha}+(x-1)r/\Omega}\,{\rm d}x.
\end{gather*}
After rescaling $y = (x-1)/\Omega$, the bounds can be expressed as
\begin{gather*}
\Omega\int_{0}^{n^{*}+1/\Omega}\log
\frac{\widetilde{\alpha}e^{\xi/\Omega}+r y}{\widetilde{\alpha}+r y}\,{\rm d}y\leq
\sum_{k=1}^{n^{*}\Omega}\log\frac{\widetilde{\alpha}e^{\xi/\Omega}+ (k-1)r/\Omega}
{\widetilde{\alpha}+(k-1)r/\Omega}\leq\frac{\xi}{\Omega}+\Omega
\int_{0}^{n^{*}}\log\frac{\widetilde{\alpha}e^{\xi/\Omega}+r y}
{\widetilde{\alpha}+r y}\,{\rm d}y
\end{gather*}
Upon taking $\Omega \to \infty$ and expanding \added{the above
  expression, we find that} both bounds converge to
\[
\xi\int_{0}^{n^{*}}\frac{\widetilde{\alpha}}{\widetilde{\alpha}+r y}\,{\rm d}y.
\]
Thus, we find

\begin{align*}
\E{\exp\left[\xi C/\Omega\right]} & \simlim{\Omega\to\infty}
\exp\left(\xi\widetilde{\alpha}\,\int_{0}^{n^{*}}
\frac{1}{\widetilde{\alpha}+r u}\,{\rm d}u\right) = \exp\left(\xi\frac{\widetilde{\alpha}}{r}\,
\log\left[1+\frac{r}{\widetilde{\alpha}} n^{*}\right]\right).
\end{align*}
\end{proof}

\subsection{\label{subsec:S_mod3 Convergence ni}Convergence of $n_{i}$}
\begin{prop}
The marginal probability over each particle count $n_{i}$ 
converges according to 
\[
\P{n_{1}=k}\xrightarrow[\Omega\rightarrow\infty]{}\frac{1}{k}
\left(\frac{r}{\widetilde{\mu}(n^{*})}\right)^{k}\frac{-1}
{\log\left[1-r/\widetilde{\mu}(n^{*})\right]}.
\]
\end{prop}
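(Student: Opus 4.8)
The plan is to reduce the claim to the scaling limits of $c_{k}/\Omega$ and $C/\Omega$ established earlier (Theorem~\ref{thm:mod3_N} for $N/\Omega$, together with the convergences of $C/\Omega$ and $c_{k}/\Omega$ derived in the preceding subsections). The crucial first step is that, although in the BDICC model $n_{i}$ is \emph{not} independent of $C$, so the identity $\E{c_{k}}=\E C\,\P{n_{1}=k}$ used for the sBDI and BDIM models fails, the counts $(n_{i})_{i\le C}$ are still, conditionally on $C$, identically distributed by neutrality. Hence, with $c_{k}=\sum_{i=1}^{C}\ind{n_{i},k}$,
\[
\E{\frac{c_{k}}{C}\mid C}=\frac{1}{C}\sum_{i=1}^{C}\P{n_{i}=k\mid C}=\P{n_{1}=k\mid C},
\]
so that, taking expectations, $\P{n_{1}=k}=\E{c_{k}/C}$ on the event $C\ge1$. (Equivalently, $n_{1}$ is the count of a uniformly sampled extant species; the event $C=0$, i.e.\ $N=0$, carries vanishing mass as $\Omega\to\infty$ and is irrelevant to the limit.) This replaces the broken identity by one whose right-hand side is a ratio with known scaling limits.

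Next I would write $c_{k}/C=(c_{k}/\Omega)/(C/\Omega)$ and invoke the established convergences in distribution to the constants
\[
\frac{c_{k}}{\Omega}\xrightarrow[\Omega\to\infty]{{\cal D}}\ell_{k}\equiv\frac{\widetilde{\alpha}}{r}\,\frac{1}{k}\left(\frac{r}{\widetilde{\mu}(n^{*})}\right)^{k},\qquad\frac{C}{\Omega}\xrightarrow[\Omega\to\infty]{{\cal D}}L\equiv\frac{\widetilde{\alpha}}{r}\log\left[\frac{1}{1-r/\widetilde{\mu}(n^{*})}\right]>0.
\]
Convergence in distribution to a constant is convergence in probability, so by the usual algebra of limits in probability $c_{k}/C\to\ell_{k}/L$ in probability. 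Since $0\le c_{k}/C\le1$ on $\{C\ge1\}$, this family is bounded, hence uniformly integrable, and dominated convergence upgrades this to $\E{c_{k}/C}\to\ell_{k}/L$; restricting to $C\ge1$ is harmless because $\P{C\ge1}\to1$. Substituting the expressions for $\ell_{k}$ and $L$ yields
\[
\frac{\ell_{k}}{L}=\frac{1}{k}\left(\frac{r}{\widetilde{\mu}(n^{*})}\right)^{k}\frac{-1}{\log\left[1-r/\widetilde{\mu}(n^{*})\right]},
\]
which is exactly the asserted limit; if desired, $r/\widetilde{\mu}(n^{*})$ can be rewritten via the fixed-point relation $n^{*}\widetilde{\mu}(n^{*})=\widetilde{\alpha}+rn^{*}$.

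The main obstacle is conceptual and lies in the first step: recognizing that the clean identity $\E{c_{k}}=\E C\,\P{n_{1}=k}$ is unavailable in the interacting model and replacing it with the exchangeability identity $\P{n_{1}=k}=\E{c_{k}/C}$. Once that is secured, everything else is soft — the algebra of limits in probability plus the $[0,1]$-boundedness of $c_{k}/C$ — and rests only on convergences proved elsewhere in the Appendix. The only additional care needed is the innocuous bookkeeping around the $N=0$ event and how $n_{1}$ is defined there.
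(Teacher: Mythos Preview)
Your proposal is correct and follows essentially the same route as the paper: establish the exchangeability identity $\P{n_{1}=k}=\E{c_{k}/C}$, then write $c_{k}/C=(c_{k}/\Omega)/(C/\Omega)$ and use the previously proven scaling limits. If anything, your version is slightly more careful than the paper's, which invokes the continuous mapping theorem without explicitly justifying the passage from convergence in distribution to convergence of the expectation; your appeal to the $[0,1]$-boundedness of $c_{k}/C$ (hence uniform integrability) fills exactly that gap.
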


\begin{proof}
\noindent \deleted{We have that}\added{The} $n_{i}$ values are identically distributed,
so that for any $i,j\leq C$, \deleted{we have that}
\[
\text{for any }k\geq1,\quad\P{n_{i}=k}=\P{n_{j}=k}.
\]

\noindent We can then compute the expectation
\[
\E{\frac{c_{k}}{C}}=\E{\sum_{i=1}^{C}\frac{\E{\ind{n_{i},k}|C}}{C}}
=\E{\E{\ind{n_{1},k}|C}}=\P{n_{1}=k}.
\]
This expectation is over a product of two converging quantities:
\[
\E{\frac{c_{k}}{C}}=\E{\frac{c_{k}}{\Omega}\,\frac{\Omega}{C}}= P(n_{1}=k),
\]
where $c_{k}/\Omega$ and $C/\Omega$ converge in distribution to constants
%
%In order to consider the limit of $\E{c_{k}/\Omega/\left(\Omega/C\right)}$,
%one have to remark that the convergences of $C/\Omega$ and $c_{k}/\Omega$
%are toward constants. In that case, we have that
\[
\left(\frac{c_{k}}{\Omega},\frac{C}{\Omega}\right)
\xrightarrow[\Omega\rightarrow\infty]{{\cal D}}
\left(\frac{\widetilde{\alpha}}{r}
\log\left[\frac{1}{1-r/\widetilde{\mu}(n^{*})}\right],
\frac{\widetilde{\alpha}}{r}\,
\frac{1}{k}\left(\frac{r^{k}}{\widetilde{\mu}(n^{*})}\right)^{k}\right).
\]
We now apply the mapping theorem (see \citet[Chapter 5]{Billingsley2012}) to 
$\E{g\left(\frac{c_{k}}{\Omega},\frac{C}{\Omega}\right)}$ for
any continuous function $g$ to obtain
\[
P(n_{1}=k)\xrightarrow[\Omega\rightarrow\infty]{}
\frac{1}{k}\left(\frac{r}{\widetilde{\mu}(n^{*})}\right)^{k}
\frac{-1}{\log[1-r/\widetilde{\mu}(n^{*})]}.
\]
\end{proof}

\subsection{\label{subsec:S_mod3 BDICC on birth}Explicit breakdown of detailed balance 
in the BDICC-bis model with birth-mediated carrying capacity}

Here, we consider a Birth-Death-Immigration model with carrying
capacity but contrary to the BDICC model presented in
Figure~\ref{Fig1}(c), the carrying capacity is on the birth rate
$r(N)$, and the death rate $\mu$ is a constant. By analogy with the
BDICC analysis, we \added{find}\deleted{come up with} a sufficient condition for 
a steady state \added{to exist}
\[
\lim_{N\rightarrow\infty}r(N)<\mu.
\]
The distribution $P(N)$ of the total number of individuals is
given by
\[
P(N) =\begin{cases}
\displaystyle \frac{1}{Z_{\alpha,\mu}},\quad N=0,\\
\displaystyle \frac{1}{Z_{\alpha,\mu}}
\frac{1}{N!}{\displaystyle \prod_{k=0}^{N-1}}
\frac{\alpha+r(k) k}{\mu}, \quad N\geq 1,
\end{cases}
\]
where 
\[
Z_{\alpha,\mu}=1+\sum_{N=1}^{\infty}\frac{1}{N!}
{\displaystyle \prod_{k=0}^{N-1}}\frac{\alpha+r(k) k}{\mu}.
\]

All possible transitions of the BDICC-bis model are given by
\begin{align*}
 &  & (c_{1},c_{2},\ldots) & \xrightarrow{\alpha}(c_{1}+1,c_{2},\ldots) &  & \text{Immigration}\\
\text{for }k\geq1\quad &  & (c_{1},\ldots,c_{k},c_{k+1},\ldots) & \xrightarrow{r(N)kc_{k}}(c_{1},\ldots,c_{k}-1,c_{k+1}+1,\ldots) &  & \text{Birth}\\
\begin{aligned}\text{for }k\geq2\quad\\
\\
\end{aligned}
 &  & \begin{aligned}(c_{1},\ldots,c_{k-1},c_{k},\ldots)\\
(c_{1},c_{2},\ldots)
\end{aligned}
 & \begin{aligned} & \xrightarrow{\mu kc_{k}}(c_{1},\ldots,c_{k-1}+1,c_{k}-1,\ldots)\\
 & \xrightarrow{\mu c_{1}}(c_{1}-1,c_{2},\ldots)
\end{aligned}
 & \left.\vphantom{\begin{array}{c}
(c_{1})\\
(c_{1})
\end{array}}\right\} \, & \text{Death}
\end{align*}
If we assume detailed balance between pairs of states \deleted{subset
  of for which the maximum clone size is \protect{$K$
    ($c_{k>K}=0$)}}\added{with maximum clone size \protect{$K$}},
\deleted{the we recurse the equations}\added{we can recurse the relations}
\[
\mu c_{k}kP(c_{1},\ldots,c_{k-1},c_{k},\ldots)=
r(N)(k-1)\left(c_{k-1}+1\right) P(c_{1},\ldots,c_{k-1}+1,c_{k}-1,\ldots)
\]
for $2\leq k\leq K$ down to the states
\[
\mu c_{1} P(c_{1},\vec{0}) =\alpha P(c_{1}-1,\vec{0})
\]
to give
\begin{equation}
P(\vec{c})=\frac{1}{Z_{\alpha,\mu}}\frac{\alpha^{C}}{\mu^{N}}
\frac{\prod_{n=1}^{N-C}r(N-n)}{\prod_{i=1}^{\infty}i^{c_{i}}c_{i}!}.
\label{PC-bis}
\end{equation}
Using these chosen pairs of states to impose detailed balance, we find
a unique distribution $P(\vec{c})$.  However, \deleted{when} this
\added{form of} $P(\vec{c})$ \added{will not obey detailed balance
  between all pairs of states.}  \deleted{is plugged back into states
  that were not used to derive it, for example $P(c_{1}\geq 1,
  c_{2}\geq 1,\ldots)$, detailed balance is violated.}  For example,
balancing the transitions
\[
(c_{1},c_{2},\ldots)\xrightleftharpoons[\alpha]{\mu c_{1}}(c_{1}-1,c_{2},\ldots)
\]
would also require
\[
\mu c_{1}P(c_{1},c_{2}\geq 1,\ldots) = \alpha P(c_{1}-1,c_{2}\geq 1,\ldots).
\]
However, using the $P(\vec{c})$ from Equation~(\ref{PC-bis}), we find
\[
\frac{\mu c_{1}P(c_{1},c_{2}\geq 1,\ldots)}{\alpha P(c_{1}-1,c_{2}\geq 1,\ldots)}=
\frac{r(C-1)}{r(N-1)} \neq 1
\]
because generally, $N\neq C$. Remarkably, the analogous exercise for
the BDICC model where $\mu = \mu(N)$ does satisfy detailed balance
between all pairs of states and the $P(\vec{c})$ we derived for the
BDICC model, Equation~(\ref{PvecCC}), is exact.

\bibliographystyle{plainnat}
\bibliography{bibliography}

\end{document}